\renewcommand{\Re}{\mathbb{R}}
\newcommand{\Id}{\mathtt{I}}
\newcommand{\trace}[1]{\ensuremath{\mathrm{tr}\left(#1\right)}}
\newcommand{\diag}[1]{\ensuremath{\mathrm{diag}\left(#1\right)}}
\newcommand{\rank}[1]{\ensuremath{\mathrm{rank}\left(#1\right)}}
\newcommand{\norm}[1]{\ensuremath{\left\| #1 \right\|}}
\newtheorem{prob}{Problem}[section]
\newtheorem{theo}{Theorem}[section]
\newtheorem{defn}{Definition}[section]
\newtheorem{lemm}{Lemma}[section]
\begin{document}
\title{Matrix Completion with Deterministic Sampling: Theories and Methods}
\author{Guangcan Liu,~\IEEEmembership{Senior Member,~IEEE}, Qingshan Liu,~\IEEEmembership{Senior Member,~IEEE},\\ Xiao-Tong Yuan,~\IEEEmembership{Member,~IEEE}, and Meng Wang
\IEEEcompsocitemizethanks{\IEEEcompsocthanksitem G. Liu is with B-DAT and CICAEET, School of Automation, Nanjing University of
Information Science and Technology, Nanjing, China 210044. Email: gcliu@nuist.edu.cn.\protect
\IEEEcompsocthanksitem Q. Liu is with B-DAT, School of Automation, Nanjing University of
Information Science and Technology, Nanjing, China 210044. Email: qsliu@nuist.edu.cn.\protect
\IEEEcompsocthanksitem X-T. Yuan is with B-DAT and CICAEET, School of Automation, Nanjing University of
Information Science and Technology, Nanjing, China 210044. Email: xtyuan@nuist.edu.cn.\protect
\IEEEcompsocthanksitem M. Wang is with School of Computer Science and Information Engineering, Hefei University of Technology, 193 Tunxi Road, Hefei, Anhui, China 230009. E-mail: eric.mengwang@gmail.com.\protect
}
\thanks{}}

\markboth{IEEE Transactions on Pattern Analysis and Machine Intelligence,~Vol.~XX,~No.~XX, 2019}%
{Shell \MakeLowercase{\textit{et al.}}: Bare Demo of IEEEtran.cls for Computer Society Journals}

\maketitle

\begin{abstract}
In some significant applications such as data forecasting, the locations of missing entries cannot obey any non-degenerate distributions, questioning the validity of the prevalent assumption that the missing data is randomly chosen according to some probabilistic model. To break through the limits of random sampling, we explore in this paper the problem of real-valued matrix completion under the setup of deterministic sampling. We propose two conditions, isomeric condition and relative well-conditionedness, for guaranteeing an \emph{arbitrary} matrix to be recoverable from a sampling of the matrix entries. It is provable that the proposed conditions are weaker than the assumption of uniform sampling and, most importantly, it is also provable that the isomeric condition is \emph{necessary} for the completions of \emph{any} partial matrices to be identifiable. Equipped with these new tools, we prove a collection of theorems for missing data recovery as well as convex/nonconvex matrix completion. Among other things, we study in detail a Schatten quasi-norm induced method termed isomeric dictionary pursuit (IsoDP), and we show that IsoDP exhibits some distinct behaviors absent in the traditional bilinear programs.
\end{abstract}

\begin{IEEEkeywords}
matrix completion, deterministic sampling, identifiability, isomeric condition, relative well-conditionedness, Schatten quasi-norm, bilinear programming.
\end{IEEEkeywords}
\IEEEdisplaynotcompsoctitleabstractindextext
\IEEEpeerreviewmaketitle

\section{Introduction}
\IEEEPARstart{I}{n} the presence of missing data, the representativeness of data samples may be reduced significantly and the inference about data is therefore distorted seriously. Given this pressing circumstance, it is crucially important to devise computational methods that can restore unseen data from available observations. As the data in practice is often organized in matrix form, it is considerably significant to study the problem of \emph{matrix completion}~\cite{tao:2009:mc,CandesPIEEE,Mohan:2010:isit,rahul:jlmr:2010,akshay:2013:nips,william:2014:nips,raghunandan:jmlr:2010,raghunandan:tit:2010,troy:2013:nips}, which aims to fill in the missing entries of a partially observed matrix.
\begin{prob}[Matrix Completion]\label{pb:mc}
Denote by $[\cdot]_{ij}$ the $(i,j)$th entry of a matrix. Let $L_0\in\Re^{m\times{}n}$ be an unknown matrix of interest. The rank of $L_0$ is unknown either. Given a sampling of the entries in $L_0$ and a 2D sampling set $\Omega\subseteq{}\{1,\cdots,m\} \times\{1,$ $\cdots,n\}$ consisting of the locations of observed entries, i.e., given
\begin{eqnarray*}
\Omega\quad\textrm{and}\quad\{[L_0]_{ij} |(i,j)\in\Omega\},
\end{eqnarray*}
can we identify the target $L_0$? If so, under which conditions?
\end{prob}
\begin{figure}
\begin{center}
\includegraphics[width=0.48\textwidth]{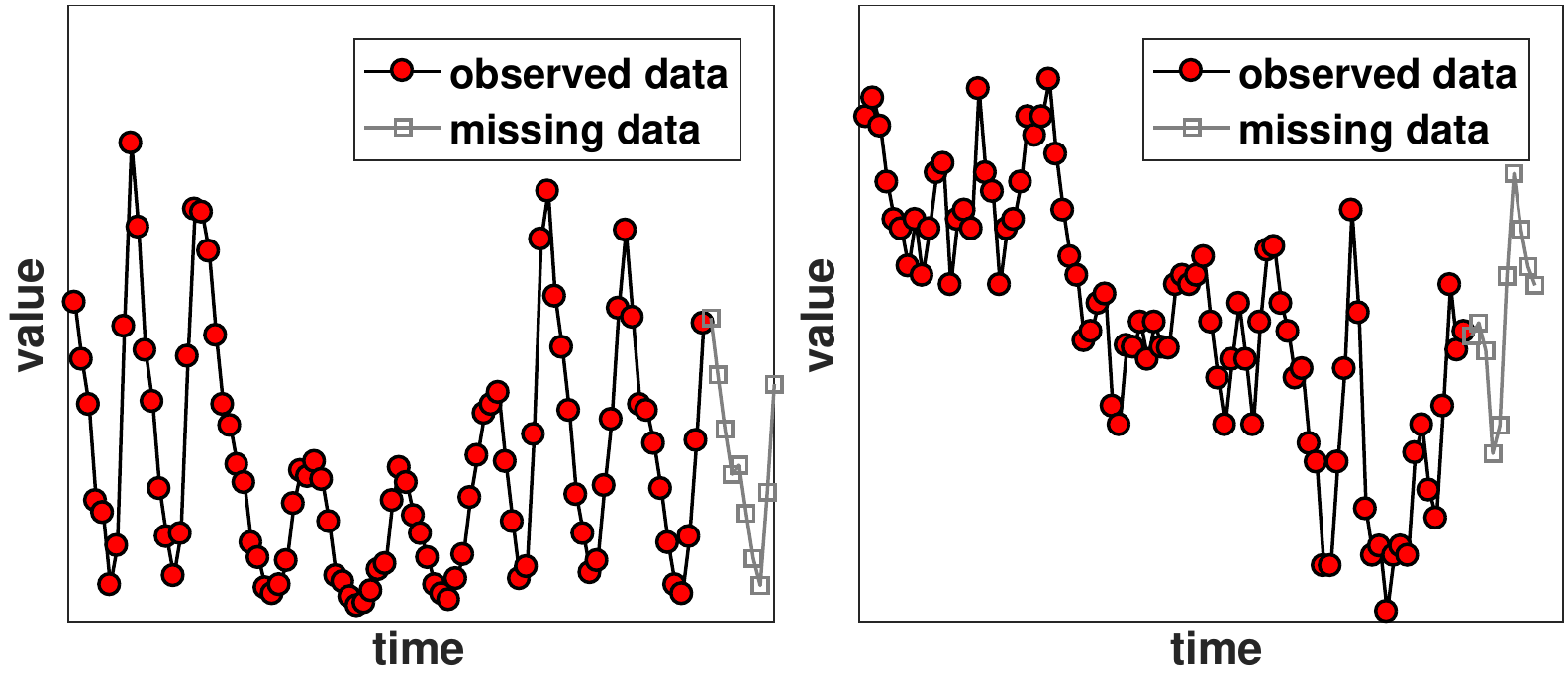}\vspace{-0.15in}
\caption{The unseen future values of time series are essentially a special type of missing data.}\label{fig:miss}\vspace{-0.25in}
\end{center}
\end{figure}

In general cases, matrix completion is an ill-posed problem, as the missing entries can be of arbitrary values. Thus, some assumptions are necessary for studying Problem~\ref{pb:mc}. Cand{\`e}s and Recht~\cite{Candes:2009:math} proved that the target $L_0$, with high probability, is exactly restored by convex optimization, provided that $L_0$ is \emph{low rank} and \emph{incoherent} and the set $\Omega$ of locations corresponding to the observed entries is a set sampled \emph{uniformly at random} (i.e., uniform sampling). This pioneering work provides people several useful tools to investigate matrix completion and many other related problems. Its assumptions, including low-rankness, incoherence and uniform sampling, are now standard and widely used in the literatures, e.g.,~\cite{Candes:2009:JournalACM,xu:2012:tit,sun:2016:tit,tpami_2013_lrr,Jain:2014:nips,liu:tpami:2016,zhao:nips:2015,ge:nips:2016}. However, the assumption of uniform sampling is often invalid in practice:
\begin{itemize}
\item[$\bullet$] A ubiquitous type of missing data is the unseen future data, e.g., the next few values of a time series as shown in Figure~\ref{fig:miss}. It is certain that the (missing) future data is not randomly selected, not even being sampled uniformly at random. In this case, as will be shown in Section~\ref{sec:exp:rcn}, the theories built upon uniform sampling are no longer applicable.
\item[$\bullet$] Even when the underlying regime of the missing data pattern is a probabilistic model, the reasons for different observations being missing could be correlated rather than independent. In fact, most real-world datasets cannot satisfy the uniform sampling assumption, as pointed out by~\cite{ruslan:2010:nips,Meka:2009:MCP}.
\end{itemize}

There has been sparse research in the direction of deterministic or nonuniform sampling, e.g.,~\cite{Kiraly:2012:icml,Kiraly:2015:jmlr,Negahban:2012:JMLR,ruslan:2010:nips,Meka:2009:MCP,JMLR:v16:chen15b,daniel:2016:jstsp}. For example, Negahban and Wainwright~\cite{Negahban:2012:JMLR} studied the case of weighted entrywise sampling, which is more general than the setup of uniform sampling but still a special form of random sampling. In particular, Kir\'{a}ly et al.~\cite{Kiraly:2012:icml,Kiraly:2015:jmlr} treated matrix completion as an algebraic problem and proposed deterministic conditions to decide whether a particular entry of a \emph{generic} matrix can be restored. Pimentel{-}Alarc{\'{o}}n et al.~\cite{daniel:2016:jstsp} built deterministic sampling conditions for ensuring that, \emph{almost surely}, there are only finitely many matrices that agree with the observed entries. However, strictly speaking, those conditions ensure only the recoverability of a special kind of matrices, but they cannot guarantee the identifiability of an arbitrary $L_0$ for sure. This gap is indeed striking, as the data matrices arising from modern applications are often of complicate structures and unnecessary to be generic. Moreover, the sampling conditions given in~\cite{Kiraly:2012:icml,Kiraly:2015:jmlr,daniel:2016:jstsp} are not so interpretable and thus not easy to use while applying to the other related problems such as \emph{matrix recovery} (which is matrix completion with $\Omega$ being unknown)~\cite{Candes:2009:JournalACM}.

To break through the limits of random sampling, we propose in this work two deterministic conditions, \emph{isomeric condition}~\cite{liu:nips:2017} and \emph{relative well-conditionedness}, for guaranteeing an \emph{arbitrary} matrix to be recoverable from a sampling of its entries. The isomeric condition is a mixed concept that combines together the rank and coherence of $L_0$ with the locations and amount of the observed entries. In general, isomerism (noun of isomeric) ensures that the \emph{sampled submatrices} (see Section \ref{sec:notation}) are not \emph{rank deficient}\footnote{In this paper, rank deficiency means that a submatrix does not have the largest possible rank. Specifically, suppose that $M'$ is a submatrix of some matrix $M$, then $M'$ is rank deficient iff (i.e., if and only if) $\rank{M'}<\rank{M}$. Note here that a submatrix is rank deficient does not necessarily mean that the submatrix does not have full rank, and a submatrix of full rank could be rank deficient.}. Remarkably, it is provable that isomerism is \emph{necessary} for the identifiability of $L_0$: Whenever the isomeric condition is violated, there exist infinity many matrices that can fit the observed entries not worse than $L_0$ does. Hence, logically speaking, the conditions given in~\cite{Kiraly:2012:icml,Kiraly:2015:jmlr,daniel:2016:jstsp} should suffice to ensure isomerism. While necessary, unfortunately isomerism does not suffice to guarantee the identifiability of $L_0$ in a deterministic fashion. This is because isomerism does not exclude the unidentifiable cases where the sampled submatrices are severely ill-conditioned. To compensate this weakness, we further propose the so-called \emph{relative well-conditionedness}, which encourages the smallest singular values of the sampled submatrices to be away from 0.

Equipped with these new tools, isomerism and relative well-conditionedness, we prove a set of theorems pertaining to \emph{missing data recovery}~\cite{Zhang06} and matrix completion. In particular, we prove that the exact solutions that identify the target matrix $L_0$ are strict local minima to the commonly used bilinear programs. Although theoretically sound, the classic bilinear programs suffer from a weakness that the rank of $L_0$ has to be known. To fix this flaw, we further consider a method termed \emph{isomeric dictionary pursuit} (IsoDP), the formula of which can be derived from Schatten quasi-norm minimization~\cite{rahul:jlmr:2010}, and we show that IsoDP is superior to the traditional bilinear programs. In summary, the main contribution of this work is to establish deterministic sampling conditions for ensuring the success in completing arbitrary matrices from a subset of the matrix entries, producing some theoretical results useful for understanding the completion regimes of arbitrary missing data patterns.
\section{Summary of Main Notations}\label{sec:notation}
Capital and lowercase letters are used to represent (real-valued) matrices and vectors, respectively, except that some lowercase letters, such as $i,j,k,m,n,l,p,q,r,s$ and $t$, are used to denote integers. For a matrix $M$, $[M]_{ij}$ is the $(i,j)$th entry of $M$, $[M]_{i,:}$ is its $i$th row, and $[M]_{:,j}$ is its $j$th column. Let $\omega_1=\{i_1,i_2,\cdots,i_k\}$ and $\omega_2=\{j_1,j_2,\cdots,j_s\}$ be two 1D sampling sets. Then $[M]_{\omega_1,:}$ denotes the submatrix of $M$ obtained by selecting the rows with indices $i_1,i_2,\cdots,i_k$, $[M]_{:,\omega_2}$ is the submatrix constructed by choosing the columns at $j_1,j_2,\cdots,j_s$, and similarly for $[M]_{\omega_1,\omega_2}$. For a 2D sampling set $\Omega\subseteq{}\{1,\cdots,m\} \times\{1,\cdots,n\}$, we imagine it as a sparse matrix and define its ``rows'', ``columns'' and ``transpose'' as follows: the $i$th row $\Omega_i = \{j_1 | (i_1,j_1)\in\Omega, i_1 = i\}$, the $j$th column $\Omega^j = \{i_1 | (i_1,j_1)\in\Omega, j_1 = j\}$, and the transpose $\Omega^T = \{(j_1,i_1) | (i_1,j_1)\in\Omega\}$. These notations are important for understanding the proposed conditions. For the ease of presentation, we shall call $[M]_{\omega,:}$ as a \emph{sampled submatrix} of $M$ (see Figure~\ref{fig:sub}), where $\omega$ is a 1D sampling set.
\begin{figure}
\begin{center}
\includegraphics[width=0.4\textwidth]{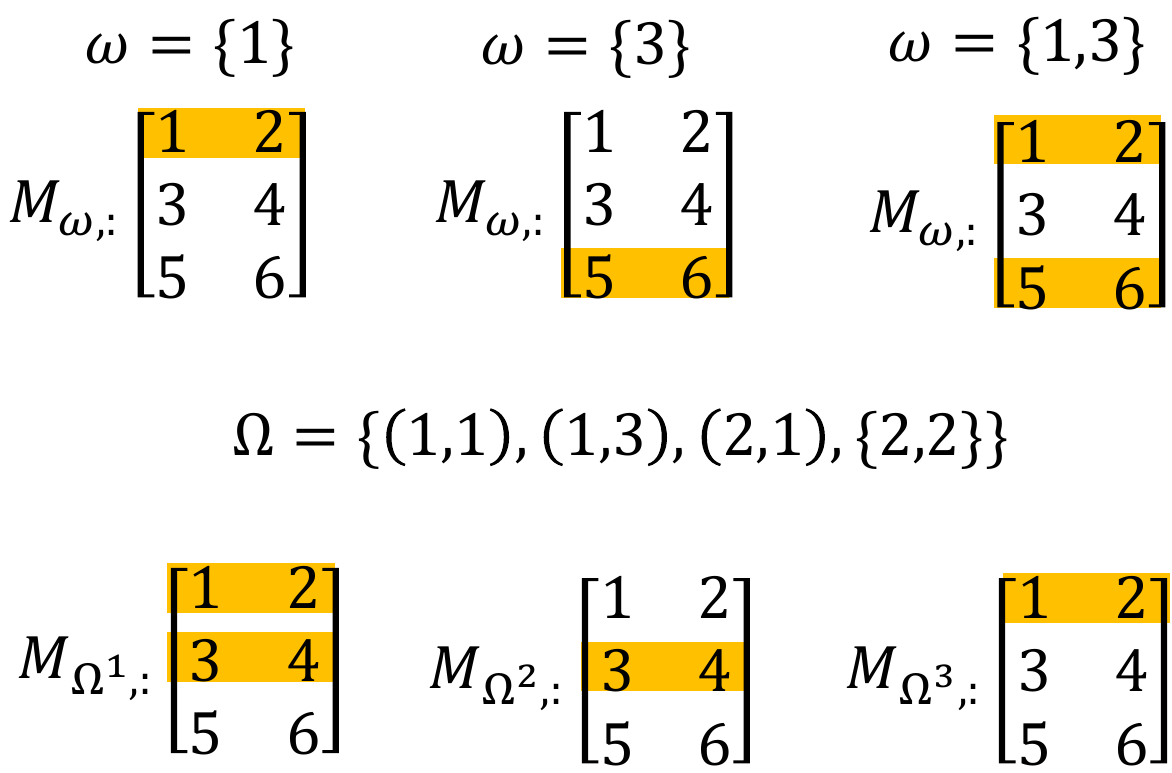}\vspace{-0.15in}
\caption{Illustrations of the sampled submatrices.}\label{fig:sub}\vspace{-0.25in}
\end{center}
\end{figure}

Three types of matrix norms are used in this paper: 1) the operator norm or 2-norm denoted by $\|M\|$, 2) the Frobenius norm denoted by $\|M\|_F$ and 3) the nuclear norm denoted by $\|M\|_*$. The only used vector norm is the $\ell_2$ norm, which is denoted by $\|\cdot\|_2$. Particularly, the symbol $|\cdot|$ is reserved for the cardinality of a set.

The special symbol $(\cdot)^+$ is reserved to denote the Moore-Penrose pseudo-inverse of a matrix. More precisely, for a matrix $M$ with SVD\footnote{In this paper, SVD always refers to skinny SVD. For a rank-$r$ matrix $M\in\mathbb{R}^{m\times{}n}$, its SVD is of the form $U_M\Sigma_MV_M^T$, where $U_M\in\Re^{m\times{}r},\Sigma_M\in\Re^{r\times{}r}$ and $V_M\in\Re^{n\times{}r}$.} $M=U_M\Sigma_MV_M^T$, its pseudo-inverse is given by $M^+=V_M\Sigma_M^{-1}U_M^T$. For convenience, we adopt the conventions of using $\mathrm{span}\{M\}$ to denote the linear space spanned by the columns of a matrix $M$, using $y\in\mathrm{span}\{M\}$ to denote that a vector $y$ belongs to the space $\mathrm{span}\{M\}$, and using $Y\in\mathrm{span}\{M\}$ to denote that all the column vectors of a matrix $Y$ belong to $\mathrm{span}\{M\}$.
\section{Identifiability Conditions}\label{sec:setting}
In this section, we introduce the so-called \emph{isomeric condition}~\cite{liu:nips:2017} and \emph{relative well-conditionedness}.
\subsection{Isomeric Condition}\label{sec:setting:iso}
For the ease of understanding, we shall begin with a concept called \emph{$k$-isomerism} (or \emph{$k$-isomeric} in adjective form), which can be regarded as an extension of low-rankness.
\begin{defn}[$k$-isomeric]\label{def:iso:k}
A matrix $M\in\Re^{m\times{}l}$ is called $k$-isomeric iff any $k$ rows of $M$ can linearly represent all rows in $M$. That is,
\begin{align*}
&\rank{[M]_{\omega,:}} = \rank{M}, \forall{}\omega\subseteq\{1,\cdots,m\}, |\omega| = k,
\end{align*}
where $|\cdot|$ is the cardinality of a sampling set and $[M]_{\omega,:}\in\mathbb{R}^{|\omega|\times{}l}$ is called a ``sampled submatrix'' of $M$.
\end{defn}
In short, a matrix $M$ is $k$-isomeric means that the sampled submatrix $[M]_{\omega,:}$ (with $|\omega|=k$) is not rank deficient\footnote{Here, the largest possible rank is $\rank{M}$. So $\rank{[M]_{\omega,:}} = \rank{M}$ gives that the submatrix $[M]_{\omega,:}$ is not rank deficient.}. According to the above definition, $k$-isomerism has a nice property; that is, suppose $M$ is $k_1$-isomeric, then $M$ is also $k_2$-isomeric for any $k_2\geq{}k_1$. So, to verify whether a matrix $M$ is $k$-isomeric with unknown $k$, one just needs to find the smallest $\bar{k}$ such that $M$ is $\bar{k}$-isomeric.

Generally, $k$-isomerism is somewhat similar to \emph{Spark}~\cite{Donoho:spark:2003}, which defines the smallest linearly dependent subset of the rows of a matrix. For a matrix $M$ to be $k$-isomeric, it is necessary that $\rank{M}\leq{}k$, not sufficient. In fact, $k$-isomerism is also somehow related to the concept of \emph{coherence}~\cite{Candes:2009:math,liu:tsp:2016}. For a rank-$r$ matrix $M\in\mathbb{R}^{m\times{}n}$ with SVD $U_M\Sigma_MV_M^T$, its coherence is denoted as $\mu(M)$ and given by
\begin{align*}
\mu(M)= \max(\max_{1\leq{}i\leq{}m}\frac{m}{r}\|[U_M]_{i,:}\|_F^2, \max_{1\leq{}j\leq{}n}\frac{n}{r}\|[V_M]_{j,:}\|_F^2).
\end{align*}
When the coherence of a matrix $M\in\Re^{m\times{}l}$ is not too high, $M$ could be $k$-isomeric with a small $k$, e.g., $k=\rank{M}$. Whenever the coherence of $M$ is very high, one may need a large $k$ to satisfy the $k$-isomeric property. For example, consider an extreme case where $M$ is a rank-1 matrix with one row being 1 and everywhere else being 0. In this case, we need $k=m$ to ensure that $M$ is $k$-isomeric. However, the connection between isomerism and coherence is not indestructible. A counterexample is the Hadamard matrix with $2^m$ rows and 2 columns. In this case, the matrix has an optimal coherence of 1, but the matrix is not $k$-isomeric for any $k\leq{}2^{m-1}$.

While Definition~\ref{def:iso:k} involves all 1D sampling sets of cardinality $k$, we often need the isomeric property to be associated with a certain 2D sampling set $\Omega$. To this end, we define below a concept called \emph{$\Omega$-isomerism} (or \emph{$\Omega$-isomeric}).
\begin{defn}[$\Omega$-isomeric]\label{def:iso:omg}
Let $M\in\Re^{m\times{}l}$ and $\Omega\subseteq\{1,\cdots,$ $m\}\times\{1,\cdots,n\}$. Suppose that $\Omega^j\neq\emptyset$ (empty set), $\forall{}1\leq{}j\leq{}n$. Then the matrix $M$ is called $\Omega$-isomeric iff
\begin{align*}
&\rank{[M]_{\Omega^j,:}} = \rank{M}, \forall{}j = 1,\cdots,n.
\end{align*}
Note here that $\Omega^j$ (i.e., $j$th column of $\Omega$) is a 1D sampling set and $l\neq{}n$ is allowed.
\end{defn}
Similar to $k$-isomerism, $\Omega$-isomerism also assumes that the sampled submatrices, $\{[M]_{\Omega^j,:}\}_{j=1}^n$, are not rank deficient. The main difference is that $\Omega$-isomerism requires the rank of $M$ to be preserved by the submatrices sampled according to a \emph{specific} sampling set $\Omega$, and $k$-isomerism assumes that \emph{every} submatrix consisting of $k$ rows of $M$ has the same rank as $M$. Hence, $\Omega$-isomerism is less strict than $k$-isomerism. More precisely, provided that $|\Omega^j|\geq{}k,\forall{}1\leq{}j\leq{}n$, a matrix $M$ is $k$-isomeric ensures that $M$ is $\Omega$-isomeric as well, but not vice versa. In the extreme case where $M$ is nonzero at only one row, interestingly, $M$ can be $\Omega$-isomeric as long as the locations of the nonzero entries are included in $\Omega$. For example, the following rank-1 matrix $M$ is not 1-isomeric but still $\Omega$-isomeric for some $\Omega$ with $|\Omega^j|=1,\forall{}1\leq{}j\leq{}n$:
\begin{align*}
\Omega = \{(1,1),(1,2), (1,3)\} \textrm{ and } M =\left[\begin{array}{cc}
1 &1\\
0&0\\
0&0
\end{array}\right],
\end{align*}
where it is configured that $m=n=3$ and $l=2$.

With the notation of $\Omega^T = \{(j_1,i_1) | (i_1,j_1)\in\Omega\}$, the isomeric property can be also defined on the column vectors of a matrix, as shown in the following definition.
\begin{defn}[$\Omega/\Omega^T$-isomeric]\label{def:iso:omgt}
Let $M\in\Re^{m\times{}n}$ and $\Omega\subseteq\{1,\cdots,m\}\times\{1,\cdots,n\}$. Suppose $\Omega_i\neq\emptyset$ and $\Omega^j\neq\emptyset$, $\forall{}i,j$. Then the matrix $M$ is called $\Omega/\Omega^T$-isomeric iff $M$ is $\Omega$-isomeric and $M^T$ is $\Omega^T$-isomeric as well.
\end{defn}
To solve Problem~\ref{pb:mc} without the assumption of missing at random, as will be shown later, it is necessary to assume that $L_0$ is $\Omega/\Omega^T$-isomeric. This condition has excluded the unidentifiable cases where any rows or columns of $L_0$ are wholly missing. Moreover, $\Omega/\Omega^T$-isomerism has partially considered the cases where $L_0$ is of high coherence: For the extreme case where $L_0$ is 1 at only one entry and 0 everywhere else, $L_0$ cannot be $\Omega/\Omega^T$-isomeric unless the index of the nonzero element is included in $\Omega$. In general, there are numerous reasons for the target matrix $L_0$ to be isomeric. For example, the standard assumptions of low-rankness, incoherence and uniform sampling are indeed sufficient to ensure isomerism, not necessary.
\begin{theo}\label{thm:iso}
Let $L_0\in\Re^{m\times{}n}$ and $\Omega\subseteq\{1,\cdots,m\}\times\{1,\cdots,$ $n\}$. Denote $n_1 = \max(m,n)$, $n_2=\min(m,n)$, $\mu_0=\mu(L_0)$ and $r_0=\rank{L_0}$. Suppose that $\Omega$ is a set sampled uniformly at random, namely $\mathrm{Pr}((i,j)\in\Omega)=\rho_0$ and $\mathrm{Pr}((i,j)\notin\Omega)=1-\rho_0$. If $\rho_0>c\mu_0r_0(\log{n_1})/n_2$ for some numerical constant $c$ then, with probability at least $1-n_1^{-10}$, $L_0$ is $\Omega/\Omega^T$-isomeric.
\end{theo}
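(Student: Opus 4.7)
My plan is to reduce $\Omega/\Omega^T$-isomerism to a collection of concentration statements about sampled submatrices of the singular vectors of $L_0$, and then apply a matrix Chernoff bound together with a union bound.

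First, let $L_0 = U\Sigma V^T$ be the SVD, where $U\in\Re^{m\times r_0}$ and $V\in\Re^{n\times r_0}$ have orthonormal columns, and $\Sigma\in\Re^{r_0\times r_0}$ is nonsingular. For any 1D sampling set $\omega\subseteq\{1,\dots,m\}$ we have $[L_0]_{\omega,:}=[U]_{\omega,:}\Sigma V^T$, so $\rank{[L_0]_{\omega,:}}=\rank{[U]_{\omega,:}}$; in particular $\rank{[L_0]_{\omega,:}}=r_0$ if and only if $[U]_{\omega,:}$ has full column rank $r_0$, i.e.\ $[U]_{\omega,:}^T[U]_{\omega,:}\succ 0$. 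The symmetric statement holds for $V$ and column samplings $\Omega_i\subseteq\{1,\dots,n\}$. Hence establishing $\Omega/\Omega^T$-isomerism reduces to showing that $[U]_{\Omega^j,:}^T[U]_{\Omega^j,:}$ and $[V]_{\Omega_i,:}^T[V]_{\Omega_i,:}$ are nonsingular for every $j=1,\dots,n$ and every $i=1,\dots,m$.

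Second, under uniform Bernoulli sampling with rate $\rho_0$, for each fixed $j$ the set $\Omega^j$ is obtained by independently keeping each index $i\in\{1,\dots,m\}$ with probability $\rho_0$. Let $\delta_i^{(j)}$ be the indicator that $i\in\Omega^j$ and let $u_i^T=[U]_{i,:}$. Then
\begin{equation*}
[U]_{\Omega^j,:}^T[U]_{\Omega^j,:} \;=\; \sum_{i=1}^{m}\delta_i^{(j)}\,u_iu_i^T,\qquad \mathbb{E}\Bigl[\sum_i\delta_i^{(j)}u_iu_i^T\Bigr]=\rho_0\,\Id_{r_0}.
\end{equation*}
Each summand satisfies $\|u_iu_i^T\|=\|u_i\|_2^2\le \mu_0 r_0/m$ by the coherence bound $\mu(L_0)=\mu_0$. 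Applying the matrix Chernoff inequality (e.g.\ Tropp's lower-tail bound) with deviation parameter $1/2$ yields
\begin{equation*}
\mathrm{Pr}\!\Bigl(\lambda_{\min}\bigl([U]_{\Omega^j,:}^T[U]_{\Omega^j,:}\bigr)\le \tfrac{\rho_0}{2}\Bigr) \;\le\; r_0\exp\!\Bigl(-\tfrac{\rho_0 m}{8\mu_0 r_0}\Bigr).
\end{equation*}
An identical argument applied to rows of $V$ gives the analogous bound with $m$ replaced by $n$. In both bounds the exponent is at least $\rho_0 n_2/(8\mu_0 r_0)$.

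Third, a union bound over the $n$ column indices and the $m$ row indices shows that the probability of failing isomerism on any single sampled submatrix is at most
\begin{equation*}
(m+n)\,r_0\,\exp\!\Bigl(-\tfrac{\rho_0 n_2}{8\mu_0 r_0}\Bigr) \;\le\; 2n_1 r_0 \exp\!\Bigl(-\tfrac{\rho_0 n_2}{8\mu_0 r_0}\Bigr).
\end{equation*}
Choosing the absolute constant $c$ in the hypothesis $\rho_0>c\mu_0 r_0(\log n_1)/n_2$ large enough (any $c\ge 88$ suffices after crude estimation) drives the right-hand side below $n_1^{-10}$, which is exactly the claimed probability. Combining the two directions gives $\Omega/\Omega^T$-isomerism with probability at least $1-n_1^{-10}$.

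The main technical step is the matrix Chernoff bound; everything else is bookkeeping. The only delicate point is making sure the coherence-based row-norm estimate $\|[U]_{i,:}\|_2^2\le\mu_0 r_0/m$ (and its $V$-side analogue with $m$ replaced by $n$) is applied with the correct denominator so that the worst-case exponent features $n_2=\min(m,n)$ rather than $n_1$; this is what ultimately forces the threshold to scale as $\mu_0 r_0(\log n_1)/n_2$ and not as $\mu_0 r_0(\log n_1)/n_1$.
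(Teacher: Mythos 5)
Your proposal is correct and takes a genuinely different route from the paper's. The paper never touches the sampled Gram matrices directly: it cites Theorem 4.1 of Cand{\`e}s and Recht to bound the operator $\mathcal{H}=\mathcal{P}_{T_0}-\rho_0^{-1}\mathcal{P}_{T_0}\mathcal{P}_{\Omega}\mathcal{P}_{T_0}$ on the tangent space, deduces $\|\mathcal{P}_{T_0}\mathcal{P}_{\Omega}^{\bot}\mathcal{P}_{T_0}\|<1$, restricts to the column space to get $\|\mathcal{P}_{U_0}\mathcal{P}_{\Omega}^{\bot}\mathcal{P}_{U_0}\|<1$, and then translates invertibility of $\mathcal{P}_{U_0}\mathcal{P}_{\Omega}\mathcal{P}_{U_0}$ into isomerism through Lemmas~\ref{lem:basic:inverse}, \ref{lem:critical:inverse} and \ref{lem:basic:L02U}. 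You carry out the same reduction by hand --- your first step is precisely the content of Lemmas~\ref{lem:basic:L02U} and \ref{lem:basic:positive} --- but then prove the required concentration from scratch, applying a matrix Chernoff bound to each $\sum_{i}\delta_i^{(j)}u_iu_i^T$ and a union bound over the $m+n$ sampled submatrices. Your route is more elementary and self-contained, and it buys more than the theorem asks for: by Lemma~\ref{lem:basic:rcn}, your bound $\lambda_{\min}\geq\rho_0/2$ is a lower bound on $\gamma_{\Omega,\Omega^T}(L_0)$, so the same computation essentially delivers Theorem~\ref{thm:rcn:bound} as well, which the paper again obtains by citing Cand{\`e}s--Recht rather than re-deriving. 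Two cosmetic slips: your constant $c\geq 88$ is slightly too small (with the crude estimate $2n_1r_0n_1^{-c/8}\leq 2n_1^{2-c/8}$ one needs roughly $c\geq 104$ to drop below $n_1^{-10}$ for all $n_1\geq 2$), and the row and column families should be folded into a single union bound so that the two sides do not each separately claim probability $1-n_1^{-10}$ (a point the paper's own proof also elides). Neither affects validity, since the theorem only asserts the existence of a numerical constant.
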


Notice, that the isomeric condition can be also proven by discarding the uniform sampling assumption and accessing only the concept of coherence (see Theorem~\ref{thm:iso:rcn}). Furthermore, the isomeric condition could be even obeyed in the case of high coherence. For example,
\begin{align}\label{eq:example:1}
\hspace{-0.03in}\Omega \hspace{-0.03in}= \hspace{-0.03in}\{(1,1),\hspace{-0.03in} (1,2), \hspace{-0.03in}(1,3),\hspace{-0.03in} (2,1), \hspace{-0.03in}(3, 1)\} \textrm{ and } L_0 \hspace{-0.03in}=\hspace{-0.03in}\setlength\arraycolsep{0.1cm}\left[\hspace{-0.03in}\begin{array}{ccc}
1 &0&0\\
0&0&0\\
0&0&0
\end{array}\hspace{-0.03in}\right]\hspace{-0.03in},
\end{align}
where $L_0$ is not incoherent and the sampling is not uniform either, but it can be verified that $L_0$ is $\Omega/\Omega^T$-isomeric. In fact, the isomeric condition is \emph{necessary} for the identifiability of $L_0$, as shown in the following theorem.
\begin{theo}\label{thm:iso:necessary}
Let $L_0\in\Re^{m\times{}n}$ and $\Omega\subseteq\{1,\cdots,m\}\times\{1,\cdots,$ $n\}$. If either $L_0$ is not $\Omega$-isomeric or $L_0^T$ is not $\Omega^T$-isomeric then there exist infinity many matrices (denoted as $L\in\Re^{m\times{}n}$) that fit the observed entries not worse than $L_0$ does:
\begin{align*}
L\neq{}L_0,\textrm{ } \rank{L}\leq\rank{L_0},\textrm{ }[L]_{ij} = [L_0]_{ij},\forall{}(i,j)\in\Omega.
\end{align*}
\end{theo}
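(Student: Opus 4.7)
The plan is to argue by symmetry, handling the ``$L_0$ is not $\Omega$-isomeric'' case and then noting that the ``$L_0^T$ is not $\Omega^T$-isomeric'' case follows by transposition. So I would first suppose $L_0$ is not $\Omega$-isomeric, which by Definition~\ref{def:iso:omg} produces a column index $j^*$ with $\rank{[L_0]_{\Omega^{j^*},:}}<r_0$, where $r_0=\rank{L_0}$. The intuition is that the rows of $L_0$ that are actually \emph{seen} in column $j^*$ do not span the full row space of $L_0$, so one has a genuine ambiguity when reconstructing that column, and this ambiguity can be exercised without increasing the rank.

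Next I would make the ambiguity explicit via the SVD $L_0=U_0\Sigma_0V_0^T$ with $U_0\in\Re^{m\times{}r_0}$. Writing $[L_0]_{:,j^*}=U_0x^*$ for $x^*=\Sigma_0[V_0]_{j^*,:}^T$, the observation constraint on column $j^*$ is $[U_0]_{\Omega^{j^*},:}x^*=[L_0]_{\Omega^{j^*},j^*}$. Because $[L_0]_{\Omega^{j^*},:}=[U_0]_{\Omega^{j^*},:}\Sigma_0V_0^T$ and $\Sigma_0V_0^T$ has full row rank $r_0$, the rank of $[U_0]_{\Omega^{j^*},:}$ equals that of $[L_0]_{\Omega^{j^*},:}$, hence is strictly less than $r_0$. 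Thus there exists a nonzero vector $y\in\Re^{r_0}$ with $[U_0]_{\Omega^{j^*},:}y=0$, and $U_0y\neq 0$ because $U_0$ has full column rank.

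Now I would produce the family of counterexamples: for every $t\in\Re$, define
\begin{align*}
L(t)\;=\;L_0 \;+\; t\,(U_0y)\,e_{j^*}^T,
\end{align*}
where $e_{j^*}$ is the $j^*$-th standard basis vector in $\Re^n$. Three verifications remain, all short: (i) every column of $L(t)$ lies in $\mathrm{span}\{U_0\}$, so $\rank{L(t)}\leq r_0=\rank{L_0}$; (ii) $L(t)$ differs from $L_0$ only in column $j^*$, and on that column the difference is $t\,U_0y$, whose restriction to rows $\Omega^{j^*}$ is $t\,[U_0]_{\Omega^{j^*},:}y=0$, so $L(t)$ agrees with $L_0$ on all of $\Omega$; (iii) since $U_0y\neq 0$, the map $t\mapsto L(t)$ is injective, and different nonzero values of $t$ yield infinitely many distinct matrices $L(t)\neq L_0$. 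Finally, if instead $L_0^T$ fails to be $\Omega^T$-isomeric, the identical construction applied to $L_0^T$ (producing row-direction rather than column-direction perturbations) yields the same conclusion.

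I do not foresee a serious obstacle: the only mildly delicate step is converting ``$\rank{[L_0]_{\Omega^{j^*},:}}<r_0$'' into ``$[U_0]_{\Omega^{j^*},:}$ has a nontrivial null vector in $\Re^{r_0}$,'' which requires noting that right-multiplication by $\Sigma_0V_0^T$ preserves rank. Everything else is a direct verification.
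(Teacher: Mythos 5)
Your proof is correct, and it rests on the same core idea as the paper's: exhibit a nonzero perturbation $\Delta$ whose columns lie in $\mathrm{span}\{U_0\}$ (so the rank cannot increase) and which vanishes on $\Omega$ (so the observed entries are unchanged), then let it range over a one-parameter family. The difference is in how the existence of such a $\Delta$ is established. The paper routes through its operator machinery: Lemma~\ref{lem:basic:L02U} and Lemma~\ref{lem:critical:inverse} convert non-isomerism into non-invertibility of $\mathcal{P}_{U_0}\mathcal{P}_{\Omega}\mathcal{P}_{U_0}$, and Lemma~\ref{lem:basic:inverse} then gives $\mathcal{P}_{U_0}\cap\mathcal{P}_{\Omega}^{\bot}\neq\{0\}$, from which $\Delta$ is drawn abstractly. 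You instead construct the witness explicitly: a null vector $y$ of the rank-deficient block $[U_0]_{\Omega^{j^*},:}$ yields $\Delta=t\,(U_0y)e_{j^*}^T$, which visibly lies in that intersection. Your version is more elementary and self-contained (it needs none of the three lemmas, only the rank-preservation of right-multiplication by the full-row-rank factor $\Sigma_0V_0^T$, which is exactly the content of Lemma~\ref{lem:basic:L02U}); the paper's version buys reuse, since the same lemma chain drives the proofs of Theorems~\ref{thm:iso},~\ref{thm:convex} and the critical lemmas. One cosmetic remark: your family is one-dimensional by construction, whereas the paper observes that the whole linear space $\mathcal{P}_{U_0}\cap\mathcal{P}_{\Omega}^{\bot}$ is available; both suffice for ``infinitely many.''
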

In other words, for any partial matrix $M'$ with sampling set $\Omega$, if there exists a completion $M$ that is not $\Omega/\Omega^T$-isomeric, then there are infinity many completions that are different from $M$ and have a rank not greater than that of $M$. In other words, isomerism is also necessary for the so-called \emph{finitely completable property} explored in~\cite{Kiraly:2012:icml,Kiraly:2015:jmlr,daniel:2016:jstsp}. As a consequence, logically speaking, the deterministic sampling conditions established in~\cite{Kiraly:2012:icml,Kiraly:2015:jmlr,daniel:2016:jstsp} should suffice to ensure isomerism. The above theorem illustrates that the isomeric condition is indeed necessary for the identifiability of the completions to any partial matrices, no matter how the observed entries are chosen.
\subsection{Relative Well-Conditionedness}
While necessary, the isomeric condition is unfortunately unable to guarantee the identifiability of $L_0$ for sure. More concretely, consider the following example:
\begin{align}\label{eq:example:2}
\Omega = \{(1, 1), (2, 2)\} \textrm{ and } L_0 =\left[\begin{array}{cc}
1 &\frac{10}{9}\\
\frac{9}{10} &1
\end{array}\right].
\end{align}
It can be verified that $L_0$ is $\Omega/\Omega^T$-isomeric. However, there still exist infinitely many rank-1 completions different than $L_0$, e.g., $L_*=[1 ,1; 1, 1]$, which is a matrix of all ones. For this particular example, $L_*$ is the optimal rank-1 completion in the sense of coherence. In general, isomerism is only a condition for the sampled submatrices to be not rank deficient, but there is no guarantee that the sampled submatrices are well-conditioned. To compensate this weakness, we further propose an additional hypothesis called \emph{relative well-conditionedness}, which encourages the smallest singular value of the sampled submatrices to be far from 0.

Again, we shall begin with a simple concept called \emph{$\omega$-relative condition number}, with $\omega$ being a 1D sampling set.
\begin{defn}[$\omega$-relative condition number]\label{def:rcn:1}
Let $M\in\Re^{m\times{}l}$ and $\omega\subseteq\{1,\cdots,m\}$. Suppose that $[M]_{\omega,:}\neq0$. Then the $\omega$-relative condition number of the matrix $M$ is denoted as $\gamma_{\omega}(M)$ and given by
\begin{align*}
\gamma_{\omega}(M) = 1/\|M([M]_{\omega,:})^+\|^2,
\end{align*}
where $(\cdot)^+$ and $\|\cdot\|$ are the pseudo-inverse and operator norm of a matrix, respectively.
\end{defn}
Regarding the bound of the $\omega$-relative condition number $\gamma_{\omega}(M)$, simple calculations yield
\begin{align*}
\sigma_{min}^2/\|M\|^2\leq\gamma_{\omega}(M)\leq1,
\end{align*}
where $\sigma_{min}$ is the smallest singular value of $[M]_{\omega,:}$. Hence, the sampled submatrix $[M]_{\omega,:}$ has a large minimum singular value is sufficient for ensuring that $\gamma_{\omega}(M)$ is large, not necessary. Roughly, the value of $\gamma_{\omega}(M)$ measures how much information of a matrix $M$ is contained in the sampled submatrix $[M]_{\omega,:}$. The more information $[M]_{\omega,:}$ contains, the larger $\gamma_{\omega}(M)$ is (this will be more clear later). For example, $\gamma_{\omega}(M)=1$ whenever $\omega=\{1,\cdots,m\}$. The concept of $\omega$-relative condition number can be extended to the case of 2D sampling sets, as shown below.
\begin{defn}[$\Omega$-relative condition number]\label{def:rcn:2}
Let $M\in\Re^{m\times{}l}$ and $\Omega\subseteq\{1,\cdots,m\}\times\{1,\cdots,n\}$. Suppose that $[M]_{\Omega^j,:}\neq0$, $\forall{}1\leq{}j\leq{}n$. Then the $\Omega$-relative condition number of $M$ is denoted as $\gamma_{\Omega}(M)$ and given by
\begin{align*}
\gamma_{\Omega}(M) = \min_{1\leq{}j\leq{}n}\gamma_{\Omega^j}(M),
\end{align*}
where $\Omega^j$ is a 1D sampling set corresponding to the $j$th column of $\Omega$. Again, note here that $l\neq{}n$ is allowed.
\end{defn}
Using the notation of $\Omega^T$, we can define the concept of $\Omega/\Omega^T$-relative condition number as in the following.
\begin{defn}[$\Omega/\Omega^T$-relative condition number]\label{def:rcn:3}
Let $M\in\Re^{m\times{}n}$ and $\Omega\subseteq\{1,\cdots,m\}\times\{1,\cdots,n\}$. Suppose that $[M]_{\Omega^j,:}\neq0$ and $[M]_{:,\Omega_i}\neq0$, $\forall{}1\leq{}i\leq{}m,1\leq{}j\leq{}n$. Then the $\Omega/\Omega^T$-relative condition number of $M$ is denoted as $\gamma_{\Omega,\Omega^T}(M)$ and given by
\begin{align*}
\gamma_{\Omega,\Omega^T}(M) = \min(\gamma_{\Omega}(M), \gamma_{\Omega^T}(M^T)).
\end{align*}
\end{defn}

To make sure that an arbitrary matrix $L_0$ is recoverable from a subset of the matrix entries, we need to assume that $\gamma_{\Omega,\Omega^T}(L_0)$ is reasonably large; this is the so-called \emph{relative well-conditionedness}. Under the standard settings of uniform sampling and incoherence, we have the following theorem to bound $\gamma_{\Omega,\Omega^T}(L_0)$.
\begin{theo}\label{thm:rcn:bound}
Let $L_0\in\Re^{m\times{}n}$ and $\Omega\subseteq\{1,\cdots,m\}\times\{1,\cdots,$ $n\}$. Denote $n_1 = \max(m,n)$, $n_2=\min(m,n)$, $\mu_0=\mu(L_0)$ and $r_0=\rank{L_0}$. Suppose that $\Omega$ is a set sampled uniformly at random, namely $\mathrm{Pr}((i,j)\in\Omega)=\rho_0$ and $\mathrm{Pr}((i,j)\notin\Omega)=1-\rho_0$. For any $\alpha>1$, if $\rho_0>\alpha{}c\mu_0r_0(\log{n_1})/n_2$ for some numerical constant $c$ then, with probability at least $1-n_1^{-10}$, $\gamma_{\Omega,\Omega^T}(L_0)>(1-1/\sqrt{\alpha})\rho_0$.
\end{theo}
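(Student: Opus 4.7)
The plan is to reduce the $\gamma_{\Omega,\Omega^T}$ bound to a minimum-eigenvalue estimate on a subsampled Gram matrix of the singular vectors of $L_0$, and then close the estimate with a matrix Chernoff inequality.

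\textbf{Step 1 (Reduction via SVD).} Write the skinny SVD $L_0=U_0\Sigma_0V_0^T$, with $U_0\in\Re^{m\times r_0}$, $V_0\in\Re^{n\times r_0}$ having orthonormal columns. Fix $j$ and set $\omega=\Omega^j$. Assuming $[U_0]_{\omega,:}$ has full column rank (to be certified \emph{a posteriori}), a direct pseudo-inverse calculation gives $([L_0]_{\omega,:})^+ = V_0\Sigma_0^{-1}([U_0]_{\omega,:})^+$, and hence
\begin{align*}
L_0\bigl([L_0]_{\omega,:}\bigr)^+ \;=\; U_0\bigl([U_0]_{\omega,:}\bigr)^+.
\end{align*}
Using $U_0^TU_0=I_{r_0}$, the squared operator norm reduces to
\begin{align*}
\norm{L_0([L_0]_{\omega,:})^+}^2 \;=\; \frac{1}{\sigma_{\min}\bigl([U_0]_{\omega,:}^T[U_0]_{\omega,:}\bigr)},
\end{align*}
so $\gamma_{\Omega^j}(L_0)=\sigma_{\min}\bigl([U_0]_{\Omega^j,:}^T[U_0]_{\Omega^j,:}\bigr)$. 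An identical calculation applied to $L_0^T$ yields $\gamma_{\Omega_i}(L_0^T)=\sigma_{\min}\bigl([V_0]_{\Omega_i,:}^T[V_0]_{\Omega_i,:}\bigr)$. This reduction is the heart of the argument: it cancels $\Sigma_0$ and leaves a Bernoulli-sampled Gram of an isometry.

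\textbf{Step 2 (Matrix Chernoff on one column).} Let $\delta_{ij}\in\{0,1\}$ be the i.i.d.\ Bernoulli$(\rho_0)$ indicators of $(i,j)\in\Omega$ and set $X_i=[U_0]_{i,:}^T[U_0]_{i,:}\succeq 0$. Then
\begin{align*}
[U_0]_{\Omega^j,:}^T[U_0]_{\Omega^j,:}=\sum_{i=1}^m\delta_{ij}\,X_i,\qquad \sum_{i=1}^m X_i=U_0^TU_0=I_{r_0},
\end{align*}
and the coherence definition yields $\norm{X_i}=\|[U_0]_{i,:}\|_2^2\le\mu_0r_0/m$. The lower-tail matrix Chernoff inequality then gives, for any $\epsilon\in(0,1)$,
\begin{align*}
\Pr\!\Bigl(\sigma_{\min}\bigl([U_0]_{\Omega^j,:}^T[U_0]_{\Omega^j,:}\bigr)\le(1-\epsilon)\rho_0\Bigr) \;\le\; r_0\exp\!\Bigl(-\tfrac{\epsilon^2\rho_0 m}{2\mu_0 r_0}\Bigr).
\end{align*}

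\textbf{Step 3 (Union bound, symmetrization, wrap-up).} Choose $\epsilon=1/\sqrt{\alpha}$ and take $c$ large enough so that the hypothesis $\rho_0>\alpha c\mu_0r_0(\log n_1)/n_2$ makes the right-hand side above at most $\tfrac{1}{4}n_1^{-11}$ (this absorbs the $1/\alpha$ into the exponent and uses $m\ge n_2$). A union bound over $j=1,\dots,n$ controls $\gamma_\Omega(L_0)$; the symmetric argument with $V_0$ in place of $U_0$ and $\|[V_0]_{j,:}\|_2^2\le\mu_0 r_0/n$, union-bounded over $i=1,\dots,m$, controls $\gamma_{\Omega^T}(L_0^T)$. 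Since $m+n\le 2n_1$, the total failure probability is at most $n_1^{-10}$, and on the complementary event both Gram-matrix minimum eigenvalues exceed $(1-1/\sqrt{\alpha})\rho_0$; in particular every $[U_0]_{\Omega^j,:}$ and $[V_0]_{\Omega_i,:}$ has full column rank, which validates the assumption used in Step 1.

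\textbf{Main obstacle.} Steps 2--3 are textbook leverage-score concentration, so the only non-routine point is the algebraic identity in Step 1: recognizing that $\gamma_\omega(L_0)$ coincides with the minimum eigenvalue of a singular-vector Gram matrix, free of $\Sigma_0$, is what makes the clean $(1-1/\sqrt{\alpha})\rho_0$ bound possible. Without this cancellation, any concentration attempt on $L_0([L_0]_{\omega,:})^+$ would drag in the condition number of $L_0$ itself and spoil an estimate of this form.
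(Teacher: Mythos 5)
Your proof is correct, and it reaches the paper's conclusion by a genuinely different concentration argument. The algebraic reduction in your Step 1 --- that $\Sigma_0$ cancels and $\gamma_{\Omega^j}(L_0)=\sigma_{\min}\bigl(([U_0]_{\Omega^j,:})^T[U_0]_{\Omega^j,:}\bigr)$ --- is exactly the content of the paper's Lemma~\ref{lem:basic:rcn} (restated in operator form as $\|\mathcal{P}_{U_0}\mathcal{P}_{\Omega}^\bot\mathcal{P}_{U_0}\|=1-\gamma_{\Omega}(L_0)$ in Lemma~\ref{lem:critical:rnc2optnorm}), so you have correctly isolated the one non-routine ingredient. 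Where you diverge is in how the sampled Gram matrices are controlled. The paper reuses the proof of Theorem~\ref{thm:iso}: it forms $\mathcal{H}=\mathcal{P}_{T_0}-\rho_0^{-1}\mathcal{P}_{T_0}\mathcal{P}_{\Omega}\mathcal{P}_{T_0}$, invokes Theorem~4.1 of Cand\`{e}s--Recht as a black box to get $\|\mathcal{H}\|\leq\sqrt{c\mu_0r_0(\log n_1)/(\rho_0n_2)}<1/\sqrt{\alpha}$ in one stroke, deduces $\|\mathcal{P}_{T_0}\mathcal{P}_{\Omega}^\bot\mathcal{P}_{T_0}\|=\|\rho_0\mathcal{H}+(1-\rho_0)\mathcal{P}_{T_0}\|<1-\rho_0+\rho_0/\sqrt{\alpha}$, restricts to $\mathcal{P}_{U_0}$ and $\mathcal{P}_{V_0}$, and converts via Lemma~\ref{lem:critical:rnc2optnorm}. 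You instead hit each column Gram matrix $\sum_i\delta_{ij}X_i$ directly with a lower-tail matrix Chernoff bound (using $\|X_i\|\leq\mu_0r_0/m$ and $\mathbb{E}\sum_i\delta_{ij}X_i=\rho_0\Id$) and union-bound over the $m+n$ rows and columns. Both are sound and both rely on the Bernoulli sampling model; your version is more self-contained and elementary, and your probability bookkeeping is in fact slightly tighter (the paper bounds each of the two one-sided events by $n_1^{-10}$, which strictly yields $1-2n_1^{-10}$ for the joint event, whereas your budget delivers the advertised $1-n_1^{-10}$). What the paper's route buys in exchange is a single tangent-space concentration event, with the same constant $c$, that serves Theorem~\ref{thm:iso} and Theorem~\ref{thm:rcn:bound} simultaneously.
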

The above theorem illustrates that, under the setting of uniform sampling \emph{plus} incoherence, the relative condition number approximately corresponds to the fraction of the observed entries. Actually, the relative condition number can be bounded from below without the assumption of uniform sampling.
\begin{theo}\label{thm:iso:rcn}
Let $L_0\in\Re^{m\times{}n}$ and $\Omega\subseteq\{1,\cdots,m\}\times\{1,\cdots,n\}$. Denote $\mu_0=\mu(L_0)$ and $r_0=\rank{L_0}$. Denote by $\rho$ the smallest fraction of the observed entries in each column and row of $L_0$; namely,
\begin{align*}
\rho = \min(\min_{1\leq{}i\leq{}m}\frac{|\Omega_{i}|}{n}, \min_{1\leq{}j\leq{}n}\frac{|\Omega^{j}|}{m}).
\end{align*}
For any $0\leq\alpha<1$, if $\rho>1-(1-\alpha)/(\mu_0r_0)$ then the matrix $L_0$ is $\Omega/\Omega^T$-isomeric and $\gamma_{\Omega,\Omega^T}(L_0)>\alpha$.
\end{theo}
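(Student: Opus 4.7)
The plan is to pull everything back to the singular-vector factors of $L_0$ and then exploit orthonormality together with the coherence hypothesis. Write the skinny SVD $L_0 = U_0\Sigma_0V_0^T$ with $U_0\in\Re^{m\times r_0}$, $V_0\in\Re^{n\times r_0}$, and fix an arbitrary $\omega\subseteq\{1,\dots,m\}$. The first observation I would record is that $[L_0]_{\omega,:} = [U_0]_{\omega,:}\Sigma_0V_0^T$, so as soon as $[U_0]_{\omega,:}$ has full column rank $r_0$, both quantities in the conclusion collapse nicely: on the one hand $\rank{[L_0]_{\omega,:}} = r_0 = \rank{L_0}$, giving isomerism on the row side; on the other hand a direct SVD computation gives $([L_0]_{\omega,:})^+ = V_0\Sigma_0^{-1}([U_0]_{\omega,:})^+$ (a routine check of the Moore--Penrose axioms, using $V_0^TV_0 = I$), so $L_0([L_0]_{\omega,:})^+ = U_0([U_0]_{\omega,:})^+$ and $\|L_0([L_0]_{\omega,:})^+\| \le \|U_0\|/\sigma_{\min}([U_0]_{\omega,:}) = 1/\sigma_{\min}([U_0]_{\omega,:})$. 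Thus everything reduces to lower-bounding $\sigma_{\min}([U_0]_{\omega,:})$.

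For that lower bound I would split the identity $U_0^TU_0 = I$ along $\omega$ and its complement $\omega^c = \{1,\dots,m\}\setminus\omega$:
\[
I \;=\; [U_0]_{\omega,:}^T[U_0]_{\omega,:} \;+\; [U_0]_{\omega^c,:}^T[U_0]_{\omega^c,:},
\]
so that $\sigma_{\min}^2([U_0]_{\omega,:}) = 1 - \|[U_0]_{\omega^c,:}\|^2$. The coherence assumption enters here through the row-norm bound $\|[U_0]_{i,:}\|_F^2 \le \mu_0r_0/m$, which yields
\[
\|[U_0]_{\omega^c,:}\|^2 \;\le\; \|[U_0]_{\omega^c,:}\|_F^2 \;\le\; \frac{|\omega^c|\,\mu_0r_0}{m} \;=\; (1 - |\omega|/m)\,\mu_0r_0.
\]
Specializing to $\omega = \Omega^j$ and invoking $|\Omega^j|/m \ge \rho$, this is at most $(1-\rho)\mu_0r_0 < 1-\alpha$ by the standing hypothesis. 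Consequently $\sigma_{\min}^2([U_0]_{\Omega^j,:}) > \alpha \ge 0$, which simultaneously proves that $[U_0]_{\Omega^j,:}$ has full column rank $r_0$ (so $L_0$ is $\Omega$-isomeric) and that $\gamma_{\Omega^j}(L_0) > \alpha$ for every $j$, hence $\gamma_{\Omega}(L_0) > \alpha$.

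The symmetric claim on $L_0^T$ follows by rerunning the same argument with the roles of $U_0$ and $V_0$ swapped, starting from $L_0^T = V_0\Sigma_0U_0^T$, using the row sampling sets $\Omega_i$ together with $|\Omega_i|/n \ge \rho$, and invoking $\|[V_0]_{j,:}\|_F^2 \le \mu_0r_0/n$. Combining the two halves yields $\Omega/\Omega^T$-isomerism and $\gamma_{\Omega,\Omega^T}(L_0) = \min(\gamma_{\Omega}(L_0),\gamma_{\Omega^T}(L_0^T)) > \alpha$, as required.

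The main obstacle, modest as it is, sits in the first paragraph: one has to be a little careful in justifying the pseudo-inverse identity $([L_0]_{\omega,:})^+ = V_0\Sigma_0^{-1}([U_0]_{\omega,:})^+$, since the factor $[U_0]_{\omega,:}$ is not square and $V_0$ is only semi-orthogonal. The clean way is to verify directly that the right-hand side satisfies the four Moore--Penrose axioms once $[U_0]_{\omega,:}$ has full column rank $r_0$; after that the rest of the argument is a single appeal to orthonormality of $U_0$ (resp.\ $V_0$) and to the coherence bound on its row norms.
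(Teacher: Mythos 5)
Your argument is correct, and its quantitative core coincides exactly with the paper's: writing $\omega^c=\{1,\dots,m\}\setminus\Omega^j$, your estimate $\|[U_0]_{\omega^c,:}\|^2\leq\|[U_0]_{\omega^c,:}\|_F^2=\sum_{i\in\omega^c}\|u_i\|_2^2\leq(1-\rho)\mu_0r_0$ is the same trace-versus-operator-norm bound the paper applies to $\|\sum_{i}(1-\delta_{ij})u_iu_i^T\|$, and your quantity $1-\sigma_{\min}^2([U_0]_{\Omega^j,:})$ is precisely what the paper calls $\|U_0U_0^T(\Id-D_j)U_0U_0^T\|$. Where you genuinely diverge is in the packaging. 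The paper routes the conclusion through its operator formalism: Lemma~\ref{lem:basic:inverse} and Lemma~\ref{lem:critical:inverse} convert $\|\mathcal{P}_{U_0}\mathcal{P}_{\Omega}^\bot\mathcal{P}_{U_0}\|<1-\alpha$ into isomerism, and Lemma~\ref{lem:critical:rnc2optnorm} converts it into the bound on $\gamma_{\Omega}(L_0)$. You instead re-derive exactly the fragments of that machinery you need, inline and at the matrix level: the rank identity $\rank{[L_0]_{\omega,:}}=\rank{[U_0]_{\omega,:}}$ is the content of Lemma~\ref{lem:basic:L02U}, and your pseudo-inverse computation $L_0([L_0]_{\omega,:})^+=U_0([U_0]_{\omega,:})^+$ reproves one direction of Lemma~\ref{lem:basic:rcn} (which in fact gives the equality $\gamma_{\omega}(L_0)=\sigma_{\min}^2([U_0]_{\omega,:})$, though your one-sided inequality suffices here). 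The Moore--Penrose identity you flag as the delicate point does hold; besides checking the axioms, it follows from the full-rank factorization $[L_0]_{\omega,:}=([U_0]_{\omega,:}\Sigma_0)\cdot V_0^T$ together with $(FG)^+=G^+F^+$ for $F$ of full column rank and $G$ of full row rank. The trade-off between the two routes is that yours is self-contained and elementary, while the paper's is shorter at this point only because the work has been amortized into lemmas that it reuses elsewhere (e.g.\ in the proofs of Theorems~\ref{thm:iso} and~\ref{thm:rcn:bound}).
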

It is worth noting that the relative condition number could be large even if the coherence of $L_0$ is extremely high. For the example shown in~\eqref{eq:example:1}, it can be calculated that $\gamma_{\Omega,\Omega^T}(L_0)=1$.
\section{Theories and Methods}\label{sec:mainbody}
In this section, we shall prove some theorems pertaining to matrix completion as well as missing data recovery. In addition, we suggest a method termed IsoDP for matrix completion, which possesses some remarkable features that we miss in the traditional bilinear programs.
\subsection{Missing Data Recovery}\label{sec:clue}
Before exploring the matrix completion problem, we would like to consider a missing data recovery problem studied by~\cite{Zhang06}, which is described as follows: Let $y_0\in\Re^m$ be a data vector drawn form some low-dimensional subspace, denoted as $y_0\in\mathcal{S}_0\subset\Re^m$. Suppose that $y_0$ contains some available observations in $y_b\in\Re^k$ and some missing entries in $y_u\in\Re^{m-k}$. Namely, after a permutation,
\begin{align}\label{eq:y}
y_0 = \left[\begin{array}{c}
y_b\\
y_u\\
\end{array}\right], y_b\in\Re^k, y_u\in\Re^{m-k}.
\end{align}
Given the observations in $y_b$, we seek to restore the unseen entries in $y_u$. To do this, we consider the prevalent idea that represents a data vector as a linear combination of the bases in a given dictionary:
\begin{align}\label{eq:ax}
y_0 = Ax_0,
\end{align}
where $A\in\Re^{m\times{}p}$ is a dictionary constructed in advance and $x_0\in\Re^{p}$ is the representation of $y_0$. Utilizing the same permutation used in~\eqref{eq:y}, we can partition the rows of $A$ into two parts according to the locations of the observed and missing entries:
\begin{align}\label{eq:A}
A = \left[\begin{array}{c}
A_b\\
A_u\\
\end{array}\right], A_b\in\Re^{k\times{}p}, A_u\in\Re^{(m-k)\times{}p}.
\end{align}
In this way, the equation in~\eqref{eq:ax} gives that
\begin{align*}
y_b = A_bx_0\quad\text{and}\quad{}y_u = A_ux_0.
\end{align*}
As we now can see, the unseen data $y_u$ is exactly restored, as long as the representation $x_0$ is retrieved by only accessing the available observations in $y_b$. In general cases, there are infinitely many representations that satisfy $y_0 = Ax_0$, e.g., $x_0=A^+y_0$, where $(\cdot)^+$ is the pseudo-inverse of a matrix. Since $A^+y_0$ is the representation of minimal $\ell_2$ norm, we revisit the traditional $\ell_2$ program:
\begin{align}\label{eq:l2}
\min_{x} \frac{1}{2}\norm{x}_2^2,\quad\textrm{s.t.}\quad{}y_b = A_bx,
\end{align}
where $\|\cdot\|_2$ is the $\ell_2$ norm of a vector. The above problem has a closed-form solution given by $A_b^+y_b$. Under some verifiable conditions, the above $\ell_2$ program is indeed \emph{consistently successful} in a sense as in the following: For any $y_0\in\mathcal{S}_0$ with an arbitrary partition $y_0=[y_b;y_u]$ (i.e., arbitrarily missing), the desired representation $x_0=A^+y_0$ is the unique minimizer to the problem in~\eqref{eq:l2}. That is, the unseen data $y_u$ is exactly recovered by firstly computing $x_*=A_b^+y_b$ and then calculating $y_u=A_ux_*$.
\begin{theo}\label{thm:l2} Let $y_0=[y_b;y_u]\in\Re^m$ be an authentic sample drawn from some low-dimensional subspace $\mathcal{S}_0$. Denote by $k$ the number of available observations in $y_b$. Then the convex program~\eqref{eq:l2} is consistently successful, as long as $\mathcal{S}_0\subseteq\mathrm{span}\{A\}$ and the given dictionary $A$ is $k$-isomeric.
\end{theo}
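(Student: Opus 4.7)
The plan is to show that, for any $y_0 \in \mathcal{S}_0$ and any partition $y_0 = [y_b; y_u]$ with $k$ observed coordinates, the target representation $x_0 := A^+ y_0$ coincides with the closed-form minimizer $A_b^+ y_b$ of~\eqref{eq:l2}, which is unique by strict convexity of the objective. To set up, I would use $\mathcal{S}_0 \subseteq \mathrm{span}\{A\}$ to produce $A x_0 = y_0$; restricting to the observed rows yields $A_b x_0 = y_b$, so $x_0$ is feasible for~\eqref{eq:l2}. By the definition of the pseudo-inverse, $x_0 = A^+ y_0$ lies in the row space of $A$, namely $\mathrm{span}\{V_A\}$ for the SVD $A = U_A \Sigma_A V_A^T$.

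Next I would exploit the $k$-isomeric hypothesis to identify the row spaces of $A_b$ and $A$. Writing $\omega_b$ for the 1D sampling set of observed coordinates so that $|\omega_b| = k$, Definition~\ref{def:iso:k} gives $\rank{A_b} = \rank{A} = r$. Every row of $A_b$ is a row of $A$, hence lives in $\mathrm{span}\{V_A\}$, so $\mathrm{rowspan}(A_b) \subseteq \mathrm{rowspan}(A)$; equating the dimensions $r = \dim\mathrm{span}\{V_A\}$ then forces $\mathrm{rowspan}(A_b) = \mathrm{rowspan}(A)$.

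Finally, since $x_0 \in \mathrm{rowspan}(A) = \mathrm{rowspan}(A_b)$ while $A_b x_0 = y_b$, applying the orthogonal projection $A_b^+ A_b$ (which acts as the identity on $\mathrm{rowspan}(A_b)$) yields $x_0 = A_b^+ A_b x_0 = A_b^+ y_b$; strict convexity of $\tfrac{1}{2}\|x\|_2^2$ on the affine feasible set then certifies $x_0$ as the unique minimizer of~\eqref{eq:l2}, and exact recovery $y_u = A_u x_0$ is immediate. The main obstacle is purely conceptual rather than technical: one must carefully separate the row space of $A$ in $\Re^p$ (where $x_0$ lives) from the range of $A$ in $\Re^m$ (where $y_0$ lives), and recognize that $k$-isomerism is precisely the hypothesis preventing any $k$-row submatrix $A_b$ from shedding part of $\mathrm{rowspan}(A)$; without it, an adversarial partition could push a row-space direction outside $\mathrm{rowspan}(A_b)$, so the pseudo-inverse identity $A_b^+ A_b x_0 = x_0$ would fail and $A_b^+ y_b$ would deviate from the true $x_0$.
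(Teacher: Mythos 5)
Your proof is correct and follows essentially the same route as the paper's: both establish feasibility of $x_0=A^+y_0$, use $k$-isomerism to get $\rank{A_b}=\rank{A}$ and hence $\mathrm{span}\{A_b^T\}=\mathrm{span}\{A^T\}$, and exploit the fact that $x_0\in\mathrm{span}\{A^T\}$. The only cosmetic difference is the last step: you identify $x_0$ with the closed-form minimum-norm solution via the projector identity $A_b^+A_bx_0=x_0$, whereas the paper exhibits a dual certificate $w$ with $A_b^Tw=x_0$ and invokes convexity plus strong convexity for uniqueness---two equivalent ways of using $x_0\in\mathrm{span}\{A_b^T\}$.
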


The above theorem says that, in order to recover an $m$-dimensional vector sampled from some subspace determined by a given $k$-isomeric dictionary $A$, one only needs to see $k$ entries of the vector.
\subsection{Convex Matrix Completion}
Low rank matrix completion concerns the problem of seeking a matrix that not only attains the lowest rank but also satisfies the constraints given by the observed entries:
\begin{eqnarray*}
\min_{L} \rank{L},\quad\textrm{s.t.}\quad{}[L]_{ij} = [L_0]_{ij},\forall{}(i,j)\in\Omega.
\end{eqnarray*}
Unfortunately, this idea is of little practical because the problem above is essentially NP-hard and cannot be solved in polynomial time~\cite{Chistov:1984}. To achieve practical matrix completion, Cand{\`e}s and Recht~\cite{Candes:2009:math,Recht2008} suggested an alternative that minimizes instead the nuclear norm; namely,
\begin{eqnarray}\label{eq:numin}
\min_{L} \|L\|_*,\quad\textrm{s.t.}\quad{}[L]_{ij} = [L_0]_{ij},\forall{}(i,j)\in\Omega,
\end{eqnarray}
where $\|\cdot\|_*$ denotes the nuclear norm, i.e., the sum of the singular values of a matrix. Under the context of uniform sampling, it has been proved that the above convex program succeeds in recovering the target $L_0$.

Although its theory is built upon the assumption of missing at random, as observed widely in the literatures, the convex program~\eqref{eq:numin} actually works even when the locations of the missing entries are distributed in a correlated and nonuniform fashion. This phenomenon could be explained by the following theorem, which states that the solution to the problem in~\eqref{eq:numin} is \emph{unique} and \emph{exact}, provided that the isomeric condition is obeyed and the relative condition number of $L_0$ is large enough.
\begin{theo}\label{thm:convex}
Let $L_0\in\Re^{m\times{}n}$ and $\Omega\subseteq\{1,\cdots,m\}\times\{1,\cdots,n\}$. If $L_0$ is $\Omega/\Omega^T$-isomeric and $\gamma_{\Omega,\Omega^T}(L_0)>0.75$ then $L_0$ is the unique minimizer to the problem in~\eqref{eq:numin}.
\end{theo}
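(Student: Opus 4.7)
The plan is to apply the standard subgradient/dual-certificate argument for nuclear norm minimization, with the usual randomized certificate replaced by a deterministic construction built directly from the sampled submatrices of $U_0$ and $V_0$. Write any competitor as $L = L_0 + H$, so that feasibility forces $[H]_{ij} = 0$ for every $(i,j)\in\Omega$. Let $L_0 = U_0\Sigma_0V_0^T$ be the SVD, let $T = \{U_0X^T + YV_0^T : X, Y \text{ arbitrary}\}$ be the tangent space at $L_0$, and write $P_T$, $P_{T^\perp}$ for the associated projections. The subgradient inequality for the nuclear norm gives, for any $W \in T^\perp$ with $\norm{W} \le 1$,
\begin{equation*}
\norm{L_0 + H}_* \;\ge\; \norm{L_0}_* + \langle U_0V_0^T + W,\, H\rangle + (1-\norm{W})\,\norm{P_{T^\perp}(H)}_*,
\end{equation*}
so it suffices to (i) construct a dual certificate $G = U_0V_0^T + W$ whose support lies in $\Omega$ (so that $\langle G, H\rangle = 0$) with $\norm{W} < 1$ strictly, and then (ii) separately rule out nonzero $H \in T$ satisfying $[H]_{ij}=0$ for $(i,j)\in\Omega$.

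For step (i), the isomeric hypothesis is what enables a deterministic construction. Since $L_0$ is $\Omega$-isomeric, $[U_0]_{\Omega^j,:}$ has column rank equal to $\rank{L_0}$ for every $j$, so its pseudo-inverse $([U_0]_{\Omega^j,:})^+$ is well defined, and by Definition~\ref{def:rcn:1} one has $\norm{U_0([U_0]_{\Omega^j,:})^+}^2 = 1/\gamma_{\Omega^j}(U_0) \le 1/\gamma_{\Omega,\Omega^T}(L_0)$, with a symmetric statement for $V_0$ and $\Omega^T$. The natural ansatz is to assemble $G$ from a column-wise piece (each column of $U_0V_0^T$ lifted back through $([U_0]_{\Omega^j,:})^+$ so that its support lies in $\Omega^j$) and a row-wise piece (built symmetrically via $\Omega_i$), together with a cross-correction that forces $P_T(G) = U_0V_0^T$. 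The residual $W = G - U_0V_0^T$ then admits an operator-norm bound of the form $\norm{W} \le \phi(\gamma_{\Omega,\Omega^T}(L_0))$ for some decreasing $\phi$, and the hypothesis $\gamma_{\Omega,\Omega^T}(L_0) > 0.75$ is tuned precisely so that $\phi < 1$.

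For step (ii), suppose $H \in T$ with $[H]_{ij}=0$ on $\Omega$. Writing $H = U_0 A^T + B V_0^T$ with $U_0^T B = 0$, the constraint $[H]_{\Omega^j,j} = 0$ becomes $[U_0]_{\Omega^j,:}[A]_{j,:}^T + [B]_{\Omega^j,:}[V_0]_{j,:}^T = 0$ for each column $j$, and the transpose constraint contributes the analogous family of row equations under $\Omega^T$. The $\Omega/\Omega^T$-isomerism of $L_0$ makes the relevant sampled blocks of $U_0$ and $V_0$ injective on the unknown coefficient spaces, and the consistent-success argument behind Theorem~\ref{thm:l2}, applied column by column and row by row, forces $A = B = 0$ and hence $H = 0$.

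The main obstacle is step (i): producing a certificate whose $T^\perp$-component has operator norm strictly below one with the sharp threshold $3/4$. A naive summation of rank-one contributions across columns (and across rows) would give bounds that grow with the matrix dimension; the proof must exploit the orthogonality enforced by $T^\perp$ between the column-sample piece and the row-sample piece, so that the two contributions can be controlled uniformly by $\gamma_\Omega$ and $\gamma_{\Omega^T}$ respectively. It is precisely the symmetric combination of these two quantities in $\gamma_{\Omega,\Omega^T}(L_0)$ that allows the operator-norm bound to collapse to the clean univariate threshold $0.75$.
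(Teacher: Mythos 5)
Your step (ii) is sound and matches what the paper does: the injectivity of $\mathcal{P}_{\Omega}$ on the tangent space is exactly the statement $T_0\cap\Omega^\bot=\{0\}$, which the paper obtains from $\norm{\mathcal{P}_{T_0}\mathcal{P}_{\Omega}^\bot\mathcal{P}_{T_0}}\leq 2(1-\gamma_{\Omega,\Omega^T}(L_0))<1$ together with Lemma~\ref{lem:basic:inverse}. The problem is step (i), which you yourself flag as the main obstacle and then leave as an ansatz. You never actually construct the certificate $G$, and more importantly the bound you assert, $\norm{W}\leq\phi(\gamma_{\Omega,\Omega^T}(L_0))<1$ on the \emph{spectral} norm of $W=\mathcal{P}_{T_0}^\bot(G)$, does not follow from the hypotheses. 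The relative condition number controls operators acting on the matrix space equipped with the Frobenius inner product (Lemmas~\ref{lem:critical:rnc2optnorm}--\ref{lem:critical:optnorm:invpt} are all statements about $\sup_{\norm{M}_F=1}\norm{\cdot}_F$); it gives no direct handle on the spectral norm of the lifted matrix $\mathcal{P}_{T_0}^\bot\mathcal{P}_{\Omega}\mathcal{P}_{T_0}(\mathcal{P}_{T_0}\mathcal{P}_{\Omega}\mathcal{P}_{T_0})^{-1}(U_0V_0^T)$. This is precisely the difficulty that forces the golfing-scheme machinery in the random-sampling literature, and a naive Frobenius-to-spectral conversion loses a dimension-dependent factor, which is the dimension growth you worry about but do not resolve. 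As written, the claim that the threshold $3/4$ is ``tuned precisely'' so that $\phi<1$ is unsupported.

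The paper sidesteps the certificate construction entirely, and you should compare your route against this. It never builds a $G$ supported on $\Omega$. Instead, for a feasible competitor $L_0+\Delta$ it picks the subgradient direction $W\in\mathcal{P}_{T_0}^\bot$ \emph{depending on} $\Delta$, so that $\langle\Delta,W\rangle=\norm{\mathcal{P}_{T_0}^\bot(\Delta)}_*$, bounds $\langle\Delta,U_0V_0^T\rangle\geq-\norm{\mathcal{P}_{T_0}(\Delta)}_*$, and then uses feasibility, $\mathcal{P}_{\Omega}(\Delta)=0$, to write
\begin{align*}
\mathcal{P}_{T_0}(\Delta) = -(\mathcal{P}_{T_0}\mathcal{P}_{\Omega}\mathcal{P}_{T_0})^{-1}\mathcal{P}_{T_0}\mathcal{P}_{\Omega}\mathcal{P}_{T_0}^\bot(\Delta),
\end{align*}
so that $\norm{\mathcal{P}_{T_0}(\Delta)}_*<\norm{\mathcal{P}_{T_0}^\bot(\Delta)}_*$ once $\norm{(\mathcal{P}_{T_0}\mathcal{P}_{\Omega}\mathcal{P}_{T_0})^{-1}\mathcal{P}_{T_0}\mathcal{P}_{\Omega}\mathcal{P}_{T_0}^\bot}<1$. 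By Lemmas~\ref{lem:critical:optnorm:ptpo} and~\ref{lem:critical:optnorm:invpt} this operator norm equals $\sqrt{1/(1-\norm{\mathcal{P}_{T_0}\mathcal{P}_{\Omega}^\bot\mathcal{P}_{T_0}})-1}\leq\sqrt{1/(2\gamma_{\Omega,\Omega^T}(L_0)-1)-1}$, and this is where the threshold $0.75$ actually comes from: it is the value at which this quantity drops below $1$. If you want to salvage your certificate-based route, you would need either an additional assumption controlling spectral norms of the sampled inverses, or you should abandon the support-$\Omega$ certificate and adopt the competitor-dependent comparison of $\norm{\mathcal{P}_{T_0}(\Delta)}_*$ against $\norm{\mathcal{P}_{T_0}^\bot(\Delta)}_*$ as above.
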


Roughly speaking, the assumption $\gamma_{\Omega,\Omega^T}(L_0)>0.75$ requires that more than three quarters of the information in $L_0$ is observed. Such an assumption is seemingly restrictive but technically difficult to reduce in general cases.
\subsection{Nonconvex Matrix Completion}\label{sec:mainres}
The problem of missing data recovery is closely related to matrix completion, which is actually to restore the missing entries in multiple data vectors simultaneously. Hence, we would transfer the spirits of the $\ell_2$ program~\eqref{eq:l2} to the case of matrix completion. Following~\eqref{eq:l2}, one may consider Frobenius norm minimization for matrix completion:
\begin{align}\label{eq:fnorm}
\min_{X} \frac{1}{2}\norm{X}_F^2,\textrm{ s.t. }[AX]_{ij} = [L_0]_{ij},\forall{}(i,j)\in\Omega,
\end{align}
where $A\in\Re^{m\times{}p}$ is a dictionary matrix assumed to be given. Similar to~\eqref{eq:l2}, the convex program~\eqref{eq:fnorm} can also exactly recover the desired representation matrix $A^+L_0$, as shown in the theorem below.
\begin{theo}\label{thm:fnorm}
Let $L_0\in\Re^{m\times{}n}$ and $\Omega\subseteq\{1,\cdots,m\}\times\{1,\cdots,n\}$. Provided that $L_0\in\mathrm{span}\{A\}$ and the given dictionary $A$ is $\Omega$-isomeric, the desired representation $X_0=A^+L_0$ is the unique minimizer to the problem in~\eqref{eq:fnorm}.
\end{theo}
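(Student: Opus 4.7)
The plan is to reduce problem~(6) to the column-wise $\ell_2$-minimization setup of Section~4.1 and then run the same argument that underlies Theorem~\ref{thm:l2}. First I would observe that the Frobenius objective splits as $\norm{X}_F^2=\sum_{j=1}^n\norm{[X]_{:,j}}_2^2$, and that the affine constraint $[AX]_{ij}=[L_0]_{ij}$ for $(i,j)\in\Omega$ reads, for each fixed $j$, as the linear system $[A]_{\Omega^j,:}\,[X]_{:,j}=[L_0]_{\Omega^j,j}$. Hence (6) decouples into $n$ independent $\ell_2$ programs of the form~(5), one per column, with $A_b=[A]_{\Omega^j,:}$ and $y_b=[L_0]_{\Omega^j,j}$.

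Next I would exhibit $X_0=A^+L_0$ as a feasible point and then prove column by column that $[X_0]_{:,j}$ is the unique minimizer of the $j$th subproblem. Feasibility follows from $L_0\in\mathrm{span}\{A\}$: writing $AA^+=U_AU_A^T$ gives $AA^+L_0=L_0$, so $[AX_0]_{ij}=[L_0]_{ij}$ at every $(i,j)$, in particular on $\Omega$. For uniqueness, I would write any feasible column candidate as $x=x_*+v$ with $x_*=A^+[L_0]_{:,j}$; feasibility of both $x$ and $x_*$ forces $v=x-x_*\in\mathrm{null}([A]_{\Omega^j,:})$.

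The heart of the proof is then an orthogonality claim: $x_*$ and $v$ live in complementary subspaces. Since $x_*=V_A\Sigma_A^{-1}U_A^T[L_0]_{:,j}$ lies in the range of $V_A$, i.e.\ in the row space of $A$, it suffices to show that $\mathrm{null}([A]_{\Omega^j,:})$ coincides with the row-space complement of $A$. This is exactly where $\Omega$-isomerism of $A$ is used: the rows of $[A]_{\Omega^j,:}$ form a subset of those of $A$, so $\mathrm{row}([A]_{\Omega^j,:})\subseteq\mathrm{row}(A)$, and the equality $\rank{[A]_{\Omega^j,:}}=\rank{A}$ upgrades the inclusion to an equality of subspaces. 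Passing to orthogonal complements yields $\mathrm{null}([A]_{\Omega^j,:})=(\mathrm{row}(A))^\perp$, hence $\langle x_*,v\rangle=0$; the Pythagorean identity then gives $\norm{x}_2^2=\norm{x_*}_2^2+\norm{v}_2^2>\norm{x_*}_2^2$ whenever $v\neq 0$. Summing over $j$ delivers $\norm{X}_F^2>\norm{X_0}_F^2$ for every feasible $X\neq X_0$, finishing the argument. I expect the only mildly delicate step to be this null-space identification under $\Omega$-isomerism; the rest is routine linear-algebra bookkeeping paralleling the proof of Theorem~\ref{thm:l2}.
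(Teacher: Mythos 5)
Your proof is correct, but it follows a genuinely different route from the paper's. The paper dispatches Theorem~\ref{thm:fnorm} as an immediate corollary of Lemma~\ref{lem:critical:uinorm}: there, the combination of $L_0\in\mathrm{span}\{A\}$ with the invertibility of the operator $\mathcal{P}_{U_A}\mathcal{P}_{\Omega}\mathcal{P}_{U_A}$ (supplied by Lemmas~\ref{lem:basic:L02U} and~\ref{lem:critical:inverse}) shows that every feasible $X$ in fact satisfies $AX=L_0$ in full, after which the minimum-norm property of $A^+L_0$ is quoted from Theorem 4.1 of~\cite{tpami_2013_lrr}. You instead decouple the program column by column, identify $\mathrm{null}([A]_{\Omega^j,:})$ with $\mathrm{row}(A)^{\perp}$ via the rank equality in the isomeric hypothesis, and close with a Pythagorean argument. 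The two proofs pivot on the same core fact---isomerism forces the sampled submatrix to preserve the row space of $A$ (your null-space identification is exactly the finite-dimensional content of the operator-invertibility lemma)---but yours is more elementary and fully self-contained, needing no external citation and no operator machinery; note also that your decomposition $v\in\mathrm{null}(A)$ implicitly recovers the paper's conclusion that $AX=L_0$ for every feasible $X$. What the paper's formulation buys in exchange is generality: Lemma~\ref{lem:critical:uinorm} holds for an arbitrary unitarily invariant norm (it is reused with the nuclear norm in Theorems~\ref{thm:isodp:f} and~\ref{thm:isodp}), whereas your Pythagorean step is specific to the Frobenius/$\ell_2$ objective. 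One cosmetic remark: your argument tacitly requires $\Omega^j\neq\emptyset$ for every $j$ (otherwise the $j$th subproblem is unconstrained and its minimizer is $0$), but this is already built into Definition~\ref{def:iso:omg}, so no gap results.
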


Theorem~\ref{thm:fnorm} tells us that, in general, even when the locations of the missing entries are placed arbitrarily, the target $L_0$ is restored as long as we have a proper dictionary $A$. This motivates us to consider the commonly used bilinear program that seeks both $A$ and $X$ simultaneously:
\begin{align}\label{eq:isodp:f}
\hspace{-0.05in}\min_{A,X}\frac{1}{2} (\norm{A}_F^2\hspace{-0.02in}+\hspace{-0.02in} \norm{X}_F^2),\textrm{ s.t. }[AX]_{ij} \hspace{-0.02in}= \hspace{-0.02in} [L_0]_{ij},\forall{}(i,j)\hspace{-0.02in}\in\hspace{-0.02in}\Omega,
\end{align}
where $A\in\Re^{m\times{}p}$ and $X\in\Re^{p\times{}n}$. The problem above is bilinear and therefore nonconvex. So, it would be hard to obtain a strong performance guarantee as done in the convex programs, e.g.,~\cite{Candes:2009:math,liu:tsp:2016}. What is more, the setup of deterministic sampling requires a deterministic recovery guarantee, the proof of which is much more difficult than a probabilistic guarantee. Interestingly, under the very mild condition of isomerism, the problem in~\eqref{eq:isodp:f} is proven to include the exact solutions that identify the target matrix $L_0$ as the critical points. Furthermore, when the relative condition number of $L_0$ is sufficiently large, the local optimality of the exact solutions is guaranteed surely.
\begin{theo}\label{thm:isodp:f}
Let $L_0\in\Re^{m\times{}n}$ and $\Omega\subseteq\{1,\cdots,m\}\times\{1,\cdots,n\}$. Denote the rank and the SVD of $L_0$ as $r_0$ and $U_0\Sigma_0V_0^T$, respectively. Define
\begin{align*}
&A_0 = U_0\Sigma_0^{\frac{1}{2}}Q^T, X_0= Q\Sigma_0^{\frac{1}{2}}V_0^T, \forall{}Q\in\Re^{p\times{}r_0}, Q^TQ = \Id.
\end{align*}
Then we have the following:
\begin{itemize}
\item[1.]If $L_0$ is $\Omega/\Omega^T$-isomeric then the exact solution, denoted as $(A_0, X_0)$, is a critical point to the problem in~\eqref{eq:isodp:f}.
\item[2.]If $L_0$ is $\Omega/\Omega^T$-isomeric, $\gamma_{\Omega,\Omega^T}(L_0)>0.5$ and $p=r_0$ then $(A_0, X_0)$ is a local minimum to the problem in~\eqref{eq:isodp:f}, and the local optimality is strict while ignoring the differences among the exact solutions that equally recover $L_0$.
\end{itemize}
\end{theo}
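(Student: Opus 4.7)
The plan is to attack both parts through the Lagrangian
\[ \mathcal{L}(A,X,\Lambda) \;=\; \tfrac{1}{2}\bigl(\|A\|_F^2+\|X\|_F^2\bigr) \;-\; \bigl\langle \mathcal{P}_\Omega(\Lambda),\, AX-L_0\bigr\rangle, \]
where $\mathcal{P}_\Omega$ zeroes out entries outside $\Omega$. The stationarity conditions are $A = \mathcal{P}_\Omega(\Lambda)X^T$ and $X = A^T\mathcal{P}_\Omega(\Lambda)$. Substituting $A_0 = U_0\Sigma_0^{\frac{1}{2}}Q^T$ and $X_0 = Q\Sigma_0^{\frac{1}{2}}V_0^T$ and cancelling $Q\Sigma_0^{\frac{1}{2}}$ through $Q^TQ = \Id$, Part~1 reduces to producing a matrix $Z := \mathcal{P}_\Omega(\Lambda)$ supported on $\Omega$ satisfying
\[ ZV_0 = U_0 \qquad \textrm{and} \qquad Z^TU_0 = V_0. \]

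Rowwise, the first equation reads $[Z]_{i,\Omega_i}[V_0]_{\Omega_i,:}=[U_0]_{i,:}$, which is individually solvable because $\Omega^T$-isomerism of $L_0^T$ forces $[V_0]_{\Omega_i,:}$ to have full column rank $r_0$; the symmetric column equation $[Z]_{\Omega^j,j}^T[U_0]_{\Omega^j,:}=[V_0]_{j,:}$ is solvable by $\Omega$-isomerism. The delicate step---and main obstacle in Part~1---is to exhibit a single $Z$ simultaneously feasible for both systems. I would formalize this through the linear map
\[ T: \{Z\in\Re^{m\times n}: Z|_{\Omega^c}=0\}\to\Re^{m\times r_0}\times\Re^{n\times r_0},\quad Z\mapsto(ZV_0,\,Z^TU_0), \]
whose adjoint is $T^*(Y_1,Y_2)=\mathcal{P}_\Omega(Y_1V_0^T+U_0Y_2^T)$, and apply the Fredholm alternative: $(U_0,V_0)\in\mathrm{range}(T)$ iff every $(Y_1,Y_2)\in\ker(T^*)$ satisfies $\langle U_0,Y_1\rangle+\langle V_0,Y_2\rangle=0$. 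I would verify this implication by inspecting, for each column $j$, the identity $[U_0]_{\Omega^j,:}[Y_2]_{j,:}^T=-[Y_1]_{\Omega^j,:}[V_0]_{j,:}^T$, left-inverting $[U_0]_{\Omega^j,:}$ via $\Omega$-isomerism, then summing and collapsing through the symmetric rowwise identity via $\Omega^T$-isomerism.

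For Part~2, with $\Lambda$ in hand I would Taylor-expand the objective at $(A_0+\delta A,X_0+\delta X)$. Feasibility gives $\mathcal{P}_\Omega(A_0\delta X+\delta AX_0)=-\mathcal{P}_\Omega(\delta A\,\delta X)$, and rewriting the first-order term $\langle A_0,\delta A\rangle+\langle X_0,\delta X\rangle$ through the stationarity identities collapses it into $-\langle\mathcal{P}_\Omega(\Lambda),\mathcal{P}_\Omega(\delta A\,\delta X)\rangle$, yielding the clean expression
\[ \Delta f \;=\; \tfrac{1}{2}\bigl(\|\delta A\|_F^2+\|\delta X\|_F^2\bigr)\;-\;\bigl\langle\mathcal{P}_\Omega(\Lambda),\,\delta A\,\delta X\bigr\rangle. \]
The operator/nuclear-norm chain $|\langle\mathcal{P}_\Omega(\Lambda),\delta A\,\delta X\rangle|\le\|\mathcal{P}_\Omega(\Lambda)\|\|\delta A\|_F\|\delta X\|_F\le\tfrac12\|\mathcal{P}_\Omega(\Lambda)\|(\|\delta A\|_F^2+\|\delta X\|_F^2)$ then gives $\Delta f\ge\tfrac{1}{2}(1-\|\mathcal{P}_\Omega(\Lambda)\|)(\|\delta A\|_F^2+\|\delta X\|_F^2)$, and the local-minimum conclusion follows once $\Lambda$ is selected with $\|\mathcal{P}_\Omega(\Lambda)\|<1$.

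Finally I would select $\Lambda$ to make $\|\mathcal{P}_\Omega(\Lambda)\|<1$ using $\gamma_{\Omega,\Omega^T}(L_0)>0.5$. The key algebraic identity is $L_0([L_0]_{\Omega^j,:})^+=U_0[U_0]_{\Omega^j,:}^+$ (obtained by substituting the SVD of $L_0$ and using that $\Sigma_0$ is invertible and $V_0$ has orthonormal columns); together with the fact that left-multiplication by the orthonormal $U_0$ preserves operator norm, this gives $\gamma_{\Omega^j}(L_0)=\sigma_{\min}^2([U_0]_{\Omega^j,:})$, and analogously $\gamma_{\Omega_i}(L_0^T)=\sigma_{\min}^2([V_0]_{\Omega_i,:})$. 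Picking the minimum-norm $Z$ from Part~1, whose columnwise components are controlled by $([U_0]_{\Omega^j,:}^T)^+[V_0]_{j,:}^T$ and whose rowwise components by $[U_0]_{i,:}([V_0]_{\Omega_i,:})^+$, then converts the threshold $\gamma_{\Omega,\Omega^T}(L_0)>0.5$ directly into $\|\mathcal{P}_\Omega(\Lambda)\|<1$. For strict optimality modulo equivalence, observe that when $p=r_0$ the full family $\{(A_0R,R^TX_0):R\in\Re^{r_0\times r_0},\,R^TR=\Id\}$ is an $r_0(r_0-1)/2$-dimensional orbit along which both feasibility and objective are invariant; the bound $\Delta f>0$ then holds on every nonzero perturbation transverse to this orbit, establishing the claimed strict local minimum up to the exact-solution equivalence.
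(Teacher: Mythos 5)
Your plan for Part~1 targets a statement that is stronger than the theorem and false under its hypothesis. A stationary point in your Lagrangian sense requires a \emph{single} multiplier $Z=\mathcal{P}_{\Omega}(\Lambda)$ satisfying both $ZV_0=U_0$ and $Z^TU_0=V_0$, whereas isomerism only guarantees that each system is solvable \emph{separately}. That separate solvability is exactly what the paper proves: via Lemma~\ref{lem:critical:uinorm}, $X_0=A_0^+L_0$ minimizes its subproblem with $A$ frozen at $A_0$ and $A_0=L_0X_0^+$ minimizes its subproblem with $X$ frozen at $X_0$ --- two subproblems, two a priori different multipliers --- and ``critical point'' is meant in this block-coordinate sense. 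Your Fredholm condition genuinely fails: in the paper's own example~\eqref{eq:example:2}, $\Omega=\{(1,1),(2,2)\}$ forces $Z$ to be diagonal, so $ZV_0=U_0$ gives $z_j=u_j/v_j$ while $Z^TU_0=V_0$ gives $z_j=v_j/u_j$, consistent only if $u_j^2=v_j^2$; there $u_1^2=100/181\neq 81/181=v_1^2$. The same happens in example~\eqref{eq:example:3} for any $\alpha\neq\sqrt{2}$, which is $\Omega/\Omega^T$-isomeric. So no joint certificate exists in cases where Part~1 nonetheless asserts criticality; your route cannot reach the claim.

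Part~2 has a second, independent obstruction. Even granting a joint $Z$, your concluding bound $\Delta f\geq\tfrac12(1-\|\mathcal{P}_{\Omega}(\Lambda)\|)(\|\delta A\|_F^2+\|\delta X\|_F^2)$ cannot be activated when $r_0\geq2$: along the orbit $(A_0R,R^TX_0)$, $R^TR=\Id$, which you yourself invoke for the strictness clause, the objective is exactly constant while $(\delta A,\delta X)\neq(0,0)$, so your inequality forces $\|\mathcal{P}_{\Omega}(\Lambda)\|\geq1$ and becomes vacuous; moreover, were it available it would certify a \emph{global} minimum over all feasible points, far more than $\gamma_{\Omega,\Omega^T}(L_0)>0.5$ can deliver. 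The paper's argument is structurally different and avoids both problems: for any feasible $(A_0+\Delta_0,X_0+E_0)$ in a small neighborhood, the feasibility identity $\mathcal{P}_{\Omega}(A_0E_0+\Delta_0X_0+\Delta_0E_0)=0$ combined with $\|(\mathcal{P}_{U_0}\mathcal{P}_{\Omega}\mathcal{P}_{U_0})^{-1}\mathcal{P}_{U_0}\mathcal{P}_{\Omega}\mathcal{P}_{U_0}^\bot\|<1$ (Lemma~\ref{lem:critical:optnorm:big}, using $\gamma_{\Omega,\Omega^T}(L_0)>0.5$) forces $E_0\in\mathcal{P}_{V_0}$ and hence $AX=L_0$ exactly; then $\tfrac12(\|A\|_F^2+\|X\|_F^2)\geq\|AX\|_*=\|L_0\|_*=\tfrac12(\|A_0\|_F^2+\|X_0\|_F^2)$ closes the argument and is tight precisely on the equivalence class of exact factorizations, which is where the ``strict modulo equivalence'' clause comes from. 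To repair your proof you would need to abandon the dual-certificate scheme in favor of an argument of this ``local feasibility forces exact recovery'' type.
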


The condition of $\gamma_{\Omega,\Omega^T}(L_0)>0.5$, roughly, demands that more than half of the information in $L_0$ is observed. Unless some extra assumptions are imposed, this condition is not reducible, because counterexamples do exist when $\gamma_{\Omega,\Omega^T}(L_0)<0.5$. Consider a concrete case with
\begin{align}\label{eq:example:3}
\Omega = \{(1, 1), (2, 2)\} \textrm{ and } L_0 =\left[\begin{array}{cc}
1 &\sqrt{\alpha^2-1}\\
\frac{1}{\sqrt{\alpha^2-1}} &1
\end{array}\right],
\end{align}
where $\alpha>\sqrt{2}$. Then it can be verified that $L_0$ is $\Omega/\Omega^T$-isomeric. Via some calculations, we have (assume $p=r_0$)
\begin{align*}
&\gamma_{\Omega,\Omega^T}(L_0) = \min(1-\frac{1}{\alpha^2},\frac{1}{\alpha^2})=\frac{1}{\alpha^2} < 0.5,\\
&A_0 = \left[\begin{array}{c}
(\alpha^2-1)^{\frac{1}{4}}\\
\frac{1}{(\alpha^2-1)^{\frac{1}{4}}}\\
\end{array}\right]\textrm{ and } X_0 = \left[\frac{1}{(\alpha^2-1)^{\frac{1}{4}}}, (\alpha^2-1)^{\frac{1}{4}}\right].
\end{align*}
Now, construct
\begin{align*}
&A_{\epsilon} = \left[\begin{array}{c}
\frac{(\alpha^2-1)^{\frac{1}{4}}}{1+\epsilon}\\
1/(\alpha^2-1)^{\frac{1}{4}}\\
\end{array}\right]\textrm{ and } X_{\epsilon} = \left[\frac{1+\epsilon}{(\alpha^2-1)^{\frac{1}{4}}}, (\alpha^2-1)^{\frac{1}{4}}\right],
\end{align*}
where $\epsilon>0$. It is easy to see that $(A_{\epsilon},X_{\epsilon})$ is a feasible solution to~\eqref{eq:isodp:f}. However, as long as $0<\epsilon<\sqrt{\alpha^2-1}-1$, it can be verified that
\begin{align*}
\|A_{\epsilon}\|_F^2 + \|X_{\epsilon}\|_F^2 < \|A_0\|_F^2 + \|X_0\|_F^2,
\end{align*}
which implies that $(A_0,X_0)$ is not a local minimum to~\eqref{eq:isodp:f}. In fact, for the particular example shown in~\eqref{eq:example:3}, it can be proven that a global minimum to~\eqref{eq:isodp:f} is given by $(A_*=[1 ;1], X_*=[1,1])$, which cannot correctly reconstruct $L_0$.
\subsection{Isomeric Dictionary Pursuit}
Theorem~\ref{thm:isodp:f} illustrates that program~\eqref{eq:isodp:f} relies on the assumption of $p=\rank{L_0}$. This is consistent with the widely observed phenomenon that program~\eqref{eq:isodp:f} may not work well while the parameter $p$ is far from the true rank of $L_0$. To overcome this drawback, again, we recall Theorem~\ref{thm:fnorm}. Notice, that the $\Omega$-isomeric condition imposed on the dictionary matrix $A$ requires that
\begin{align*}
\rank{A}\leq|\Omega^j|,\forall{}j=1,\cdots,n.
\end{align*}
This, together with the condition of $L_0\in\mathrm{span}\{A\}$, motivates us to combine the formulation~\eqref{eq:fnorm} with the popular idea of nuclear norm minimization, resulting in a bilinear program termed IsoDP, which estimates both $A$ and $X$ by minimizing a mixture of the nuclear and Frobenius norms:
\begin{align}\label{eq:isodp}
\hspace{-0.05in}\min_{A,X}\norm{A}_*\hspace{-0.03in}+\hspace{-0.03in}\frac{1}{2}\norm{X}_F^2,\textrm{ s.t. }[AX]_{ij} \hspace{-0.03in}= \hspace{-0.03in} [L_0]_{ij},\hspace{-0.02in}\forall{}(i,j)\hspace{-0.02in}\in\hspace{-0.02in}\Omega,
\end{align}
where $A\in\Re^{m\times{}p}$ and $X\in\Re^{p\times{}n}$. The above formula can be also derived from the framework of Schatten quasi-norm minimization~\cite{rahul:jlmr:2010,Shang:2016:SAT,xu:2017:aai}. It has been proven in~\cite{Shang:2016:SAT,xu:2017:aai} that, for any rank-$r$ matrix $L\in\Re^{m\times{}n}$ with singular values $\sigma_1,\cdots,\sigma_r$, the following holds:
\begin{align}\label{eq:snorm}
\frac{1}{q}\|L\|_{q}^q = \min_{A,X}\frac{1}{q_1} \|A\|_{q_1}^{q_1} + \frac{1}{q_2}\|X\|_{q_2}^{q_2}, \textrm{ s.t. } AX = L,
\end{align}
as long as $p\geq{}r$ and $1/q = 1/q_1+1/q_2$ ($q,q_1,q_2>0$), where $\|L\|_q = (\sum_{i=1}^r\sigma_i^q)^{1/q}$ is the Schatten-$q$ norm. In that sense, the IsoDP program~\eqref{eq:isodp} is related to the following Schatten-$q$ quasi-norm minimization problem with $q = 2/3$:
\begin{align}\label{eq:stmin}
\min_{L} \frac{3}{2}\|L\|_{2/3}^{2/3} ,\quad\textrm{s.t.}\quad{}[L]_{ij} = [L_0]_{ij},\forall{}(i,j)\in\Omega.
\end{align}
Nevertheless, programs~\eqref{eq:stmin} and~\eqref{eq:isodp} are not equivalent to each other; this is obvious if $p<m$ (assume $m\leq{}n$). In fact, even when $p\geq{}m$, the conclusion~\eqref{eq:snorm} only implies that the global minima of~\eqref{eq:stmin} and~\eqref{eq:isodp} are equivalent, but their local minima and critical points could be different. More precisely, any local minimum to~\eqref{eq:stmin} certainly corresponds to a local minimum to~\eqref{eq:isodp}, but not vice versa\footnote{Suppose that $L_1$ is a local minimum to the problem in~\eqref{eq:stmin}. Let $(A_1,X_1) = \arg\min_{A,X} \norm{A}_*+0.5\norm{X}_F^2$, s.t. $AX=L_1$. Then $(A_1,X_1)$ has to be a local minimum to~\eqref{eq:isodp}. This can be proven by the method of reduction to absurdity. Assume that $(A_1,X_1)$ is not a local minimum to~\eqref{eq:isodp}. Then there exists some feasible solution, denoted as $(A_2, X_2)$, that is arbitrarily close to $(A_1, X_1)$ and satisfies $\norm{A_2}_*+0.5\norm{X_2}_F^2 < \norm{A_1}_*+0.5\norm{X_1}_F^2$. Taking $L_2=A_2X_2$, we have that $L_2$ is arbitrarily close to $L_1$ and $\frac{3}{2}\|L_2\|_{2/3}^{2/3}\leq\norm{A_2}_*+0.5\norm{X_2}_F^2 < \norm{A_1}_*+0.5\norm{X_1}_F^2=\frac{3}{2}\|L_1\|_{2/3}^{2/3}$, which contradicts the premise that $L_1$ is a local minimum to~\eqref{eq:stmin}. So, a local minimum to~\eqref{eq:stmin} also gives a local minimum to~\eqref{eq:isodp}. But the converge of this statement may not be true, and~\eqref{eq:isodp} might have more local minima than~\eqref{eq:stmin}.}. For the same reason, the bilinear program~\eqref{eq:isodp:f} is not equivalent to the convex program~\eqref{eq:numin}.

Regarding the recovery performance of the IsoDP program~\eqref{eq:isodp}, we establish the following theorem that reproduces Theorem~\ref{thm:isodp:f} without the assumption of $p=r_0$.
\begin{theo}\label{thm:isodp}
Let $L_0\in\Re^{m\times{}n}$ and $\Omega\subseteq\{1,\cdots,m\}\times\{1,\cdots,n\}$. Denote the rank and the SVD of $L_0$ as $r_0$ and $U_0\Sigma_0V_0^T$, respectively. Define
\begin{align*}
&A_0 = U_0\Sigma_0^{\frac{2}{3}}Q^T, X_0= Q\Sigma_0^{\frac{1}{3}}V_0^T,\forall{}Q\in\Re^{p\times{}r_0}, Q^TQ = \Id.
\end{align*}
Then we have the following:
\begin{itemize}
\item[1.]If $L_0$ is $\Omega/\Omega^T$-isomeric then the exact solution $(A_0, X_0)$ is a critical point to the problem in~\eqref{eq:isodp}.
\item[2.]If $L_0$ is $\Omega/\Omega^T$-isomeric and $\gamma_{\Omega,\Omega^T}(L_0)>0.5$ then $(A_0, X_0)$ is a local minimum to the problem in~\eqref{eq:isodp}, and the local optimality is strict while ignoring the differences among the exact solutions that equally recover $L_0$.
\end{itemize}
\end{theo}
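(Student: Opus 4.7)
My plan is to (i) write the KKT conditions for the bilinear IsoDP program, (ii) reduce them to the existence of a sparse dual certificate $Y$ (supported on $\Omega$) whose column-space and row-space projections onto $U_0$ and $V_0$ take prescribed values, (iii) construct such a $Y$ from $\Omega/\Omega^T$-isomerism, and (iv) lift the critical point to a strict local minimum via a subgradient-plus-Taylor expansion, using the hypothesis $\gamma_{\Omega,\Omega^T}(L_0)>0.5$ to absorb the second-order cross term.

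\textbf{Critical point (Part 1).} Introduce a Lagrange multiplier $Y\in\Re^{m\times n}$ with $P_\Omega(Y)=Y$ for the constraint $P_\Omega(AX)=P_\Omega(L_0)$. The first-order KKT conditions at $(A_0,X_0)$ read
\begin{align*}
X_0 + A_0^T Y = 0, \qquad -YX_0^T \in \partial\|A_0\|_*.
\end{align*}
Substituting $A_0 = U_0\Sigma_0^{2/3}Q^T$, $X_0 = Q\Sigma_0^{1/3}V_0^T$ and $Q^TQ=\Id$, the first equation reduces (by left-multiplication by $Q^T$) to $U_0^T Y = -\Sigma_0^{-1/3}V_0^T$; the subgradient inclusion (by right-multiplication by $Q$) forces the off-tangent residual $W$ to vanish and delivers $YV_0 = -U_0\Sigma_0^{-1/3}$. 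These two identities are mutually consistent on the overlap $U_0^T Y V_0 = -\Sigma_0^{-1/3}$. I construct $Y$ column-by-column: $\Omega$-isomerism makes $[U_0]_{\Omega^j,:}$ full column rank, so the column system $[U_0]_{\Omega^j,:}^T Y_{\Omega^j,j} = -\Sigma_0^{-1/3}[V_0]_{j,:}^T$ has an $(|\Omega^j|-r_0)$-parameter affine family of solutions; these free parameters are then pinned down, using $\Omega^T$-isomerism of $[V_0]_{\Omega_i,:}$, to simultaneously satisfy the row equations $Y_{i,\Omega_i}[V_0]_{\Omega_i,:} = -[U_0]_{i,:}\Sigma_0^{-1/3}$.

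\textbf{Local minimum (Part 2).} For a feasible perturbation $(A_0+\Delta_A, X_0+\Delta_X)$, combine the nuclear-norm subgradient inequality $\|A_0+\Delta_A\|_* \geq \|A_0\|_* + \langle U_0 Q^T, \Delta_A\rangle + \|(\Id-U_0U_0^T)\Delta_A(\Id-QQ^T)\|_*$ with the Taylor expansion $\tfrac{1}{2}\|X_0+\Delta_X\|_F^2 = \tfrac{1}{2}\|X_0\|_F^2 + \langle X_0,\Delta_X\rangle + \tfrac{1}{2}\|\Delta_X\|_F^2$. Substituting the KKT identities $U_0 Q^T = -YX_0^T$ and $X_0 = -A_0^TY$, and using feasibility $P_\Omega(\Delta_A X_0 + A_0\Delta_X + \Delta_A\Delta_X)=0$ to collapse the linear terms into $\langle Y,\Delta_A\Delta_X\rangle$, the objective gap is lower-bounded by
\begin{align*}
\langle Y,\Delta_A\Delta_X\rangle + \|(\Id-U_0U_0^T)\Delta_A(\Id-QQ^T)\|_* + \tfrac{1}{2}\|\Delta_X\|_F^2.
\end{align*}
The cross term satisfies $|\langle Y,\Delta_A\Delta_X\rangle|\leq \|Y\|\,\|\Delta_A\|_F\,\|\Delta_X\|_F$, and the operator norm of the constructed certificate is controlled by the inverse of $\gamma_{\Omega,\Omega^T}(L_0)$. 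The hypothesis $\gamma>0.5$ then enables an AM-GM estimate (with the off-tangent term absorbing the portion of $\Delta_A$ in the orthogonal complement) to dominate the cross term and produce a non-negative lower bound. Strictness follows by tracing equality, which forces $\Delta_A$ into the tangent of $A_0$ and pins the remaining perturbation to the orthogonal gauge orbit $Q\mapsto QR$ that parameterizes the equivalent exact solutions.

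\textbf{Main obstacle.} The hardest step is the joint construction of $Y$ in Part~1: the column-wise and row-wise certificate equations do not decouple, so their simultaneous consistency on the sparse pattern $\Omega$ must be extracted from $\Omega/\Omega^T$-isomerism together with the SVD block structure of $L_0$. A secondary but delicate point is the sharpness of the threshold $\tfrac{1}{2}$ in Part~2: retaining (rather than discarding) the off-tangent nuclear-norm term in the subgradient inequality is essential, as it precisely absorbs the component of the cross-term bound that cannot be handled by $\tfrac{1}{2}\|\Delta_X\|_F^2$ alone.
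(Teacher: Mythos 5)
Your Part~1 strategy has a fatal flaw, not just a hard step. The joint KKT certificate you seek is a single $Y$ with $\mathcal{P}_{\Omega}(Y)=Y$ satisfying both $U_0^TY=-\Sigma_0^{-1/3}V_0^T$ and $YV_0=-U_0\Sigma_0^{-1/3}$, i.e.\ $\mathcal{P}_{T_0}(Y)=-U_0\Sigma_0^{-1/3}V_0^T$. Solvability of this system requires (essentially) invertibility of $\mathcal{P}_{T_0}\mathcal{P}_{\Omega}\mathcal{P}_{T_0}$, which $\Omega/\Omega^T$-isomerism alone does not provide — it only gives invertibility of $\mathcal{P}_{U_0}\mathcal{P}_{\Omega}\mathcal{P}_{U_0}$ and $\mathcal{P}_{V_0}\mathcal{P}_{\Omega}\mathcal{P}_{V_0}$ separately. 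The paper's own example~\eqref{eq:example:2} ($\Omega=\{(1,1),(2,2)\}$, $L_0=[1,\tfrac{10}{9};\tfrac{9}{10},1]$) is $\Omega/\Omega^T$-isomeric, yet for a diagonal $Y$ the column equations force $y_2=\tfrac{100}{81}y_1$ while the row equations force $y_2=\tfrac{81}{100}y_1$ with $y_1\neq0$: no joint multiplier exists. So the statement you are trying to prove in Part~1 is strictly stronger than the theorem and is false under its hypotheses. The paper sidesteps this entirely by interpreting ``critical point'' block-coordinate-wise: using Lemma~\ref{lem:critical:uinorm} it shows $X_0=A_0^+L_0$ is the unique minimizer of $\tfrac12\|X\|_F^2$ with $A_0$ fixed and $A_0=L_0X_0^+$ is the unique minimizer of $\|A\|_*$ with $X_0$ fixed — no certificate is ever constructed.

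For Part~2, even granting the certificate (which does exist once $\gamma_{\Omega,\Omega^T}(L_0)>0.5$, since then $\|\mathcal{P}_{T_0}\mathcal{P}_{\Omega}^\bot\mathcal{P}_{T_0}\|<1$), your absorption step does not close. After AM--GM the cross term leaves a residual of order $\tfrac{\|Y\|^2}{2}\|\Delta_A\|_F^2$, and the retained first-order term $\|(\Id-U_0U_0^T)\Delta_A(\Id-QQ^T)\|_*$ dominates only the off-tangent component of $\Delta_A$; the tangent-space component has no quadratic term to absorb it, and there is no mechanism by which $\gamma>0.5$ makes $\|Y\|$ small enough to fix this. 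The threshold $0.5$ plays a different role in the paper: it makes $\|\mathcal{P}_1\|=\sqrt{1/\gamma_{\Omega}(L_0)-1}<1$ and $\|\mathcal{P}_2\|<1$ (Lemma~\ref{lem:critical:optnorm:big}), and the proof splits into two cases on the ratio of $\|\mathcal{P}_{U_0}^\bot(\Delta_0)P_{\bar Q}^\bot\|_*$ to $\|\mathcal{P}_{U_0}^\bot(\Delta_0)P_{\bar Q}\|_*$, running a contraction argument that forces any nearby feasible $(A,X)$ to satisfy $AX=L_0$, after which the global variational identity $\|A\|_*+\tfrac12\|X\|_F^2\geq\tfrac32\trace{\Sigma_0^{2/3}}$ of Lemma~\ref{lem:basic:ax} finishes the argument. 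You would need to reorganize Part~2 along these lines (or supply a genuinely complete quadratic-form argument handling the tangent component of $\Delta_A$) for the proof to go through.
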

Due to the advantages of the nuclear norm, the above theorem does not require the assumption of $p=\rank{L_0}$ any more. Empirically, unlike~\eqref{eq:isodp:f}, which exhibits superior performance only if $p$ is close to $\rank{L_0}$ and the initial solution is chosen carefully, IsoDP can work well by simply choosing $p=m$ and using $A=\Id$ as the initial solution.
\subsection{Optimization Algorithm}\label{sec:opt}
Considering the fact that the observations in reality are often contaminated by noise, we shall investigate instead the following bilinear program that can also approximately solve the problem in~\eqref{eq:isodp}:
\begin{align}\label{eq:isodp:noisy}
&\hspace{-0.05in}\min_{A,X} \lambda(\norm{A}_*\hspace{-0.02in}+\hspace{-0.02in}\frac{1}{2}\norm{X}_F^2)\hspace{-0.02in}+\hspace{-0.02in}\frac{1}{2}\sum_{(i,j)\in\Omega}([AX]_{ij}\hspace{-0.02in}-\hspace{-0.02in}[L_0]_{ij})^2,
\end{align}
where $A\in\Re^{m\times{}m}$ (i.e., $p=m$), $X\in\Re^{m\times{}n}$ and $\lambda>0$ is taken as a parameter.

The optimization problem in~\eqref{eq:isodp:noisy} can be solved by any of the many first-order methods established in the literatures. For the sake of simplicity, we choose to use the proximal methods by~\cite{proximal:2009:mp,Bolte2014}. Let $(A_t,X_t)$ be the solution estimated at the $t$th iteration. Define a function $g_t(\cdot)$ as
\begin{align*}
g_t(A) = \frac{1}{2}\sum_{(i,j)\in\Omega}([AX_{t+1}]_{ij}-[L_0]_{ij})^2.
\end{align*}
Then the solution to~\eqref{eq:isodp:noisy} is updated via iterating the following two procedures:
\begin{align}\label{eq:proximal}
&\hspace{-0.05in}X_{t+1}\hspace{-0.02in}= \hspace{-0.02in}\arg\min_{X} \frac{\lambda}{2}\|X\|_F^2\hspace{-0.02in}+\hspace{-0.02in}\frac{1}{2}\sum_{(i,j)\in\Omega}([A_tX]_{ij}-[L_0]_{ij})^2,\\\nonumber
&\hspace{-0.05in}A_{t+1}\hspace{-0.02in}=\hspace{-0.02in}  \arg\min_{A} \frac{\lambda}{\mu_t}\|A\|_*+\frac{1}{2}\|A - (A_t-\frac{\partial{}g_t(A_t)}{\mu_t})\|_F^2,
\end{align}
where $\mu_t>0$ is a penalty parameter and $\partial{}g_t(A_t)$ is the gradient of the function $g_t(A)$ at $A=A_t$. According to~\cite{proximal:2009:mp}, the penalty parameter $\mu_t$ could be set as $\mu_t = \|X_{t+1}\|^2$. The two optimization problems in~\eqref{eq:proximal} both have closed-form solutions. To be more precise, the $X$-subproblem is a least square regression problem:
\begin{align}\label{eq:x-sub}
[X_{t+1}]_{:,j} = (A_j^TA_j+\lambda\Id)^{-1}A_j^Ty_j, \forall{1\leq{}j\leq{}n},
\end{align}
where $A_j = [A_t]_{\Omega^j,:}$ and $y_j=[L_0]_{\Omega^j,j}$. The $A$-subproblem is solved by Singular Value Thresholding (SVT)~\cite{svt:cai:2008}:
\begin{align}\label{eq:a-sub}
A_{t+1}=U\mathcal{H}_{\lambda/\mu_t}(\Sigma)V^T,
\end{align}
where $U\Sigma{}V^T$ is the SVD of $A_t-\partial{}g_t(A_t)/\mu_t$ and $\mathcal{H}_{\lambda/\mu_t}(\cdot)$ denotes the shrinkage operator with parameter $\lambda/\mu_t$.

The whole optimization procedure is also summarized in Algorithm~\ref{alg1}. Without loss of generality, assume that $m\leq{}n$. Then the computational complexity of each iteration in Algorithm~\ref{alg1} is $O(m^2n)+O(m^3)$.
\begin{algorithm}[htb]
\caption{Solving problem~\eqref{eq:isodp:noisy} by alternating proximal}
\label{alg1}
\begin{algorithmic}[1]
\STATE \textbf{Input}: $\{[L_0]_{ij} |(i,j)\in\Omega\}$.
\STATE \textbf{Output}: the dictionary $A$ and the representation $X$.
\STATE \textbf{Initialization}: $A=\Id$.
\REPEAT
\STATE Update the representation matrix $X$ by~\eqref{eq:x-sub}.
\STATE Update the dictionary matrix $A$ by~\eqref{eq:a-sub}.
\UNTIL{convergence}
\end{algorithmic}
\end{algorithm}
\section{Mathematical Proofs}\label{sec:proof}
This section shows the detailed proofs of the theorems proposed in this work.
\subsection{Notations}
Besides of the notations presented in Section~\ref{sec:notation}, there are some other notations used throughout the proofs. Letters $U$, $V$, $\Omega$ and their variants (complements, subscripts, etc.) are reserved for left singular vectors, right singular vectors and support set, respectively. For convenience, we shall abuse the notation $U$ (resp. $V$) to denote the linear space spanned by the columns of $U$ (resp. $V$), i.e., the column space (resp. row space). The orthogonal projection onto the column space $U$, is denoted by $\mathcal{P}_U$ and given by $\mathcal{P}_U(M)=UU^TM$, and similarly for the row space $\mathcal{P}_V(M)=MVV^T$. Also, we denote by $\mathcal{P}_T$ the projection to the sum of the column space $U$ and the row space $V$, i.e., $\mathcal{P}_T(\cdot) = UU^T(\cdot)+(\cdot)VV^T-UU^T(\cdot)VV^T$. The same notation is also used to represent a subspace of matrices (i.e., the image of an operator), e.g., we say that $M\in\mathcal{P}_{U}$ for any matrix $M$ which satisfies $\mathcal{P}_{U}(M)=M$. The symbol $\mathcal{P}_{\Omega}$ denotes the orthogonal projection onto $\Omega$:
\begin{align*}
[\mathcal{P}_\Omega(M)]_{ij}=\left\{\begin{array}{cc}
[M]_{ij},&\text{if }(i,j)\in\Omega,\\
0, &\text{otherwise.}\\
\end{array}\right.
\end{align*}
Similarly, the symbol $\mathcal{P}_{\Omega}^{\bot}$ denotes the orthogonal projection onto the complement space of $\Omega$; that is, $\mathcal{P}_{\Omega}+\mathcal{P}_{\Omega}^{\bot}=\mathcal{I}$, where $\mathcal{I}$ is the identity operator.
\vspace{-0.1in}\subsection{Basic Lemmas}
While its definitions are associated with a certain matrix, the isomeric condition is actually characterizing some properties of a space, as shown in the lemma below.
\begin{lemm}\label{lem:basic:L02U}
Let $L_0\in\Re^{m\times{}n}$ and $\Omega\subseteq\{1,\cdots,m\}\times\{1,\cdots,n\}$. Denote the SVD of $L_0$ as $U_0\Sigma_0V_0^T$. Then we have:
\begin{itemize}
\item[1.] $L_0$ is $\Omega$-isomeric iff $U_0$ is $\Omega$-isomeric.
\item[2.] $L_0^T$ is $\Omega^T$-isomeric iff $V_0$ is $\Omega^T$-isomeric.
\end{itemize}
\end{lemm}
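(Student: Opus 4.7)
The plan is to reduce both statements to the standard fact that right-multiplying by a full row-rank matrix preserves the rank of any factor on the left. Because the paper is using skinny SVD, we have $L_0 = U_0 \Sigma_0 V_0^T$ with $U_0 \in \Re^{m \times r_0}$, $V_0 \in \Re^{n \times r_0}$ both having orthonormal columns and $\Sigma_0 \in \Re^{r_0 \times r_0}$ invertible. The key observation is that selecting rows commutes with the factorization: $[L_0]_{\Omega^j,:} = [U_0]_{\Omega^j,:}\, \Sigma_0 V_0^T$, so everything boils down to a rank identity for this product.

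For part 1, I would first record that $\operatorname{rank}(L_0) = \operatorname{rank}(U_0) = r_0$. Next, setting $B = \Sigma_0 V_0^T \in \Re^{r_0 \times n}$, I note that $B$ has full row rank $r_0$ because $\Sigma_0$ is invertible and $V_0^T$ has rank $r_0$. Therefore $B B^+ = \Id_{r_0}$, which gives the two-sided inequality $\operatorname{rank}([U_0]_{\Omega^j,:}) = \operatorname{rank}([U_0]_{\Omega^j,:} B B^+) \le \operatorname{rank}([U_0]_{\Omega^j,:} B) \le \operatorname{rank}([U_0]_{\Omega^j,:})$, so equality holds. Combining this with the displayed factorization yields
\begin{align*}
\operatorname{rank}([L_0]_{\Omega^j,:}) \;=\; \operatorname{rank}([U_0]_{\Omega^j,:}) \quad \text{for every } j = 1,\dots,n.
\end{align*}
Since also $\operatorname{rank}(L_0) = \operatorname{rank}(U_0)$, the $\Omega$-isomeric condition $\operatorname{rank}([L_0]_{\Omega^j,:}) = \operatorname{rank}(L_0)$ is equivalent term-by-term to $\operatorname{rank}([U_0]_{\Omega^j,:}) = \operatorname{rank}(U_0)$, establishing the first equivalence.

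For part 2, I would invoke symmetry. Transposing gives $L_0^T = V_0 \Sigma_0 U_0^T$, which is exactly an SVD of $L_0^T$ with the roles of the left and right singular factors swapped. Applying the result of part 1 to the pair $(L_0^T, \Omega^T)$, with $V_0$ playing the role of the left singular factor, shows that $L_0^T$ is $\Omega^T$-isomeric iff $V_0$ is $\Omega^T$-isomeric, which is the desired claim.

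There is no real obstacle here; the one thing to be careful about is justifying that right-multiplication by $\Sigma_0 V_0^T$ preserves the rank of the \emph{sampled} submatrix $[U_0]_{\Omega^j,:}$ (and not only of $U_0$ itself), which is why I spell out the $BB^+ = \Id$ argument rather than appealing to $L_0 = U_0 B$ on its own. Once that step is in place, the lemma follows immediately and cleanly explains why the isomeric condition is a property of the column/row spaces of $L_0$ rather than of $L_0$ itself.
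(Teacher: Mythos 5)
Your proposal is correct and follows essentially the same route as the paper: both rest on the factorization $[L_0]_{\Omega^j,:} = [U_0]_{\Omega^j,:}\,\Sigma_0 V_0^T$ and the fact that right-multiplication by the full row-rank matrix $\Sigma_0 V_0^T$ preserves rank, with part 2 obtained by transposition. The only difference is that you spell out the $BB^+=\Id$ justification that the paper leaves implicit, which is a harmless (and slightly more careful) elaboration.
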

\begin{proof}
It can be manipulated that
\begin{align*}
[L_0]_{\Omega^j,:} = ([U_0]_{\Omega^j,:})\Sigma_0V_0^T, \forall{}j=1,\cdots, n.
\end{align*}
Since $\Sigma_0V_0^T$ is row-wisely full rank, we have
\begin{align*}
\rank{[L_0]_{\Omega^j,:}} = \rank{[U_0]_{\Omega^j,:}},\forall{}j=1,\cdots,n.
\end{align*}
As a consequence, $L_0$ is $\Omega$-isomeric is equivalent to $U_0$ is $\Omega$-isomeric. Similarly, the second claim is proven.
\end{proof}

The isomeric property is indeed subspace successive, as shown in the next lemma.
\begin{lemm}\label{lem:basic:subsucc}
Let $\Omega\subseteq\{1,\cdots,m\}\times\{1,\cdots,n\}$ and $U_0\in\Re^{m\times{}r}$ be the basis matrix of a subspace embedded in $\Re^m$. Suppose that $U$ is a subspace of $U_0$, i.e., $U = U_0U_0^TU$. If $U_0$ is $\Omega$-isomeric then $U$ is $\Omega$-isomeric as well.
\end{lemm}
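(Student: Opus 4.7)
The plan is to reduce everything to an elementary rank identity by using $U_0$ as a ``lift.'' First I would introduce the coordinate matrix $B = U_0^T U \in \Re^{r\times t}$ (where $t$ is the number of columns of $U$), so that the assumption $U = U_0 U_0^T U$ becomes $U = U_0 B$. Since $U_0$ is a basis matrix of an $r$-dimensional subspace, it has full column rank, and therefore $\rank{U} = \rank{U_0 B} = \rank{B}$.

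Next I would restrict to a fixed column index $j$ and note that row-selection commutes with right multiplication: $[U]_{\Omega^j,:} = [U_0]_{\Omega^j,:} \, B$. So on the one hand $\rank{[U]_{\Omega^j,:}} \leq \rank{B} = \rank{U}$ automatically. For the matching lower bound I would invoke the hypothesis: by Lemma~\ref{lem:basic:L02U}-style reasoning (or directly from the definition of $\Omega$-isomerism applied to $U_0$), we have $\rank{[U_0]_{\Omega^j,:}} = \rank{U_0} = r$, which means $[U_0]_{\Omega^j,:}$ is an $|\Omega^j|\times r$ matrix of full column rank and hence admits a left inverse $C$ with $C\,[U_0]_{\Omega^j,:} = \Id_r$.

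Multiplying on the left then gives $C\,[U]_{\Omega^j,:} = C\,[U_0]_{\Omega^j,:}\,B = B$, so $\rank{[U]_{\Omega^j,:}} \geq \rank{B} = \rank{U}$. Combining with the upper bound yields $\rank{[U]_{\Omega^j,:}} = \rank{U}$, and since $j$ was arbitrary, $U$ is $\Omega$-isomeric.

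I do not anticipate a serious obstacle; the only subtlety is interpretational, namely to pin down that ``$U_0$ is the basis matrix of a subspace'' means $U_0$ has full column rank (so that $[U_0]_{\Omega^j,:}$ being rank-$r$ is equivalent to having a left inverse), and that ``$U$ is a subspace of $U_0$'' is being used in the matrix sense $U = U_0 U_0^T U$, i.e., every column of $U$ lies in $\mathrm{span}\{U_0\}$. Once these conventions are fixed, the argument is just the rank identity for a product whose left factor has a left inverse.
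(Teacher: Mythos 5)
Your proof is correct and follows essentially the same route as the paper's: both rest on the identity $[U]_{\Omega^j,:} = [U_0]_{\Omega^j,:}\,U_0^TU$ together with the fact that left-multiplication by the full-column-rank matrix $[U_0]_{\Omega^j,:}$ (guaranteed by $\Omega$-isomerism of $U_0$) preserves rank. Your version merely makes explicit, via the left inverse $C$, the rank-preservation step that the paper asserts in one line.
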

\begin{proof}By $U = U_0U_0^TU$ and $U_0$ is $\Omega$-isomeric,
\begin{align*}
&\rank{[U]_{\Omega^j,:}} = \rank{([U_0]_{\Omega^j,:})U_0^TU}=\rank{U_0^TU}\\
&=\rank{U_0U_0^TU}=\rank{U}, \forall{}1\leq{}j\leq{}n.
\end{align*}
\end{proof}

The following lemma reveals the fact that the isomeric property is related to the invertibility of matrices.
\begin{lemm}\label{lem:basic:positive}
Let $\Omega\subseteq\{1,\cdots,m\}\times\{1,\cdots,n\}$ and $U_0\in\Re^{m\times{}r}$ be the basis matrix of a subspace of $\Re^m$. Denote by $u_i^T$ the $i$th row of $U_0$, i.e., $U_0 = [u_1^T;\cdots;u_m^T]$. Define $\delta_{ij}$ as
\begin{align}\label{eq:delta}
\delta_{ij}=\left\{\begin{array}{cc}
1,&\text{if }(i,j)\in\Omega,\\
0, &\text{otherwise.}\\
\end{array}\right.
\end{align}
Then the matrices, $\sum_{i=1}^{m}\delta_{ij}u_iu_i^T$, $\forall{}1\leq{}j\leq{}n$, are all invertible iff $U_0$ is $\Omega$-isomeric.
\end{lemm}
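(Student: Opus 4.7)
The plan is to rewrite the sum $\sum_{i=1}^{m}\delta_{ij}u_iu_i^T$ as a Gram matrix of the sampled submatrix $[U_0]_{\Omega^j,:}$ and then invoke the elementary fact that a Gram matrix is invertible iff the underlying matrix has full column rank.

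First I would note that, by the definition of $\delta_{ij}$ in \eqref{eq:delta}, only indices $i\in\Omega^j$ contribute to the sum, so
\begin{align*}
\sum_{i=1}^{m}\delta_{ij}u_iu_i^T \;=\; \sum_{i\in\Omega^j}u_iu_i^T \;=\; ([U_0]_{\Omega^j,:})^T [U_0]_{\Omega^j,:},
\end{align*}
which is an $r\times r$ symmetric positive semi-definite matrix. Standard linear algebra then gives that this Gram matrix is invertible iff $[U_0]_{\Omega^j,:}\in\Re^{|\Omega^j|\times r}$ has full column rank $r$, i.e., iff $\rank{[U_0]_{\Omega^j,:}}=r$.

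Next, since $U_0$ is the basis matrix of an $r$-dimensional subspace, $\rank{U_0}=r$. By Definition~\ref{def:iso:omg}, $U_0$ is $\Omega$-isomeric iff $\rank{[U_0]_{\Omega^j,:}}=\rank{U_0}=r$ for every $1\leq j\leq n$. Combining this with the Gram-matrix characterization above yields the chain
\begin{align*}
\sum_{i=1}^{m}\delta_{ij}u_iu_i^T \text{ invertible for all } j \;\Longleftrightarrow\; \rank{[U_0]_{\Omega^j,:}}=r \text{ for all } j \;\Longleftrightarrow\; U_0 \text{ is } \Omega\text{-isomeric},
\end{align*}
which is exactly the claim.

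There is no real obstacle here; the only point that needs a line of care is the Gram-matrix rank identity $\rank{A^TA}=\rank{A}$ (so one can upgrade ``positive semi-definite'' to ``positive definite'' precisely when $A$ has full column rank), but this is a textbook fact. The proof is thus short and purely computational once the sum is recognized as $([U_0]_{\Omega^j,:})^T[U_0]_{\Omega^j,:}$.
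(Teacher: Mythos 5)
Your proof is correct and follows essentially the same route as the paper's: both identify $\sum_{i=1}^{m}\delta_{ij}u_iu_i^T$ with the Gram matrix $([U_0]_{\Omega^j,:})^T([U_0]_{\Omega^j,:})$ and then use the standard equivalence between invertibility of a Gram matrix and full column rank of the underlying submatrix, i.e., $\rank{[U_0]_{\Omega^j,:}}=\rank{U_0}=r$. Your version is slightly more explicit about why full column rank coincides with the $\Omega$-isomeric condition, but the argument is the same.
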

\begin{proof} Note that
\begin{align*}
&([U_0]_{\Omega^j,:})^T([U_0]_{\Omega^j,:})=\sum_{i=1}^{m}(\delta_{ij})^2u_iu_i^T=\sum_{i=1}^{m}\delta_{ij}u_iu_i^T.
\end{align*}
Now, it is easy to see that the matrix $\sum_{i=1}^{m}\delta_{ij}u_iu_i^T$ is invertible is equivalent to the matrix $([U_0]_{\Omega^j,:})^T([U_0]_{\Omega^j,:})$ is positive definite, which is further equivalent to $\rank{[U_0]_{\Omega^j,:}}=\rank{U_0}$, $\forall{}j=1,\cdots,n$.
\end{proof}

The following lemma gives some insights to the relative condition number.
\begin{lemm}\label{lem:basic:rcn} Let $M\in\Re^{m\times{}l}$ and $\omega\subseteq\{1,\cdots,m\}$. Define $\{\delta_i\}_{i=1}^m$ with $\delta_i = 1$ if $i\in\omega$ and 0 otherwise. Define a dialog matrix $D\in\mathbb{R}^{m\times{}m}$ as $D=\diag{\delta_1,\delta_2,\cdots,\delta_m}$. Denote the SVD of $M$ as $U\Sigma{}V^T$. If $\rank{[M]_{\omega,:}} = \rank{M}$ then
\begin{align*}
\gamma_{\omega}(M) = \sigma_{min},
\end{align*}
where $\sigma_{min}$ is the the smallest singular value (or eigenvalue) of the matrix $U^TDU$.
\end{lemm}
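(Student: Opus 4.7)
The plan is to reformulate $[M]_{\omega,:}$ as a linear operation, exploit the rank condition to get a clean pseudo-inverse formula via the SVD, and then reduce the operator norm to the smallest eigenvalue of $U^TDU$.

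First, I would encode the row-selection by a ``selection matrix'' $S_\omega\in\Re^{|\omega|\times m}$ whose rows are the standard basis vectors indexed by $\omega$, so that $[M]_{\omega,:} = S_\omega M$ and, crucially, $S_\omega^T S_\omega = D$. Substituting the SVD $M=U\Sigma V^T$ gives $S_\omega M = (S_\omega U)\Sigma V^T$. The hypothesis $\rank{[M]_{\omega,:}}=\rank{M}=r$ forces $S_\omega U\in\Re^{|\omega|\times r}$ to have full column rank $r$, which makes $U^T D U = (S_\omega U)^T(S_\omega U)$ symmetric positive definite (and therefore invertible). This is exactly the algebraic content that allows the next step.

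Next I would compute $(S_\omega M)^+$ by using the standard factorization rule for pseudo-inverses of a product of a full-column-rank matrix, an invertible matrix, and a full-row-rank matrix. Since $V^TV=I_r$ and $\Sigma$ is invertible, this yields
\begin{align*}
(S_\omega M)^+ \;=\; V\,\Sigma^{-1}\,(U^T D U)^{-1}\,U^T S_\omega^T .
\end{align*}
Left-multiplying by $M=U\Sigma V^T$ and using $V^TV=I_r$ collapses the expression to
\begin{align*}
M\,(S_\omega M)^+ \;=\; U\,(U^TDU)^{-1}\,U^T S_\omega^T .
\end{align*}

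Finally, I would compute $\|M(S_\omega M)^+\|^2$ by forming the Gram matrix. Using $S_\omega^T S_\omega = D$ and the defining identity $(U^TDU)^{-1}(U^TDU)(U^TDU)^{-1}=(U^TDU)^{-1}$, one obtains
\begin{align*}
\bigl[M(S_\omega M)^+\bigr]\bigl[M(S_\omega M)^+\bigr]^T \;=\; U\,(U^TDU)^{-1}\,U^T .
\end{align*}
Since $U$ has orthonormal columns, the operator norm of $UAU^T$ equals the operator norm of the symmetric matrix $A$; thus the right-hand side has operator norm $1/\sigma_{\min}(U^TDU)$. Therefore $\|M(S_\omega M)^+\|^2 = 1/\sigma_{\min}(U^TDU)$, and the definition $\gamma_\omega(M)=1/\|M([M]_{\omega,:})^+\|^2$ delivers $\gamma_\omega(M)=\sigma_{\min}(U^TDU)$.

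The only subtle step is verifying the product formula for the pseudo-inverse of $(S_\omega U)\Sigma V^T$; this is where the isomeric hypothesis $\rank{[M]_{\omega,:}}=\rank{M}$ really pays off, because without the full-column-rank of $S_\omega U$ one cannot split the pseudo-inverse across the three factors and the proof collapses. The rest is bookkeeping with the SVD and the identity $S_\omega^TS_\omega=D$.
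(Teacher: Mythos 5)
Your proposal is correct and follows essentially the same route as the paper's proof: factor $[M]_{\omega,:}$ through the SVD, split the pseudo-inverse across the full-column-rank factor $S_\omega U$ (the paper uses $DU$ with the diagonal mask $D$ in place of your selection matrix $S_\omega$, relying on $D^2=D$ where you use $S_\omega^TS_\omega=D$), arrive at $M([M]_{\omega,:})^+(M([M]_{\omega,:})^+)^T=U(U^TDU)^{-1}U^T$, and read off the operator norm as $1/\sigma_{\min}(U^TDU)$. Your version is marginally more careful about the shape of the sampled submatrix and about justifying the product rule for the pseudo-inverse, but the substance is identical.
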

\begin{proof}
First note that $[M]_{\omega,:}$ can be equivalently written as $DU\Sigma{}V^T$. By the assumption of $\rank{[M]_{\omega,:}} = \rank{M}$, $DU$ is column-wisely full rank. Thus,
\begin{align*}
&M([M]_{\omega,:})^+ = U\Sigma{}V^T(DU\Sigma{}V^T)^+ = U\Sigma{}V^T(\Sigma{}V^T)^+(DU)^+\\
&=U(DU)^+=U(U^TDU)^{-1}U^TD,
\end{align*}
which gives that
\begin{align*}
&M([M]_{\omega,:})^+(M([M]_{\omega,:})^+)^T = U(U^TDU)^{-1}U^T.
\end{align*}
As a result, we have $\|M([M]_{\omega,:})^+\|^2 = 1/\sigma_{min}$, and thereby
\begin{align*}
\gamma_{\omega}(M) = 1/\|M([M]_{\omega,:})^+\|^2 = \sigma_{min}.
\end{align*}
\end{proof}

It has been proven in~\cite{siam_2010_minirank} that $\|L\|_*=\min_{A,X}\frac{1}{2}(\|A\|_F^2+\|X\|_F^2), \textrm{ s.t. }AX=L$. We have an analogous result, which has also been proven by~\cite{rahul:jlmr:2010,Shang:2016:SAT,xu:2017:aai}.
\begin{lemm}\label{lem:basic:ax} Let $L\in\Re^{m\times{}n}$ be a rank-$r$ matrix with $r\leq{}p$. Denote the SVD of $L$ as $U\Sigma{}V^T$. Then we have the following:
\begin{align*}
\frac{3}{2}\trace{\Sigma^{\frac{2}{3}}} = \min_{A\in\Re^{m\times{}p},X\in\Re^{p\times{}n}}\|A\|_*+\frac{1}{2}\|X\|_F^2, \textrm{ s.t. }AX = L,
\end{align*}
where $\trace{\cdot}$ is the trace of a square matrix.
\end{lemm}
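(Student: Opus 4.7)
The plan is to prove the claimed equality by establishing the two directions separately: an \emph{achievability} upper bound via an explicit factorization, and a matching lower bound via a classical singular value product inequality combined with weighted AM-GM.

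For achievability, since $p\geq{}r$, I would pick any $Q\in\Re^{p\times{}r}$ with $Q^TQ=\Id$ and set
\begin{align*}
A_0 = U\Sigma^{\frac{2}{3}}Q^T\in\Re^{m\times{}p},\qquad X_0 = Q\Sigma^{\frac{1}{3}}V^T\in\Re^{p\times{}n}.
\end{align*}
Then $A_0X_0 = U\Sigma^{\frac{2}{3}}Q^TQ\Sigma^{\frac{1}{3}}V^T = U\Sigma{}V^T = L$, and since $A_0A_0^T = U\Sigma^{\frac{4}{3}}U^T$ has eigenvalues equal to the diagonal of $\Sigma^{\frac{4}{3}}$, the singular values of $A_0$ are the diagonal entries of $\Sigma^{\frac{2}{3}}$, giving $\norm{A_0}_*=\trace{\Sigma^{\frac{2}{3}}}$. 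Similarly, $\norm{X_0}_F^2 = \trace{X_0X_0^T} = \trace{Q\Sigma^{\frac{2}{3}}Q^T} = \trace{Q^TQ\Sigma^{\frac{2}{3}}} = \trace{\Sigma^{\frac{2}{3}}}$. Summing yields the target value $\frac{3}{2}\trace{\Sigma^{\frac{2}{3}}}$, so the minimum is at most this quantity.

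For the matching lower bound, I would fix an arbitrary feasible pair $(A,X)$ with $AX=L$ and proceed in two steps. First, invoke the classical Horn-Weyl log-majorization $\prod_{i=1}^k\sigma_i(AX)\leq\prod_{i=1}^k\sigma_i(A)\sigma_i(X)$ for every $k$; combined with the fact that $t\mapsto{}e^{qt}$ is increasing and convex for any $q>0$ (so weak log-majorization of the $\sigma_i(AX)$ implies weak majorization of their $q$-th powers by those of $\sigma_i(A)\sigma_i(X)$), one obtains
\begin{align*}
\sum_i\sigma_i(L)^{\frac{2}{3}} \;\leq\; \sum_i\sigma_i(A)^{\frac{2}{3}}\sigma_i(X)^{\frac{2}{3}}.
\end{align*}
Second, apply weighted AM-GM pointwise with weights $(\frac{2}{3},\frac{1}{3})$ to the pair $\sigma_i(A)$ and $\sigma_i(X)^2$:
\begin{align*}
\sigma_i(A)^{\frac{2}{3}}\sigma_i(X)^{\frac{2}{3}} = \sigma_i(A)^{\frac{2}{3}}(\sigma_i(X)^2)^{\frac{1}{3}} \leq \tfrac{2}{3}\sigma_i(A)+\tfrac{1}{3}\sigma_i(X)^2.
\end{align*}
Summing over $i$ delivers $\trace{\Sigma^{\frac{2}{3}}}\leq\frac{2}{3}\norm{A}_*+\frac{1}{3}\norm{X}_F^2$, and multiplying through by $\frac{3}{2}$ yields $\norm{A}_*+\frac{1}{2}\norm{X}_F^2\geq\frac{3}{2}\trace{\Sigma^{\frac{2}{3}}}$, completing the lower bound.

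The main technical hurdle is the log-majorization step: it relies on Horn's singular-value-product inequality and the subsequent passage from weak log-majorization to weak majorization of the $2/3$-th powers. While classical in matrix analysis, this is the one ingredient that is not elementary; the explicit factorization for achievability and the pointwise AM-GM on each index are routine. Matching the attained value in the first part against the lower bound in the second concludes the proof.
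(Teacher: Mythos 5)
Your proposal is correct, and while the achievability half coincides with the paper's (the same factorization $A_*=U\Sigma^{2/3}H^T$, $X_*=H\Sigma^{1/3}V^T$ attains the value $\frac{3}{2}\trace{\Sigma^{2/3}}$), your lower bound takes a genuinely different route. The paper first reduces to the case $\rank{A}=\rank{L}$ by replacing $A$ with $UU^TA$, then replaces $X$ by the minimum-norm solution $A^+L$, writes the objective as $\trace{\Sigma_A}+\frac{1}{2}\trace{\Sigma_A^{-1}Q^T\Sigma^2Q\Sigma_A^{-1}}$ in terms of the singular values $\alpha_i$ of $A$, and finishes with a rearrangement argument plus a three-term AM-GM $\frac{1}{2}\alpha_i+\frac{1}{2}\alpha_i+\frac{\sigma_i^2}{2\alpha_i^2}\geq\frac{3}{2}\sigma_i^{2/3}$; you instead bound an arbitrary feasible pair directly via Horn's product inequality $\prod_{i=1}^k\sigma_i(AX)\leq\prod_{i=1}^k\sigma_i(A)\sigma_i(X)$, pass from weak log-majorization to weak majorization of the $2/3$-powers, and apply weighted AM-GM indexwise. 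What your route buys is uniformity and robustness: it needs no rank reduction, no structural claim about the SVD of $A$, and in particular it sidesteps the paper's delicate intermediate assertion that the eigenvalues of $\Sigma_A^{-1}Q^T\Sigma^2Q\Sigma_A^{-1}$ are exactly $\{\sigma_i^2/\alpha_{\pi_i}^2\}$ for a permutation $\pi$ --- a statement that does not hold verbatim for general orthogonal $Q$ and really requires a majorization argument of the kind you invoke. The cost is that you import the nonelementary log-majorization machinery, whereas the paper's argument (modulo that step) uses only the pseudo-inverse characterization of the minimum-norm solution and scalar inequalities. Both yield the same constant, and your version arguably generalizes more readily to other Schatten exponents $1/q=1/q_1+1/q_2$ as in the cited equation for $\|L\|_q^q$.
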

\begin{proof}
Denote the singular values of $L$ as $\sigma_1\geq\cdots\geq\sigma_r>0$. We first consider the case that $\rank{A}=\rank{L}=r$. Since $AX=L$, the SVD of $A$ must have a form of $UQ\Sigma_AV_A^T$, where $Q$ is an orthogonal matrix of size $r\times{}r$ and $\Sigma_A = \diag{\alpha_1,\cdots,\alpha_r}$ with $\alpha_1\geq\cdots\geq\alpha_r>0$. Since $A^+L = \arg\min_{X} \|X\|_F^2, \textrm{ s.t. } AX=L$, we have
\begin{align*}
&\|A\|_* + \frac{1}{2}\|X\|_F^2 \geq \|A\|_* + \frac{1}{2}\|A^+L\|_F^2\\
&=\trace{\Sigma_A}+\frac{1}{2}\trace{\Sigma_A^{-1}Q^T\Sigma^2Q\Sigma_A^{-1}}.
\end{align*}
It can be proven that the eigenvalues of $\Sigma_A^{-1}Q^T\Sigma^2Q\Sigma_A^{-1}$ are given by $\{\sigma_i^2/\alpha_{\pi_i}^2\}_{i=1}^r$, where $\{\alpha_{\pi_i}\}_{i=1}^r$ is a permutation of $\{\alpha_i\}_{i=1}^r$. By rearrangement inequality,
\begin{align*}
\trace{\Sigma_A^{-1}Q^T\Sigma^2Q\Sigma_A^{-1}}=\sum_{i=1}^r\frac{\sigma_i^2}{\alpha_{\pi_i}^2}\geq\sum_{i=1}^r \frac{\sigma_i^2}{\alpha_i^2}.
\end{align*}
As a consequence, we have
\begin{align*}
&\|A\|_*\hspace{-0.02in} + \hspace{-0.02in}\frac{1}{2}\|X\|_F^2 \hspace{-0.02in}\geq\hspace{-0.02in} \sum_{i=1}^r \left(\alpha_i+\frac{\sigma_i^2}{2\alpha_i^2}\right)\hspace{-0.02in}=\hspace{-0.02in} \sum_{i=1}^r \left(\frac{1}{2}\alpha_i\hspace{-0.02in}+\hspace{-0.02in}\frac{1}{2}\alpha_i\hspace{-0.02in}+\hspace{-0.02in}\frac{\sigma_i^2}{2\alpha_i^2}\right)\\
&\geq{}\sum_{i=1}^r \frac{3}{2}\sigma_i^{\frac{2}{3}}=\frac{3}{2}\trace{\Sigma^{\frac{2}{3}}}.
\end{align*}
Regarding the general case of $\rank{A}\geq{}\rank{L}$, we can construct $A_1 = UU^TA$. By $AX=L$, $A_1X=L$. Since $\rank{A_1} = \rank{L}$, we have
\begin{align*}
&\|A\|_* + \frac{1}{2}\|X\|_F^2 \geq{}\|A_1\|_* + \frac{1}{2}\|X\|_F^2\\
&\geq{}\|A_1\|_* + \frac{1}{2}\|A_1^+L\|_F^2\geq\frac{3}{2}\trace{\Sigma^{\frac{2}{3}}}.
\end{align*}
Finally, the optimal value of $\frac{3}{2}\trace{\Sigma^{\frac{2}{3}}}$ is attained by $A_*=U\Sigma^{\frac{2}{3}}H^T$ and $X_*=H\Sigma^{\frac{1}{3}}V^T$, $\forall{}H^TH=\Id$.
\end{proof}

The next lemma will be used multiple times in the proofs presented in this paper.
\begin{lemm}\label{lem:basic:inverse}
Let $\Omega\subseteq\{1,\cdots,m\}\times\{1,\cdots,n\}$ and $\mathcal{P}$ be an orthogonal projection onto some subspace of $\Re^{m\times{}n}$. Then the following are equivalent:
\begin{itemize}
\item[1.] $\mathcal{P}\mathcal{P}_{\Omega}\mathcal{P}$ is invertible.
\item[2.] $\|\mathcal{P}\mathcal{P}_{\Omega}^\bot\mathcal{P}\|<1$.
\item[3.] $\mathcal{P}\cap{}\mathcal{P}_{\Omega}^\bot=\{0\}$.
\end{itemize}
\end{lemm}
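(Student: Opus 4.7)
The plan is to use the single algebraic identity
\begin{align*}
\mathcal{P}\mathcal{P}_\Omega\mathcal{P} \;=\; \mathcal{P} - \mathcal{P}\mathcal{P}_\Omega^\bot\mathcal{P},
\end{align*}
which comes from $\mathcal{P}_\Omega+\mathcal{P}_\Omega^\bot=\mathcal{I}$, and then to translate invertibility, the operator norm, and the triviality of an intersection into the single question: \emph{does the self-adjoint operator $\mathcal{P}\mathcal{P}_\Omega^\bot\mathcal{P}$ have $1$ in its spectrum when restricted to $\mathrm{range}(\mathcal{P})$?} Since $\mathcal{P}\mathcal{P}_\Omega\mathcal{P}$ obviously annihilates $\mathrm{range}(\mathcal{P})^\bot$, its ``invertibility'' in (1) has to be read as invertibility of its restriction to $\mathrm{range}(\mathcal{P})$; I would state this convention at the start of the proof.

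For $(1)\Leftrightarrow(2)$, I would restrict to $\mathrm{range}(\mathcal{P})$, where the identity above becomes $\mathcal{I} - (\mathcal{P}\mathcal{P}_\Omega^\bot\mathcal{P})|_{\mathrm{range}(\mathcal{P})}$. The operator $\mathcal{P}\mathcal{P}_\Omega^\bot\mathcal{P}$ is self-adjoint positive semidefinite (it is $(\mathcal{P}_\Omega^\bot\mathcal{P})^*(\mathcal{P}_\Omega^\bot\mathcal{P})$) and a contraction, so its spectrum lies in $[0,1]$. Hence the restricted operator $\mathcal{I}-\mathcal{P}\mathcal{P}_\Omega^\bot\mathcal{P}$ is invertible on $\mathrm{range}(\mathcal{P})$ iff its smallest eigenvalue is positive, iff the largest eigenvalue of $\mathcal{P}\mathcal{P}_\Omega^\bot\mathcal{P}|_{\mathrm{range}(\mathcal{P})}$ is strictly below $1$. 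Because vectors in $\mathrm{range}(\mathcal{P})^\bot$ are killed by the outer $\mathcal{P}$, this largest eigenvalue coincides with the full operator norm $\norm{\mathcal{P}\mathcal{P}_\Omega^\bot\mathcal{P}}$, delivering the equivalence.

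For $(2)\Leftrightarrow(3)$, one direction is immediate: if $0\neq M\in\mathcal{P}\cap\mathcal{P}_\Omega^\bot$, then $\mathcal{P}(M)=M$ and $\mathcal{P}_\Omega^\bot(M)=M$, whence $\mathcal{P}\mathcal{P}_\Omega^\bot\mathcal{P}(M)=M$ and $\norm{\mathcal{P}\mathcal{P}_\Omega^\bot\mathcal{P}}\geq 1$, contradicting (2). For the converse I would assume $\norm{\mathcal{P}\mathcal{P}_\Omega^\bot\mathcal{P}}=1$; in finite dimension the norm is attained at some unit $M$, so the chain
\begin{align*}
1 \;=\; \|\mathcal{P}\mathcal{P}_\Omega^\bot\mathcal{P}(M)\|_F \;\leq\; \|\mathcal{P}_\Omega^\bot\mathcal{P}(M)\|_F \;\leq\; \|\mathcal{P}(M)\|_F \;\leq\; \|M\|_F \;=\; 1
\end{align*}
collapses to equalities. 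Equality in the last two inequalities forces $\mathcal{P}(M)=M$, i.e.\ $M\in\mathcal{P}$; equality in the first inequality, after substituting $\mathcal{P}(M)=M$, says that the projection $\mathcal{P}_\Omega^\bot$ preserves the norm of $M$, hence $\mathcal{P}_\Omega^\bot(M)=M$, giving a nonzero element of $\mathcal{P}\cap\mathcal{P}_\Omega^\bot$.

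The main obstacle is essentially bookkeeping rather than a real technical difficulty: one must be explicit that ``invertibility'' of (1) refers to the restriction to $\mathrm{range}(\mathcal{P})$ (otherwise (1) is vacuously false whenever $\mathcal{P}\neq\mathcal{I}$), and one must read off the equality cases in the contraction chain carefully to recover $\mathcal{P}(M)=M$ and $\mathcal{P}_\Omega^\bot(M)=M$ separately. Once those two points are settled, all three statements are manifestly spectral properties of the single self-adjoint operator $\mathcal{P}\mathcal{P}_\Omega^\bot\mathcal{P}|_{\mathrm{range}(\mathcal{P})}$, and the equivalences are immediate.
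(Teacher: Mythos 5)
Your proposal is correct, and it takes a genuinely different route from the paper's. The paper proves the cycle $1\rightarrow 2\rightarrow 3\rightarrow 1$ and works in coordinates: for $1\rightarrow 2$ it vectorizes everything, introduces an orthonormal basis matrix $P$ for $\mathrm{range}(\mathcal{P})$ and the diagonal indicator matrix $D$ of $\Omega$, reduces invertibility to positive definiteness of $P^TDP$, and bounds $\|\mathcal{P}\mathcal{P}_{\Omega}^\bot\mathcal{P}\|\leq 1-\sigma_{min}(P^TDP)$; for $3\rightarrow 1$ it uses the Pythagorean identity $\|M\|_F^2=\|\mathcal{P}_{\Omega}\mathcal{P}(M)\|_F^2+\|\mathcal{P}_{\Omega}^\bot\mathcal{P}(M)\|_F^2$ to get the norm strictly below $1$ and then constructs the inverse explicitly as the Neumann series $\mathcal{I}+\sum_{i\geq 1}(\mathcal{P}\mathcal{P}_{\Omega}^\bot\mathcal{P})^i$. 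You instead work coordinate-free from the single identity $\mathcal{P}\mathcal{P}_{\Omega}\mathcal{P}=\mathcal{P}-\mathcal{P}\mathcal{P}_{\Omega}^\bot\mathcal{P}$ and the spectral theorem for the self-adjoint PSD contraction $\mathcal{P}\mathcal{P}_{\Omega}^\bot\mathcal{P}=(\mathcal{P}_{\Omega}^\bot\mathcal{P})^*(\mathcal{P}_{\Omega}^\bot\mathcal{P})$, and you replace the paper's $3\rightarrow 1$ step by a norm-attainment argument with equality analysis in the contraction chain. Your version is shorter and cleaner for the lemma in isolation, and your explicit remark that ``invertible'' must mean invertible on $\mathrm{range}(\mathcal{P})$ is a point the paper leaves implicit (its $3\rightarrow 1$ step only verifies the inverse identity on $M\in\mathcal{P}$); what the paper's heavier machinery buys is that the $(P,D)$ representation is reused verbatim later (e.g., in the proof of Lemma 5.12 to compute $\|(\mathcal{P}_{T_0}\mathcal{P}_{\Omega}\mathcal{P}_{T_0})^{-1}\mathcal{P}_{T_0}\mathcal{P}_{\Omega}\mathcal{P}_{T_0}^\bot\|$), so setting it up here pays off downstream. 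One trivial slip: you attribute $\mathcal{P}_{\Omega}^\bot(M)=M$ to equality in the first inequality of your chain, whereas it actually follows from the middle one ($\|\mathcal{P}_{\Omega}^\bot\mathcal{P}(M)\|_F\leq\|\mathcal{P}(M)\|_F$ combined with $\mathcal{P}(M)=M$); since all three inequalities collapse to equalities this is harmless, but worth relabeling in a final write-up.
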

\begin{proof} \textbf{1$\rightarrow$2:} Let $\mathrm{vec}(\cdot)$ denote the vectorization of a matrix formed by stacking the columns of the matrix into a single column vector. Suppose that the basis matrix associated with $\mathcal{P}$ is given by $P\in\Re^{mn\times{}r}, P^TP=\Id$; namely,
\begin{align*}
\mathrm{vec}(\mathcal{P}(M)) = PP^T\mathrm{vec}(M),\forall{}M\in\Re^{m\times{}n}.
\end{align*}
Denote $\delta_{ij}$ as in~\eqref{eq:delta} and define a diagonal matrix $D$ as
\begin{align*}
D = \mathrm{diag}(\delta_{11},\delta_{21},\cdots,\delta_{ij},\cdots,\delta_{mn})\in\Re^{mn\times{}mn}.
\end{align*}
Notice that
\begin{align*}
&\mathcal{P}(M) = \mathcal{P}(\sum_{i,j}\langle{}M,e_ie_j^T\rangle{}e_ie_j^T)=\sum_{i,j}\langle{}M,e_ie_j^T\rangle{}\mathcal{P}(e_ie_j^T),
\end{align*}
where $e_i$ is the $i$th standard basis and $\langle\cdot\rangle$ denotes the inner product between two matrices. With this notation, it is easy to see that
\begin{align*}
&[\mathrm{vec}(\mathcal{P}(e_1e_1^T)),\mathrm{vec}(\mathcal{P}(e_2e_1^T)),\cdots,\mathrm{vec}(\mathcal{P}(e_me_n^T))]=PP^T.
\end{align*}
Similarly, we have
\begin{align*}
\mathcal{P}\mathcal{P}_{\Omega}\mathcal{P}(M) = \sum_{i,j}\langle\mathcal{P}(M),e_ie_j^T\rangle(\delta_{ij}\mathcal{P}(e_ie_j^T)),
\end{align*}
and thereby
\begin{align*}
&\mathrm{vec}(\mathcal{P}\mathcal{P}_{\Omega}\mathcal{P}(M))= PP^TD\mathrm{vec}(\mathcal{P}(M))\\
&=PP^TDPP^T\mathrm{vec}(M).
\end{align*}
For $\mathcal{P}\mathcal{P}_{\Omega}\mathcal{P}$ to be invertible, the matrix $P^TDP$ must be positive definite. Because, whenever $P^TDP$ is singular, there exists $z\in\Re^{mn}$ that satisfies $z\neq0$ and $P^TDPz=0$, and thus there exists $M\in\mathcal{P}$ and $M\neq0$ such that $\mathcal{P}\mathcal{P}_{\Omega}\mathcal{P}(M)=0$; this contradicts the assumption that $\mathcal{P}\mathcal{P}_{\Omega}\mathcal{P}$ is invertible. Denote the minimal singular value of $P^TDP$ as $0<\sigma_{min}\leq1$. Since $P^TDP$ is positive definite, we have
\begin{align*}
&\|\mathcal{P}\mathcal{P}_{\Omega}^\bot\mathcal{P}(M)\|_F = \|\mathrm{vec}(\mathcal{P}\mathcal{P}_{\Omega}^\bot\mathcal{P}(M))\|_2\\
&= \|(\Id-P^TDP)P^T\mathrm{vec}(M)\|_2\leq(1-\sigma_{min})\|P^T\mathrm{vec}(M)\|_2 \\
&= (1-\sigma_{min})\|\mathcal{P}(M)\|_F,
\end{align*}
which gives that $\|\mathcal{P}\mathcal{P}_{\Omega}^\bot\mathcal{P}\|\leq1-\sigma_{min}<1$.

\textbf{2$\rightarrow$3:} Suppose that $M\in{}\mathcal{P}\cap{}\mathcal{P}_{\Omega}^\bot$, i.e., $M =\mathcal{P}(M)= \mathcal{P}_{\Omega}^\bot(M)$. Then we have $M=\mathcal{P}\mathcal{P}_{\Omega}^\bot\mathcal{P}(M)$ and thus
\begin{align*}
&\|M\|_F = \|\mathcal{P}\mathcal{P}_{\Omega}^\bot\mathcal{P}(M)\|_F\leq\|\mathcal{P}\mathcal{P}_{\Omega}^\bot\mathcal{P}\|\|M\|_F\leq\|M\|_F.
\end{align*}
Since $\|\mathcal{P}\mathcal{P}_{\Omega}^\bot\mathcal{P}\|<1$, the last equality above can hold only when $M=0$.

\textbf{3$\rightarrow$1:} Consider a nonzero matrix $M\in\mathcal{P}$. Then we have
\begin{align*}
&\|M\|_F^2 = \|\mathcal{P}(M)\|_F^2 = \|\mathcal{P}_{\Omega}\mathcal{P}(M)+\mathcal{P}_{\Omega}^\bot\mathcal{P}(M)\|_F^2\\
&=\|\mathcal{P}_{\Omega}\mathcal{P}(M)\|_F^2+\|\mathcal{P}_{\Omega}^\bot\mathcal{P}(M)\|_F^2,
\end{align*}
which gives that
\begin{align*}
&\|\mathcal{P}\mathcal{P}_{\Omega}^\bot\mathcal{P}(M)\|_F^2\leq\|\mathcal{P}_{\Omega}^\bot\mathcal{P}(M)\|_F^2=\|M\|_F^2 - \|\mathcal{P}_{\Omega}\mathcal{P}(M)\|_F^2.
\end{align*}
By $\mathcal{P}\cap{}\mathcal{P}_{\Omega}^\bot=\{0\}$, $\mathcal{P}_{\Omega}\mathcal{P}(M)\neq0$. Thus,
\begin{align*}
&\|\mathcal{P}\mathcal{P}_{\Omega}^\bot\mathcal{P}\|^2 \leq 1 - \inf_{\|M\|_F=1}\|\mathcal{P}_{\Omega}\mathcal{P}(M)\|_F^2<1.
\end{align*}
Provided that $\|\mathcal{P}\mathcal{P}_{\Omega}^\bot\mathcal{P}\|<1$, $\mathcal{I}+\sum_{i=1}^{\infty}(\mathcal{P}\mathcal{P}_{\Omega}^\bot\mathcal{P})^i$ is well defined. Notice that, for any $M\in\mathcal{P}$, the following holds:
\begin{align*}
&\mathcal{P}\mathcal{P}_{\Omega}\mathcal{P}(\mathcal{I}+\sum_{i=1}^{\infty}(\mathcal{P}\mathcal{P}_{\Omega}^\bot\mathcal{P})^i)(M)\\
&=\mathcal{P}(\mathcal{I}-\mathcal{P}\mathcal{P}_{\Omega}^\bot\mathcal{P})(\mathcal{I}+\sum_{i=1}^{\infty}(\mathcal{P}\mathcal{P}_{\Omega}^\bot\mathcal{P})^i)(M)\\
&=\mathcal{P}(\mathcal{I}+\sum_{i=1}^{\infty}(\mathcal{P}\mathcal{P}_{\Omega}^\bot\mathcal{P})^i-\mathcal{P}\mathcal{P}_{\Omega}^\bot\mathcal{P}-\sum_{i=2}^{\infty}(\mathcal{P}\mathcal{P}_{\Omega}^\bot\mathcal{P})^i)(M)\\
&=\mathcal{P}(M) = M.
\end{align*}
Similarly, it can be also proven that $(\mathcal{I}+\sum_{i=1}^{\infty}(\mathcal{P}\mathcal{P}_{\Omega}^\bot\mathcal{P})^i)$ $\mathcal{P}\mathcal{P}_{\Omega}\mathcal{P}(M)=M$. Hence, $\mathcal{I}+\sum_{i=1}^{\infty}(\mathcal{P}\mathcal{P}_{\Omega}^\bot\mathcal{P})^i$ is indeed the inverse operator of $\mathcal{P}\mathcal{P}_{\Omega}\mathcal{P}$.
\end{proof}

The lemma below is adapted from the arguments in~\cite{siam:stewart:1969}.
\begin{lemm}\label{lem:basic:pinv}
Let $A\in\mathbb{R}^{m\times{}p}$ be a matrix with column space $U$, and let $A_1 = A +\Delta$. If $\Delta\in{}U$ and $\|\Delta\|<1/\|A^+\|$ then
\begin{align*}
\rank{A_1} = \rank{A} \textrm{ and } \|A_1^+\|\leq{}\frac{\|A^+\|}{1 - \|A^+\|\|\Delta\|}.
\end{align*}
\end{lemm}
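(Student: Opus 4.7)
The plan is to exploit the column-space hypothesis $\Delta\in U$ to reduce the problem to a perturbation bound for a well-conditioned thin matrix, and then invoke Weyl's inequality for singular values.

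First, I would fix the SVD $A=U_A\Sigma_AV_A^T$ with $r=\rank{A}$, so that the columns of $U_A\in\Re^{m\times r}$ form an orthonormal basis of $U$, and note that $\|A^+\|=1/\sigma_r$ where $\sigma_r$ is the smallest singular value of $A$. Since $\Delta\in U$ means $\Delta=U_AU_A^T\Delta$, the matrix $A_1$ factors as
\begin{align*}
A_1 = U_A\bigl(\Sigma_AV_A^T+U_A^T\Delta\bigr) =: U_AM,
\end{align*}
where $M\in\Re^{r\times p}$. This is the crucial reduction: every column of $A_1$ still lies in $U$, so the only question is how the ``coordinates'' matrix $M$ behaves under the perturbation.

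Next, I would apply Weyl's inequality to $M=\Sigma_AV_A^T+U_A^T\Delta$. The first summand has smallest singular value $\sigma_r$, while $\|U_A^T\Delta\|\leq\|\Delta\|$. Hence
\begin{align*}
\sigma_{\min}(M)\geq\sigma_r-\|\Delta\|=\frac{1}{\|A^+\|}-\|\Delta\|>0
\end{align*}
by the hypothesis $\|\Delta\|<1/\|A^+\|$. Consequently $M$ has full row rank $r$, and since $U_A$ has orthonormal columns this gives $\rank{A_1}=\rank{U_AM}=r=\rank{A}$, settling the first claim.

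For the pseudo-inverse bound, I would verify that $A_1^+=M^+U_A^T$ by checking the four Moore--Penrose axioms directly: using $U_A^TU_A=\Id$ and $MM^+=\Id_r$ (valid because $M$ has full row rank), each axiom reduces to an elementary identity. With this formula in hand,
\begin{align*}
\|A_1^+\|\leq\|M^+\|\,\|U_A^T\|=\frac{1}{\sigma_{\min}(M)}\leq\frac{1}{\sigma_r-\|\Delta\|}=\frac{\|A^+\|}{1-\|A^+\|\,\|\Delta\|},
\end{align*}
which is exactly the stated inequality. The only mildly subtle step is the verification of $A_1^+=M^+U_A^T$, since pseudo-inverses do not distribute over products in general; here it works because $U_A$ has orthonormal columns and $M$ has full row rank, making the factorization $A_1=U_AM$ a ``rank-revealing'' product.
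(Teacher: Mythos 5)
Your proof is correct, but it takes a genuinely different route from the paper's. The paper writes the perturbation as a right multiplication: since $\Delta = UU^T\Delta = AA^+\Delta$, one has $A_1 = A(\Id + A^+\Delta)$, and $\|A^+\Delta\|\leq\|A^+\|\|\Delta\|<1$ makes $\Id + A^+\Delta$ invertible, which settles the rank claim; the norm bound then follows from the identity $A_1^+ = V_1V_1^T(\Id + A^+\Delta)^{-1}A^+$ (where $V_1$ spans the row space of $A_1$) together with the Neumann-series estimate $\|(\Id+A^+\Delta)^{-1}\|\leq 1/(1-\|A^+\|\|\Delta\|)$. You instead pull the orthonormal basis out on the left, $A_1 = U_AM$ with $M = \Sigma_AV_A^T + U_A^T\Delta$, control $\sigma_{\min}(M)$ by Weyl's inequality, and invoke the full-rank-factorization formula $A_1^+ = M^+U_A^T$. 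Both arguments hinge on the same structural point --- that $\Delta\in U$ lets the perturbation be absorbed into a single well-conditioned factor --- and both yield exactly the stated bound. The paper's version avoids singular-value perturbation theory entirely and stays at the level of operator identities, which is slightly leaner; yours makes the role of the smallest singular value explicit and is self-contained once the Moore--Penrose verification of $A_1^+ = M^+U_A^T$ is written out. That verification is indeed the one step that genuinely needs checking (pseudo-inverses do not split over products in general), and it does go through here precisely because $U_A$ has orthonormal columns and $M$ has full row rank, as you correctly flag.
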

\begin{proof}By $\Delta\in{}U$,
\begin{align*}
A_1 = A + UU^T\Delta= A + AA^+\Delta = A(\Id + A^+\Delta).
\end{align*}
By $\|\Delta\|<1/\|A^+\|$, $\Id + A^+\Delta$ is invertible and thus $\rank{A_1} = \rank{A}$.

To prove the second claim, we denote by $V_1$ the row space of $A_1$. Then we have
\begin{align*}
V_1V_1^T = A_1^+A_1 = A_1^+A(\Id + A^+\Delta),
\end{align*}
which gives that $A_1^+A = V_1V_1^T(\Id + A^+\Delta)^{-1}$. Since $A_1\in{}U$, we have
\begin{align*}
A_1^+ = A_1^+UU^T =  A_1^+AA^+ = V_1V_1^T(\Id + A^+\Delta)^{-1}A^+,
\end{align*}
from which the conclusion follows.
\end{proof}
\subsection{Critical Lemmas}
The following lemma has a critical role in the proofs.
\begin{lemm}\label{lem:critical:inverse}
Let $L_0\in\Re^{m\times{}n}$ and $\Omega\subseteq\{1,\cdots,m\}\times\{1,\cdots,n\}$. Let the SVD of $L_0$ be $U_0\Sigma_0V_0^T$. Denote $\mathcal{P}_{U_0}(\cdot)=U_0U_0^T(\cdot)$ and $\mathcal{P}_{V_0}(\cdot)=(\cdot)V_0V_0^T$. Then we have the following:
\begin{itemize}
\item[1.] $\mathcal{P}_{U_0}\mathcal{P}_{\Omega}\mathcal{P}_{U_0}$ is invertible iff $U_0$ is $\Omega$-isomeric.
\item[2.] $\mathcal{P}_{V_0}\mathcal{P}_{\Omega}\mathcal{P}_{V_0}$ is invertible iff $V_0$ is $\Omega^T$-isomeric.
\end{itemize}
\end{lemm}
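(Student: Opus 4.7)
The plan is to reduce both statements to the equivalence in Lemma~\ref{lem:basic:inverse}, which says that for any orthogonal projection $\mathcal{P}$, the composition $\mathcal{P}\mathcal{P}_{\Omega}\mathcal{P}$ is invertible iff $\mathcal{P}\cap\mathcal{P}_{\Omega}^{\bot}=\{0\}$. So the task collapses to characterizing matrices that lie in $\mathcal{P}_{U_0}$ (resp.\ $\mathcal{P}_{V_0}$) and whose entries on $\Omega$ are all zero, and showing that this intersection is trivial precisely under the stated isomeric hypothesis.

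For the first item, I would parametrize $\mathcal{P}_{U_0}$ by writing any $M\in\mathcal{P}_{U_0}$ as $M=U_0W$ for some $W\in\Re^{r_0\times n}$, where $r_0=\rank{L_0}$. The condition $M\in\mathcal{P}_{\Omega}^{\bot}$ is equivalent to $[M]_{\Omega^j,j}=0$ for every $j$, i.e., $[U_0]_{\Omega^j,:}\,[W]_{:,j}=0$ column by column. Since $U_0$ has orthonormal columns (so $\rank{U_0}=r_0$), the $\Omega$-isomeric condition on $U_0$ is exactly that $[U_0]_{\Omega^j,:}$ has full column rank for every $j$. Under this hypothesis each linear system forces $[W]_{:,j}=0$, hence $W=0$ and $M=0$, giving the trivial intersection. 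Conversely, if some $[U_0]_{\Omega^{j_0},:}$ fails to be full column rank, pick a nonzero $w$ in its kernel, set $[W]_{:,j_0}=w$ and the remaining columns to zero; then $M=U_0W$ is a nonzero element of $\mathcal{P}_{U_0}\cap\mathcal{P}_{\Omega}^{\bot}$. Combining with Lemma~\ref{lem:basic:L02U}, which identifies $\Omega$-isomerism of $L_0$ with that of $U_0$, finishes the first equivalence.

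The second item is fully analogous after noticing the duality $(\Omega^T)^i=\Omega_i$, which follows directly from the definition of $\Omega^T$. I would parametrize $\mathcal{P}_{V_0}$ by $M=ZV_0^T$ with $Z\in\Re^{m\times r_0}$. The constraint $\mathcal{P}_{\Omega}(M)=0$ becomes the row-wise system $[V_0]_{\Omega_i,:}\,[Z]_{i,:}^T=0$ for each $i$. Since $\Omega_i=(\Omega^T)^i$, the $\Omega^T$-isomerism of $V_0$ is precisely the full column rank of $[V_0]_{\Omega_i,:}$ for every $i$, so the same two-direction argument used above (trivial kernel vs.\ a carefully placed nonzero row) yields $\mathcal{P}_{V_0}\cap\mathcal{P}_{\Omega}^{\bot}=\{0\}$ iff $V_0$ is $\Omega^T$-isomeric. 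Applying the second claim of Lemma~\ref{lem:basic:L02U} translates this back to $L_0^T$ being $\Omega^T$-isomeric, and invoking Lemma~\ref{lem:basic:inverse} once more completes the proof.

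There is no real hard step here: the argument is almost purely bookkeeping. The only place that needs care is the index bookkeeping between $\Omega$ and $\Omega^T$ in part~2 (making sure $(\Omega^T)^i=\Omega_i$ is applied in the right direction), together with the observation that because $U_0$ and $V_0$ have orthonormal columns, $\Omega$-isomerism for them coincides with the sampled submatrices having full column rank, which is exactly what converts the vanishing-entry system into a trivial-kernel statement.
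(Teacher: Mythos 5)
Your proof is correct and takes essentially the same route as the paper's: both arguments reduce invertibility, column by column, to whether the sampled submatrices $[U_0]_{\Omega^j,:}$ (resp.\ $[V_0]_{\Omega_i,:}$, using $(\Omega^T)^i=\Omega_i$) have full column rank, which is exactly $\Omega$-isomerism of $U_0$ (resp.\ $\Omega^T$-isomerism of $V_0$) since these matrices have orthonormal columns. The only cosmetic differences are that you enter through Lemma~\ref{lem:basic:inverse} (triviality of $\mathcal{P}_{U_0}\cap\mathcal{P}_{\Omega}^{\bot}$) and work with the kernels of the maps $[U_0]_{\Omega^j,:}$ directly, whereas the paper proves injectivity of $\mathcal{P}_{U_0}\mathcal{P}_{\Omega}\mathcal{P}_{U_0}$ via the equivalent invertibility of the Gram matrices $\sum_{i}\delta_{ij}u_iu_i^T=([U_0]_{\Omega^j,:})^T([U_0]_{\Omega^j,:})$ supplied by Lemma~\ref{lem:basic:positive}; also, your appeals to Lemma~\ref{lem:basic:L02U} are superfluous, since the claim is stated for $U_0$ and $V_0$ rather than for $L_0$.
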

\begin{proof}
The above two claims are proven in the same way, and thereby we only present the proof of the first one. Since the operator $\mathcal{P}_{U_0}\mathcal{P}_{\Omega}\mathcal{P}_{U_0}$ is linear and $\mathcal{P}_{U_0}$ is a linear space of finite dimension, the sufficiency can be proven by showing that $\mathcal{P}_{U_0}\mathcal{P}_{\Omega}\mathcal{P}_{U_0}$ is an injection. That is, we need to prove that the following linear system has no nonzero solution:
\begin{align*}
\mathcal{P}_{U_0}\mathcal{P}_{\Omega}\mathcal{P}_{U_0}(M) = 0, \textrm{ s.t. }M\in\mathcal{P}_{U_0}.
\end{align*}
Assume that $\mathcal{P}_{U_0}\mathcal{P}_{\Omega}\mathcal{P}_{U_0}(M) = 0$. Then we have
\begin{align*}
U_0^T\mathcal{P}_{\Omega}(U_0U_0^TM) = 0.
\end{align*}
Denote the $i$th row and $j$th column of $U_0$ and $U_0^TM$ as $u_i^T$ and $b_j$, respectively; that is, $U_0 = [u_1^T;u_2^T;\cdots;u_m^T]$ and $U_0^TM = [b_1,b_2,\cdots,b_n]$. Define $\delta_{ij}$ as in~\eqref{eq:delta}. Then the $j$th column of $U_0^T\mathcal{P}_{\Omega}(U_0U_0^TM)$ is given by $(\sum_{i=1}^{m}\delta_{ij}u_iu_i^T)b_j$. By Lemma~\ref{lem:basic:positive}, the matrix $\sum_{i=1}^{m}\delta_{ij}u_iu_i^T$ is invertible. Hence, $U_0^T\mathcal{P}_{\Omega}(U_0U_0^TM) = 0$ implies that
\begin{align*}
b_j=0,\forall{}j=1,\cdots,n,
\end{align*}
i.e., $U_0^TM=0$. By the assumption of $M\in\mathcal{P}_{U_0}$, $M=0$.

It remains to prove the necessity. Assume $U_0$ is not $\Omega$-isomeric. By Lemma~\ref{lem:basic:positive}, there exists $j_1$ such that the matrix $\sum_{i=1}^{m}\delta_{ij_1}u_iu_i^T$ is singular and therefore has a nonzero null space. So, there exists $M_1\neq{}0$ such that $U_0^T\mathcal{P}_{\Omega}(U_0M_1)=0$. Let $M=U_0M_1$. Then we have $M\neq0$, $M\in\mathcal{P}_{U_0}$ and
\begin{align*}
\mathcal{P}_{U_0}\mathcal{P}_{\Omega}\mathcal{P}_{U_0}(M) = 0.
\end{align*}
This contradicts the assumption that $\mathcal{P}_{U_0}\mathcal{P}_{\Omega}\mathcal{P}_{U_0}$ is invertible. As a consequence, $U_0$ must be $\Omega$-isomeric.
\end{proof}

The next four lemmas establish some connections between the relative condition number and the operator norm.
\begin{lemm}\label{lem:critical:rnc2optnorm}
Let $L_0\in\Re^{m\times{}n}$ and $\Omega\subseteq\{1,\cdots,m\}\times\{1,\cdots,n\}$, and let the SVD of $L_0$ be $U_0\Sigma_0V_0^T$. Denote $\mathcal{P}_{U_0}(\cdot)=U_0U_0^T(\cdot)$ and $\mathcal{P}_{V_0}(\cdot)=(\cdot)V_0V_0^T$. If $L_0$ is $\Omega/\Omega^T$-isomeric then
\begin{align*}
&\|\mathcal{P}_{U_0}\mathcal{P}_{\Omega}^\bot\mathcal{P}_{U_0}\| = 1 - \gamma_{\Omega}(L_0),\textrm{ }\|\mathcal{P}_{V_0}\mathcal{P}_{\Omega}^\bot\mathcal{P}_{V_0}\| = 1 - \gamma_{\Omega^T}(L_0^T).
\end{align*}
\end{lemm}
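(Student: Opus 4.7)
The plan is to reduce the operator norm of $\mathcal{P}_{U_0}\mathcal{P}_\Omega^\bot\mathcal{P}_{U_0}$ (viewed as an operator on $\Re^{m\times n}$ under the Frobenius inner product) to a column-separable eigenvalue problem, and then apply Lemma~\ref{lem:basic:rcn} to identify each per-column eigenvalue with an $\omega$-relative condition number. The argument for $\mathcal{P}_{V_0}\mathcal{P}_\Omega^\bot\mathcal{P}_{V_0}$ is symmetric, working row-wise instead of column-wise.

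First, every $M\in\mathcal{P}_{U_0}$ can be written $M=U_0C$ with $C\in\Re^{r_0\times n}$, and $\|M\|_F^2=\|C\|_F^2=\sum_j \|c_j\|_2^2$, where $c_j$ is the $j$th column of $C$. For each $j$, introduce the diagonal selector $D_j\in\Re^{m\times m}$ with $(D_j)_{ii}=1$ iff $i\in\Omega^j$ (so that $\mathcal{P}_\Omega$ acts column-wise by $D_j$ on the $j$th column). Then the $j$th column of $\mathcal{P}_{U_0}\mathcal{P}_\Omega^\bot\mathcal{P}_{U_0}(M)$ equals $U_0(\mathtt{I}-U_0^TD_jU_0)c_j$. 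Because $U_0^TU_0=\mathtt{I}$, taking Frobenius norms gives
\begin{align*}
\|\mathcal{P}_{U_0}\mathcal{P}_\Omega^\bot\mathcal{P}_{U_0}(M)\|_F^2=\sum_{j=1}^n c_j^T(\mathtt{I}-U_0^TD_jU_0)^2c_j.
\end{align*}
Since $0\preceq U_0^TD_jU_0\preceq\mathtt{I}$, each matrix $\mathtt{I}-U_0^TD_jU_0$ is PSD with eigenvalues in $[0,1]$, so the maximum of the Rayleigh quotient over $M\in\mathcal{P}_{U_0}$ is achieved by concentrating all mass on the index $j^\star$ minimizing $\sigma_{\min}(U_0^TD_jU_0)$ and aligning $c_{j^\star}$ with the corresponding eigenvector. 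This yields $\|\mathcal{P}_{U_0}\mathcal{P}_\Omega^\bot\mathcal{P}_{U_0}\|^2=\max_j(1-\sigma_{\min}(U_0^TD_jU_0))^2$.

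Next, I invoke Lemma~\ref{lem:basic:rcn} on the matrix $L_0=U_0\Sigma_0V_0^T$ with $\omega=\Omega^j$: since $L_0$ is $\Omega$-isomeric, Lemma~\ref{lem:basic:L02U} ensures the rank hypothesis of Lemma~\ref{lem:basic:rcn} holds, giving $\gamma_{\Omega^j}(L_0)=\sigma_{\min}(U_0^TD_jU_0)$. Taking the minimum over $j$ yields $\gamma_\Omega(L_0)=\min_j\sigma_{\min}(U_0^TD_jU_0)$, and hence
\begin{align*}
\|\mathcal{P}_{U_0}\mathcal{P}_\Omega^\bot\mathcal{P}_{U_0}\|=1-\gamma_\Omega(L_0),
\end{align*}
as claimed. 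The operator norm is attained, not merely bounded, because the maximizing column-concentrated $M$ is feasible.

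For the second identity, I repeat the same computation row-wise for $M\in\mathcal{P}_{V_0}$, writing $M=RV_0^T$ and using the diagonal selectors $\tilde D_i$ encoding $\Omega_i$. Noting that $(\Omega^T)^i=\Omega_i$ and applying Lemma~\ref{lem:basic:rcn} to $L_0^T=V_0\Sigma_0U_0^T$ gives $\gamma_{(\Omega^T)^i}(L_0^T)=\sigma_{\min}(V_0^T\tilde D_iV_0)$, and the analogous maximization yields $\|\mathcal{P}_{V_0}\mathcal{P}_\Omega^\bot\mathcal{P}_{V_0}\|=1-\gamma_{\Omega^T}(L_0^T)$. The main technical point throughout is the observation that, thanks to the orthonormality of $U_0$ and the diagonal structure of $\mathcal{P}_\Omega$, the operator decouples across columns (resp.\ rows), so its operator norm reduces to a single worst-case per-column (resp.\ per-row) symmetric eigenvalue problem; once this decoupling is seen, the translation to $\gamma_\Omega$ via Lemma~\ref{lem:basic:rcn} is immediate.
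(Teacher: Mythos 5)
Your proposal is correct and follows essentially the same route as the paper's proof: both decouple the operator column-by-column via the diagonal selectors $D_j$, reduce the norm to the worst-case eigenvalue of $\Id-U_0^TD_jU_0$, identify $\sigma_{\min}(U_0^TD_jU_0)$ with $\gamma_{\Omega^j}(L_0)$ through Lemma~\ref{lem:basic:rcn}, and establish attainability by concentrating mass on the worst column. Your Rayleigh-quotient phrasing merely packages the paper's separate upper-bound and attainability steps into one maximization.
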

\begin{proof} We only need to prove the first claim. Denote $\delta_{ij}$ as in~\eqref{eq:delta} and define a set of diagonal matrices $\{D_j\}_{j=1}^n$ as $D_j = \mathrm{diag}(\delta_{1j},\delta_{2j},\cdots,\delta_{mj})\in\Re^{m\times{}m}$. Denote the $j$th column of $\mathcal{P}_{U_0}(M)$ as $b_j$. Then we have
\begin{align*}
&\|[\mathcal{P}_{U_0}\mathcal{P}_{\Omega}^\bot\mathcal{P}_{U_0}(M)]_{:,j}\|_2 = \|U_0U_0^Tb_j - U_0(U_0^TD_jU_0)U_0^Tb_j\|_2\\
&=\|(\Id-U_0^TD_jU_0)U_0^Tb_j\|_2\leq\|(\Id-U_0^TD_jU_0)\|\|U_0^Tb_j\|_2.
\end{align*}
By Lemma~\ref{lem:basic:positive}, $U_0^TD_jU_0$ is positive definite. As a consequence, $\sigma_j\Id\preccurlyeq{}U_0^TD_jU_0\preccurlyeq\Id$,
where $\sigma_j>0$ is the minimal eigenvalue of $U_0^TD_jU_0$. By Lemma~\ref{lem:basic:rcn} and Definition~\ref{def:rcn:2}, $\sigma_j\geq{}\gamma_{\Omega}(L_0)$, $\forall{}1\leq{}j\leq{}n$. Thus,
\begin{align*}
&\|\mathcal{P}_{U_0}\mathcal{P}_{\Omega}^\bot\mathcal{P}_{U_0}(M)\|_F^2\leq\sum_{j=1}^{n}(1-\sigma_j)^2\|b_j\|_2^2\\
&\leq{}(1-\gamma_{\Omega}(L_0))^2\|\mathcal{P}_{U_0}(M)\|_F^2,
\end{align*}
where gives that $\|\mathcal{P}_{U_0}\mathcal{P}_{\Omega}^\bot\mathcal{P}_{U_0}\|\leq1-\gamma_{\Omega}(L_0)$.

It remains to prove that the value of $1-\gamma_{\Omega}(L_0)$ is attainable. Without loss of generality, assume that $j_1 = \arg\min_j\sigma_j$, i.e., $\sigma_{j_1} = \gamma_{\Omega}(L_0)$. Construct a $r_0\times{}r_0$ matrix $B$ with the $j_1$th column being the eigenvector corresponding to the smallest eigenvalue of $U_0^TD_{j_1}U_0$ and everywhere else being zero. Let $M_1 = U_0B$. Then it can be verified that $\|\mathcal{P}_{U_0}\mathcal{P}_{\Omega}^\bot\mathcal{P}_{U_0}(M_1)\|_F = (1-\gamma_{\Omega}(L_0))\|M_1\|_F$.
\end{proof}

\begin{lemm}\label{lem:critical:optnorm:big}
Let $L_0\in\Re^{m\times{}n}$ and $\Omega\subseteq\{1,\cdots,m\}\times\{1,\cdots,n\}$. Let the SVD of $L_0$ be $U_0\Sigma_0V_0^T$. Denote $\mathcal{P}_{U_0}(\cdot)=U_0U_0^T(\cdot)$ and $\mathcal{P}_{V_0}(\cdot)=(\cdot)V_0V_0^T$. If $L_0$ is $\Omega/\Omega^T$-isomeric then:
\begin{align*}
&\|(\mathcal{P}_{U_0}\mathcal{P}_{\Omega}\mathcal{P}_{U_0})^{-1}\mathcal{P}_{U_0}\mathcal{P}_{\Omega}\mathcal{P}_{U_0}^\bot\| =\sqrt{\frac{1}{\gamma_{\Omega}(L_0)}-1},\\
&\|(\mathcal{P}_{V_0}\mathcal{P}_{\Omega}\mathcal{P}_{V_0})^{-1}\mathcal{P}_{V_0}\mathcal{P}_{\Omega}\mathcal{P}_{V_0}^\bot\|=\sqrt{\frac{1}{\gamma_{\Omega^T}(L_0^T)} - 1}.
\end{align*}
\end{lemm}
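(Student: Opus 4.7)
The plan is to exploit the column-wise structure of all operators involved, reducing the operator-norm calculation to a concrete eigenvalue problem for a family of small matrices indexed by the columns of $\Omega$. By symmetry, I will carry out the argument in full only for the first identity; the second follows by applying the same reasoning to $L_0^T$, which exchanges the roles of $U_0,V_0$ and of $\Omega,\Omega^T$ and is legal because the $\Omega/\Omega^T$-isomeric hypothesis gives both $\Omega$-isomerism of $U_0$ and $\Omega^T$-isomerism of $V_0$ via Lemma~\ref{lem:basic:L02U}.

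First I would introduce the per-column diagonal matrices $D_j=\diag{\delta_{1j},\ldots,\delta_{mj}}\in\Re^{m\times m}$ (with $\delta_{ij}$ as in \eqref{eq:delta}) and set $H_j=U_0^TD_jU_0$, exactly as in the proof of Lemma~\ref{lem:critical:rnc2optnorm}. By Lemma~\ref{lem:basic:positive} each $H_j$ is positive definite, and by Lemma~\ref{lem:basic:rcn} we have $\lambda_{\min}(H_j)=\gamma_{\Omega^j}(L_0)$, so $\min_j\lambda_{\min}(H_j)=\gamma_\Omega(L_0)$. Since every operator appearing in $\mathcal{T}:=(\mathcal{P}_{U_0}\mathcal{P}_{\Omega}\mathcal{P}_{U_0})^{-1}\mathcal{P}_{U_0}\mathcal{P}_{\Omega}\mathcal{P}_{U_0}^\bot$ acts columnwise, tracing through the definitions shows that the $j$th column of $\mathcal{T}(M)$ equals $U_0H_j^{-1}U_0^TD_jn_j$, where $n_j=(\Id-U_0U_0^T)m_j$ is the $j$th column of $\mathcal{P}_{U_0}^\bot(M)$. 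Because $U_0$ has orthonormal columns, the squared length of this column is $n_j^TD_jU_0H_j^{-2}U_0^TD_jn_j$. As $\mathcal{T}(M)$ depends only on $\mathcal{P}_{U_0}^\bot(M)$, the supremum of $\|\mathcal{T}(M)\|_F/\|M\|_F$ is attained on $\mathcal{P}_{U_0}^\bot$, and by putting all mass in a single column it further reduces to
\[
\|\mathcal{T}\|^2=\max_{1\le j\le n}\sup_{n\perp U_0,\,n\ne 0}\frac{n^T D_jU_0H_j^{-2}U_0^TD_jn}{\|n\|_2^2}.
\]

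The crucial algebraic step is then to evaluate, for each fixed $j$, the inner supremum, which equals $\lambda_{\max}(E_j^TH_j^{-2}E_j)=\lambda_{\max}(H_j^{-2}E_jE_j^T)$ for $E_j:=U_0^TD_j(\Id-U_0U_0^T)$. Using $D_j^2=D_j$ produces the key identity
\[
E_jE_j^T=U_0^TD_jU_0-(U_0^TD_jU_0)^2=H_j-H_j^2,
\]
so $H_j^{-2}E_jE_j^T=H_j^{-1}-\Id$, whose largest eigenvalue is $1/\lambda_{\min}(H_j)-1=1/\gamma_{\Omega^j}(L_0)-1$. Maximizing over $j$ yields $\|\mathcal{T}\|^2=1/\gamma_\Omega(L_0)-1$, which is the first identity, and the second follows from the transposition argument mentioned above.

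The main obstacle is not conceptual but bookkeeping: verifying that the three operators in $\mathcal{T}$ genuinely commute with the column decomposition (so that the supremum collapses to a max over $j$), and then spotting the simplification $E_jE_j^T=H_j-H_j^2$. Once these two steps are cleanly laid out, the square-root formula drops out immediately from the eigenvalue computation together with Lemma~\ref{lem:basic:rcn}.
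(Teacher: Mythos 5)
Your proposal is correct and follows essentially the same route as the paper's own proof: both reduce the operator to its columnwise action $b_j\mapsto U_0H_j^{-1}U_0^TD_j(\Id-U_0U_0^T)b_j$, exploit $D_j^2=D_j$ to obtain the identity $U_0^TD_j(\Id-U_0U_0^T)D_jU_0=H_j-H_j^2$ (the paper writes this as $\|H_j^{-1}U_0^TD_j(\Id-U_0U_0^T)\|^2=\|H_j^{-1}-\Id\|$), and invoke Lemma~\ref{lem:basic:rcn} together with a single-nonzero-column construction to show the bound $\sqrt{1/\gamma_{\Omega}(L_0)-1}$ is attained. The only difference is cosmetic: you phrase the key step as an eigenvalue computation for $H_j^{-2}E_jE_j^T$ rather than an operator-norm identity, which is the same calculation.
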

\begin{proof} We shall prove the first claim. Let $M\in\Re^{m\times{}n}$. Denote the $j$th column of $M$ and $(\mathcal{P}_{U_0}\mathcal{P}_{\Omega}\mathcal{P}_{U_0})^{-1}\mathcal{P}_{U_0}\mathcal{P}_{\Omega}\mathcal{P}_{U_0}^\bot(M)$ as $b_j$ and $y_j$, respectively. Denote $\delta_{ij}$ as in~\eqref{eq:delta} and define a set of diagonal matrices $\{D_j\}_{j=1}^n$ as $D_j = \mathrm{diag}(\delta_{1j},\delta_{2j},\cdots,\delta_{mj})\in\Re^{m\times{}m}$. Then we have
\begin{align*}
&y_j = [(\mathcal{P}_{U_0}\mathcal{P}_{\Omega}\mathcal{P}_{U_0})^{-1}\mathcal{P}_{U_0}\mathcal{P}_{\Omega}\mathcal{P}_{U_0}^\bot(M)]_{:,j}\\
&= U_0(U_0^TD_jU_0)^{-1}U_0^TD_j(\Id - U_0U_0^T)b_j.
\end{align*}
It can be calculated that
\begin{align*}
&\|y_j\|_2^2 \leq \|(U_0^TD_jU_0)^{-1}U_0^TD_j(\Id - U_0U_0^T)\|^2\|b_j\|_2^2=\\
&\|(U_0^TD_jU_0)^{-1}U_0^TD_j(\Id - U_0U_0^T)DU_0(U_0^TD_jU_0)^{-1}\|\|b_j\|_2^2\\
&=\|(U_0^TD_jU_0)^{-1} - \Id\|\|b_j\|_2^2\leq\left(\frac{1}{\gamma_{\Omega}(L_0)}-1\right)\|b_j\|_2^2,
\end{align*}
which gives that
\begin{align*}
\|(\mathcal{P}_{U_0}\mathcal{P}_{\Omega}\mathcal{P}_{U_0})^{-1}\mathcal{P}_{U_0}\mathcal{P}_{\Omega}\mathcal{P}_{U_0}^\bot\|\leq\sqrt{\frac{1}{\gamma_{\Omega}(L_0)}-1}.
\end{align*}
Using a similar argument as in the proof of Lemma~\ref{lem:critical:rnc2optnorm}, it can be proven that the value of $\sqrt{1/\gamma_{\Omega}(L_0)-1}$ is attainable. To be more precise, assume without loss of generality that $j_1 = \arg\min_{j}\sigma_{j}$, where $\sigma_j$ is the smallest singular value of $U_0^TD_{j}U_0$. Denote by $\sigma^*$ and $v^*$ the largest singular value and the corresponding right singular vector of $(U_0^TD_jU_0)^{-1}U_0^TD_j(\Id - U_0U_0^T)$, respectively. Then the above justifications have already proven that $\sigma^*=\sqrt{1/\gamma_{\Omega}(L_0)-1}$. Construct an $m\times{}n$ matrix $M$ with the $j_1$th column being $v^*$ and everywhere else being zero. Then it can be verified that $\|(\mathcal{P}_{U_0}\mathcal{P}_{\Omega}\mathcal{P}_{U_0})^{-1}\mathcal{P}_{U_0}\mathcal{P}_{\Omega}\mathcal{P}_{U_0}^\bot(M)\|_F = \sqrt{1/\gamma_{\Omega}(L_0)-1}\|M\|_F$.
\end{proof}

\begin{lemm}\label{lem:critical:optnorm:ptpo}
Let $L_0\in\Re^{m\times{}n}$ and $\Omega\subseteq\{1,\cdots,m\}\times\{1,\cdots,n\}$, and let the SVD of $L_0$ be $U_0\Sigma_0V_0^T$. Denote $\mathcal{P}_{T_0}(\cdot)=U_0U_0^T(\cdot)+(\cdot)V_0V_0^T-U_0U_0^T(\cdot)V_0V_0^T$. If $L_0$ is $\Omega/\Omega^T$-isomeric then
\begin{align*}
\|\mathcal{P}_{T_0}\mathcal{P}_{\Omega}^\bot\mathcal{P}_{T_0}\| \leq{}2(1-\gamma_{\Omega,\Omega^T}(L_0)).
\end{align*}
\end{lemm}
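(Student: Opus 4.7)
The plan is to exploit the operator identity $\|\mathcal{P}_{T_0}\mathcal{P}_{\Omega}^\bot\mathcal{P}_{T_0}\| = \|\mathcal{P}_{\Omega}^\bot\mathcal{P}_{T_0}\|^2$, which follows from $(\mathcal{P}_{\Omega}^\bot\mathcal{P}_{T_0})^*(\mathcal{P}_{\Omega}^\bot\mathcal{P}_{T_0}) = \mathcal{P}_{T_0}\mathcal{P}_{\Omega}^\bot\mathcal{P}_{T_0}$ together with the self-adjointness and idempotence of $\mathcal{P}_{T_0}$ and $\mathcal{P}_{\Omega}^\bot$. Thus it suffices to show the single-sided bound $\|\mathcal{P}_{\Omega}^\bot\mathcal{P}_{T_0}\|^2 \leq 2\bigl(1-\gamma_{\Omega,\Omega^T}(L_0)\bigr)$.

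The key decomposition is the orthogonal splitting of $\mathcal{P}_{T_0}$ as $\mathcal{P}_{T_0} = \mathcal{P}_{U_0} + \mathcal{P}_{V_0}\mathcal{P}_{U_0}^{\bot}$. I would verify this by direct computation from the definition $\mathcal{P}_{T_0}(M)=U_0U_0^TM+MV_0V_0^T-U_0U_0^TMV_0V_0^T$, and then check orthogonality: for any $M$, the two pieces $M_1=\mathcal{P}_{U_0}(M)$ and $M_2=\mathcal{P}_{V_0}\mathcal{P}_{U_0}^{\bot}(M)$ satisfy $\langle M_1,M_2\rangle = \mathrm{tr}(M^TU_0U_0^T(I-U_0U_0^T)MV_0V_0^T) = 0$, so for any $M \in T_0$ with $\|M\|_F=1$ one has $\|M_1\|_F^2+\|M_2\|_F^2=1$. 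Crucially, $M_1\in\mathcal{P}_{U_0}$ while $M_2\in\mathcal{P}_{V_0}$, which is exactly what is needed to invoke Lemma~\ref{lem:critical:rnc2optnorm}.

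Next, for a matrix $N$ lying in $\mathcal{P}_{U_0}$, I would bound $\|\mathcal{P}_{\Omega}^\bot(N)\|_F^2$ by using self-adjointness of $\mathcal{P}_{\Omega}^\bot$ together with $N=\mathcal{P}_{U_0}(N)$:
\begin{align*}
\|\mathcal{P}_{\Omega}^\bot(N)\|_F^2 = \langle N,\mathcal{P}_{\Omega}^\bot(N)\rangle = \langle N,\mathcal{P}_{U_0}\mathcal{P}_{\Omega}^\bot\mathcal{P}_{U_0}(N)\rangle \leq \|\mathcal{P}_{U_0}\mathcal{P}_{\Omega}^\bot\mathcal{P}_{U_0}\|\,\|N\|_F^2,
\end{align*}
and the right-hand side equals $(1-\gamma_{\Omega}(L_0))\|N\|_F^2$ by Lemma~\ref{lem:critical:rnc2optnorm}. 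The same argument applied to $M_2\in\mathcal{P}_{V_0}$ yields $\|\mathcal{P}_{\Omega}^\bot(M_2)\|_F^2 \leq (1-\gamma_{\Omega^T}(L_0^T))\|M_2\|_F^2$. Both upper bounds are dominated by $(1-\gamma_{\Omega,\Omega^T}(L_0))$ times the corresponding squared norm.

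Finally I would combine the pieces: by the triangle inequality $\|\mathcal{P}_{\Omega}^\bot(M)\|_F \leq \|\mathcal{P}_{\Omega}^\bot(M_1)\|_F+\|\mathcal{P}_{\Omega}^\bot(M_2)\|_F \leq \sqrt{1-\gamma_{\Omega,\Omega^T}(L_0)}\,(\|M_1\|_F+\|M_2\|_F)$, and then by Cauchy--Schwarz $\|M_1\|_F+\|M_2\|_F \leq \sqrt{2}\,\|M\|_F$, giving $\|\mathcal{P}_{\Omega}^\bot\mathcal{P}_{T_0}\|^2 \leq 2(1-\gamma_{\Omega,\Omega^T}(L_0))$, which via the opening observation completes the proof. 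I do not expect a serious obstacle; the only subtlety is choosing the orthogonal decomposition $\mathcal{P}_{T_0} = \mathcal{P}_{U_0} + \mathcal{P}_{V_0}\mathcal{P}_{U_0}^{\bot}$ (rather than the symmetric but non-orthogonal $\mathcal{P}_{U_0}+\mathcal{P}_{V_0}-\mathcal{P}_{U_0}\mathcal{P}_{V_0}$) so that the $\sqrt{2}$ from Cauchy--Schwarz produces exactly the factor of $2$ claimed.
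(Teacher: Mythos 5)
Your proof is correct and follows essentially the same route as the paper's: both rest on the orthogonal splitting $\mathcal{P}_{T_0}=\mathcal{P}_{U_0}+\mathcal{P}_{U_0}^{\bot}\mathcal{P}_{V_0}$ and reduce each piece to Lemma~\ref{lem:critical:rnc2optnorm}, with the factor $2$ coming from adding two terms each bounded by $1-\gamma_{\Omega,\Omega^T}(L_0)$. The only cosmetic difference is that you work with the adjoint $\mathcal{P}_{\Omega}^{\bot}\mathcal{P}_{T_0}$ and combine via triangle inequality plus Cauchy--Schwarz, whereas the paper works with $\mathcal{P}_{T_0}\mathcal{P}_{\Omega}^{\bot}$ and combines via Pythagoras plus splitting the supremum.
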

\begin{proof} Using the same arguments as in the proof of Lemma~\ref{lem:basic:inverse}, it can be proven that $\|\mathcal{P}\mathcal{P}_{\Omega}^\bot\mathcal{P}\| = \|\mathcal{P}\mathcal{P}_{\Omega}^\bot\|^2$, with $\mathcal{P}$ being any orthogonal projection onto a subspace of $\mathbb{R}^{m\times{}n}$. Thus, we have the following
 \begin{align*}
& \|\mathcal{P}_{T_0}\mathcal{P}_{\Omega}^\bot\mathcal{P}_{T_0}\| = \|\mathcal{P}_{T_0}\mathcal{P}_{\Omega}^\bot\|^2 = \sup_{\|M\|_F=1}\|\mathcal{P}_{T_0}\mathcal{P}_{\Omega}^\bot(M)\|_F^2 \\ &= \sup_{\|M\|_F=1}\|\mathcal{P}_{U_0}\mathcal{P}_{\Omega}^\bot(M) + \mathcal{P}_{U_0}^\bot\mathcal{P}_{V_0}\mathcal{P}_{\Omega}^\bot(M)\|_F^2\\
&= \sup_{\|M\|_F=1}(\|\mathcal{P}_{U_0}\mathcal{P}_{\Omega}^\bot(M)\|_F^2 + \|\mathcal{P}_{U_0}^\bot\mathcal{P}_{V_0}\mathcal{P}_{\Omega}^\bot(M)\|_F^2)\\
&\leq\sup_{\|M\|_F=1}\|\mathcal{P}_{U_0}\mathcal{P}_{\Omega}^\bot(M)\|_F^2 + \sup_{\|M\|_F=1}\|\mathcal{P}_{V_0}\mathcal{P}_{\Omega}^\bot(M)\|_F^2 \\
&= \|\mathcal{P}_{U_0}\mathcal{P}_{\Omega}^\bot\|^2 + \|\mathcal{P}_{V_0}\mathcal{P}_{\Omega}^\bot\|^2,
 \end{align*}
which, together with Lemma~\ref{lem:critical:rnc2optnorm}, gives that
 \begin{align*}
&\|\mathcal{P}_{T_0}\mathcal{P}_{\Omega}^\bot\mathcal{P}_{T_0}\| \leq \|\mathcal{P}_{U_0}\mathcal{P}_{\Omega}^\bot\mathcal{P}_{U_0}\| + \|\mathcal{P}_{V_0}\mathcal{P}_{\Omega}^\bot\mathcal{P}_{V_0}\|\\
&= 1 - \gamma_{\Omega} (L_0) + 1 - \gamma_{\Omega^T} (L_0^T)\leq2(1 - \gamma_{\Omega,\Omega^T}(L_0))
\end{align*}
\end{proof}

\begin{lemm}\label{lem:critical:optnorm:invpt}
Let $L_0\in\Re^{m\times{}n}$ and $\Omega\subseteq\{1,\cdots,m\}\times\{1,\cdots,n\}$, and let the SVD of $L_0$ be $U_0\Sigma_0V_0^T$. Denote $\mathcal{P}_{T_0}(\cdot)=U_0U_0^T(\cdot)+(\cdot)V_0V_0^T-U_0U_0^T(\cdot)V_0V_0^T$. If the operator $\mathcal{P}_{T_0}\mathcal{P}_{\Omega}\mathcal{P}_{T_0}$ is invertible, then we have
\begin{align*}
\|(\mathcal{P}_{T_0}\mathcal{P}_{\Omega}\mathcal{P}_{T_0})^{-1}\mathcal{P}_{T_0}\mathcal{P}_{\Omega}\mathcal{P}_{T_0}^\bot\| = \sqrt{\frac{1}{1-\|\mathcal{P}_{T_0}\mathcal{P}_{\Omega}^\bot\mathcal{P}_{T_0}\|}-1}.
\end{align*}
\end{lemm}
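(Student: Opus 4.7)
The plan is to base everything on the algebraic identity $\mathcal{B}\mathcal{B}^* = \mathcal{A} - \mathcal{A}^2$, where I abbreviate $\mathcal{A} := \mathcal{P}_{T_0}\mathcal{P}_{\Omega}\mathcal{P}_{T_0}$, $\mathcal{B} := \mathcal{P}_{T_0}\mathcal{P}_{\Omega}\mathcal{P}_{T_0}^\bot$, and $\mathcal{C} := \mathcal{P}_{T_0}\mathcal{P}_{\Omega}^\bot\mathcal{P}_{T_0}$. Once this identity is in hand, the computation of $\|\mathcal{A}^{-1}\mathcal{B}\|$ reduces to a purely spectral question about the self-adjoint operator $\mathcal{C}$ on the subspace $\mathcal{P}_{T_0}$ (henceforth written $T_0$). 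The main obstacle I expect is spotting this key decomposition; the rest of the argument is essentially functional calculus.

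First I would establish the basic structure on $T_0$. All three operators $\mathcal{A}, \mathcal{B}\mathcal{B}^*, \mathcal{C}$ are self-adjoint positive semi-definite on $T_0$, and
\begin{align*}
\mathcal{A} + \mathcal{C} \;=\; \mathcal{P}_{T_0}(\mathcal{P}_{\Omega} + \mathcal{P}_{\Omega}^\bot)\mathcal{P}_{T_0} \;=\; \mathcal{P}_{T_0},
\end{align*}
so $\mathcal{A} + \mathcal{C}$ acts as the identity $\mathcal{I}_{T_0}$ on $T_0$. Under the invertibility hypothesis, the spectrum of $\mathcal{A}|_{T_0}$ lies in $(0, 1]$, hence that of $\mathcal{C}|_{T_0}$ lies in $[0, 1)$ and $\|\mathcal{C}\|<1$; moreover $\mathcal{A}$ and $\mathcal{C}$ commute on $T_0$ since $\mathcal{C} = \mathcal{I}_{T_0} - \mathcal{A}$.

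The key identity then comes from inserting $\mathcal{P}_{T_0} + \mathcal{P}_{T_0}^\bot = \mathcal{I}$ between the two copies of $\mathcal{P}_{\Omega}$ in $\mathcal{A} = \mathcal{P}_{T_0}\mathcal{P}_{\Omega}\mathcal{P}_{\Omega}\mathcal{P}_{T_0}$ (using $\mathcal{P}_{\Omega}^2 = \mathcal{P}_{\Omega}$), which gives
\begin{align*}
\mathcal{A} \;=\; \mathcal{P}_{T_0}\mathcal{P}_{\Omega}\mathcal{P}_{T_0}\mathcal{P}_{\Omega}\mathcal{P}_{T_0} + \mathcal{P}_{T_0}\mathcal{P}_{\Omega}\mathcal{P}_{T_0}^\bot\mathcal{P}_{\Omega}\mathcal{P}_{T_0} \;=\; \mathcal{A}^2 + \mathcal{B}\mathcal{B}^*,
\end{align*}
where I use $(\mathcal{P}_{T_0}^\bot)^2 = \mathcal{P}_{T_0}^\bot$ and the self-adjointness of all projections to identify $\mathcal{B}^* = \mathcal{P}_{T_0}^\bot\mathcal{P}_{\Omega}\mathcal{P}_{T_0}$. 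Rearranging, $\mathcal{B}\mathcal{B}^* = \mathcal{A}(\mathcal{I}_{T_0} - \mathcal{A}) = \mathcal{A}\mathcal{C}$ on $T_0$.

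Finally, using $(\mathcal{A}^{-1})^* = \mathcal{A}^{-1}$ together with $\mathcal{A}\mathcal{C} = \mathcal{C}\mathcal{A}$, I compute
\begin{align*}
\|\mathcal{A}^{-1}\mathcal{B}\|^2 \;=\; \|\mathcal{A}^{-1}\mathcal{B}\mathcal{B}^*\mathcal{A}^{-1}\| \;=\; \|\mathcal{A}^{-1}\mathcal{A}\mathcal{C}\mathcal{A}^{-1}\| \;=\; \|\mathcal{C}(\mathcal{I}_{T_0} - \mathcal{C})^{-1}\|.
\end{align*}
Since $f(t) = t/(1-t)$ is increasing on $[0, 1)$ and the spectrum of $\mathcal{C}$ lies in $[0, 1)$, the spectral mapping theorem yields $\|\mathcal{C}(\mathcal{I}_{T_0}-\mathcal{C})^{-1}\| = \|\mathcal{C}\|/(1-\|\mathcal{C}\|) = 1/(1-\|\mathcal{C}\|) - 1$, and taking a square root produces the stated equality.
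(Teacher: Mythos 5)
Your proof is correct and is essentially the paper's own argument in coordinate-free form: the paper also computes $\|X\|^2=\|XX^*\|$, uses idempotency of the projections to reduce $\mathcal{B}\mathcal{B}^*$ to $\mathcal{A}-\mathcal{A}^2$ (there written as $P(P^TDP)^{-1}P^TD(\Id-PP^T)DP(P^TDP)^{-1}P^T=P((P^TDP)^{-1}-\Id)P^T$ in the vectorized representation), and then invokes $\mathcal{A}+\mathcal{C}=\mathcal{I}_{T_0}$ together with the spectral mapping for the increasing function $t\mapsto 1/t-1$. The only difference is presentational — operator algebra versus an explicit matrix representation via $\mathrm{vec}(\cdot)$, $P$ and $D$.
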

\begin{proof} We shall use again the two notations, $\mathrm{vec}(\cdot)$ and $D$, defined in the proof of Lemma~\ref{lem:basic:inverse}. Let $P\in\mathbb{R}^{mn\times{}r}$ be a column-wisely orthonormal matrix such that $\mathrm{vec}(\mathcal{P}_{T_0}(M)) = PP^T\mathrm{vec}(M)$, $\forall{}M$. Since $\mathcal{P}_{T_0}\mathcal{P}_{\Omega}\mathcal{P}_{T_0}$ is invertible, it follows that $P^TDP$ is positive definite. Denote by $\sigma_{min}(\cdot)$ the smallest singular value of a matrix. Then we have the following:
\begin{align*}
&\|(\mathcal{P}_{T_0}\mathcal{P}_{\Omega}\mathcal{P}_{T_0})^{-1}\mathcal{P}_{T_0}\mathcal{P}_{\Omega}\mathcal{P}_{T_0}^\bot\|^2 \\
&= \|P(P^TDP)^{-1}P^TD(\Id-PP^T)\|^2\\
&=\|P(P^TDP)^{-1}P^TD(\Id-PP^T)DP(P^TDP)^{-1}P^T\| \\
&= \|(P^TDP)^{-1}-\Id\| = \frac{1}{\sigma_{min}(P^TDP)} -1 \\
&= \frac{1}{1 - \|P^T(\Id - D)P\|} - 1=\frac{1}{1 - \|\mathcal{P}_{T_0}\mathcal{P}_{\Omega}^\bot\mathcal{P}_{T_0}\|} - 1.
\end{align*}
\end{proof}

The following lemma is more general than Theorem~\ref{thm:fnorm}.
\begin{lemm}\label{lem:critical:uinorm}
Let $L_0\in\Re^{m\times{}n}$ and $\Omega\subseteq\{1,\cdots,m\}\times\{1,\cdots,n\}$. Consider the following convex problem:
\begin{align}\label{eq:uinorm}
\min_{X} \norm{X}_{UI},\textrm{ s.t. }\mathcal{P}_{\Omega}(AX-L_0)=0,
\end{align}
where $\norm{\cdot}_{UI}$ generally denotes a convex unitary invariant norm and $A\in\Re^{m\times{}p}$ is given. If $L_0\in\mathrm{span}\{A\}$ and $A$ is $\Omega$-isomeric then $X_0=A^+L_0$ is the unique minimizer to the convex optimization problem in~\eqref{eq:uinorm}.
\end{lemm}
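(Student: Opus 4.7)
The plan is to verify that $X_0$ is feasible, use the $\Omega$-isomerism of $A$ to force any feasible perturbation $\Delta = X - X_0$ to lie column-wise in the orthogonal complement of the row space of $A$, and then close via a pinching-type inequality for unitary invariant norms. Feasibility is immediate from $L_0\in\mathrm{span}\{A\}$: it gives $AX_0 = AA^+L_0 = L_0$ and hence $\mathcal{P}_\Omega(AX_0-L_0)=0$.

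For the structural step, write $A = U_A\Sigma_AV_A^T$ and split $\Delta = V_AV_A^T\Delta + (\Id - V_AV_A^T)\Delta =: \Delta_\parallel + \Delta_\perp$. Since $A\Delta_\perp = 0$, feasibility collapses to $[A]_{\Omega^j,:}[\Delta_\parallel]_{:,j} = 0$ for every $j$. The $\Omega$-isomerism of $A$ gives $\rank{[A]_{\Omega^j,:}} = \rank{A}$, and combined with the always-true inclusion $\mathrm{null}(A)\subseteq\mathrm{null}([A]_{\Omega^j,:})$ this forces equality of null spaces, i.e., $\mathrm{null}([A]_{\Omega^j,:}) = \mathrm{span}\{V_A\}^\perp$. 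Since $[\Delta_\parallel]_{:,j}\in\mathrm{span}\{V_A\}$ by construction, it must vanish; thus $\Delta_\parallel = 0$ and every column of $\Delta$ lives in $\mathrm{span}\{V_A\}^\perp$.

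To finish, the columns of $X_0 = A^+L_0 = V_A\Sigma_A^{-1}U_A^TL_0$ all sit in $\mathrm{span}\{V_A\}$, so $X_0^T\Delta = 0$ and hence $(X_0+\Delta)^T(X_0+\Delta) = X_0^TX_0 + \Delta^T\Delta \succeq X_0^TX_0$. Operator monotonicity of the matrix square root (equivalently the pinching inequality for unitary invariant norms) then shows that the singular values of $X_0+\Delta$ weakly majorize those of $X_0$, so $\|X_0+\Delta\|_{UI}\geq\|X_0\|_{UI}$ for any convex unitary invariant norm.

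The main obstacle I foresee is promoting this weak inequality to a strict one when $\Delta\neq 0$, since weak majorization alone does not automatically produce a strict norm gap for every unitary invariant norm. I plan to handle it by noting that $\Delta\neq 0$ makes $\Delta^T\Delta$ a nonzero positive semidefinite summand that strictly enlarges at least one eigenvalue of $X_0^TX_0 + \Delta^T\Delta$ relative to $X_0^TX_0$, and then appealing to strict monotonicity of the target unitary invariant norms (as holds for Frobenius, nuclear, and Schatten-$q$ with $q<\infty$) to conclude $\|X_0+\Delta\|_{UI} > \|X_0\|_{UI}$, which establishes uniqueness.
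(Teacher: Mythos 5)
Your proposal is correct and reaches the same two milestones as the paper's proof, but by a genuinely different and more self-contained route. For the first milestone --- that feasibility plus $\Omega$-isomerism of $A$ forces $A\Delta=0$ --- the paper works at the level of projection operators: it invokes Lemma~\ref{lem:basic:L02U} and Lemma~\ref{lem:critical:inverse} to conclude that $\mathcal{P}_{U_A}\mathcal{P}_{\Omega}\mathcal{P}_{U_A}$ is invertible, applies it to $AX-L_0\in\mathcal{P}_{U_A}$, and deduces $AX=L_0$. Your per-column rank--nullity argument ($\rank{[A]_{\Omega^j,:}}=\rank{A}$ together with the trivial inclusion of null spaces forces $\mathrm{null}([A]_{\Omega^j,:})=\mathrm{null}(A)$, so $[\Delta_\parallel]_{:,j}\in\mathrm{span}\{V_A\}\cap\mathrm{span}\{V_A\}^{\perp}=\{0\}$) proves exactly the same fact with elementary linear algebra, at the cost of not reusing machinery the paper needs elsewhere. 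For the second milestone the paper simply cites Theorem 4.1 of~\cite{tpami_2013_lrr} and asserts it extends from the nuclear norm to all convex unitary invariant norms, whereas you give a direct argument: $X_0^T\Delta=0$, hence $(X_0+\Delta)^T(X_0+\Delta)=X_0^TX_0+\Delta^T\Delta$, hence the ordered singular values of $X_0+\Delta$ dominate those of $X_0$, hence $\|X_0+\Delta\|_{UI}\geq\|X_0\|_{UI}$. Your worry about strictness is well founded and in fact exposes an over-claim in the lemma as stated: for the operator norm uniqueness genuinely fails (take $A=[1,0]$, $L_0=[1,1]$ with $\Omega$ the full index set; then $X_t=[1,1;t,-t]$ satisfies $AX_t=L_0$ and $\|X_t\|=\sqrt{2}=\|A^+L_0\|$ for all $0<|t|\leq1$). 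So your restriction to strictly monotone symmetric gauges (Frobenius, nuclear, Schatten-$q$ with $q<\infty$) is not a defect of your argument but the honest hypothesis under which the uniqueness claim is true; since every invocation of the lemma in the paper uses only the Frobenius or nuclear norm, nothing downstream is affected by this qualification.
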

\begin{proof}
Denote the SVD of $A$ as $U_A\Sigma_AV_A^T$. Then it follows from $\mathcal{P}_{\Omega}(AX-L_0)=0$ and $L_0\in\mathrm{span}\{A\}$ that
\begin{align*}
\mathcal{P}_{U_A}\mathcal{P}_{\Omega}\mathcal{P}_{U_A}(AX-L_0) = 0.
\end{align*}
By Lemma~\ref{lem:basic:L02U} and Lemma~\ref{lem:critical:inverse}, $\mathcal{P}_{U_A}\mathcal{P}_{\Omega}\mathcal{P}_{U_A}$ is invertible and thus $AX = L_0$. Hence, $\mathcal{P}_{\Omega}(AX-L_0)=0$ is equivalent to $AX=L_0$. Notice, that Theorem 4.1 of~\cite{tpami_2013_lrr} actually holds for any convex unitary invariant norms. That is,
\begin{align*}
A^+L_0 = \arg\min_{X} \|X\|_{UI}, \textrm{ s.t. } AX = L_0,
\end{align*}
which implies that $A^+L_0$ is the unique minimizer to the problem in~\eqref{eq:uinorm}.
\end{proof}

\subsection{Proofs of Theorems~\ref{thm:iso},~\ref{thm:iso:necessary} and~\ref{thm:rcn:bound}}
We need to use some notations as follows. Let the SVD of $L_0$ be $U_0\Sigma_0V_0^T$. Denote $\mathcal{P}_{U_0}(\cdot)=U_0U_0^T(\cdot)$, $\mathcal{P}_{V_0}(\cdot)=(\cdot)V_0V_0^T$ and $\mathcal{P}_{T_0}(\cdot) = \mathcal{P}_{U_0}(\cdot)+\mathcal{P}_{V_0}(\cdot)-\mathcal{P}_{U_0}\mathcal{P}_{V_0}(\cdot)$.

\begin{proof}({\bf proof of Theorem~\ref{thm:iso}})
Define an operator $\mathcal{H}$ in the same way as in~\citep{Candes:2009:math}:
\begin{align*}
\mathcal{H} = \mathcal{P}_{T_0} - \frac{1}{\rho_0}\mathcal{P}_{T_0}\mathcal{P}_{\Omega_{\mathcal{A}}}\mathcal{P}_{T_0}.
\end{align*}
According to Theorem 4.1 of~\citep{Candes:2009:math}, there exists some numerical constant $c>0$ such that the inequality,
\begin{align*}
\|\mathcal{H}\|\leq\sqrt{\frac{c\mu_0r_0\log{n_1}}{\rho_0n_2}},
\end{align*}
holds with probability at least $1-n_1^{-10}$ provided that the right hand side is smaller than 1. So, $\|\mathcal{H}\|<1$ provided that
\begin{align*}
\rho_0>\frac{c\mu_0r_0\log{n_1}}{n_2}.
\end{align*}
When $\|\mathcal{H}\|<1$, we have
\begin{align*}
&\|\mathcal{P}_{T_0}\mathcal{P}_{\Omega}^\bot\mathcal{P}_{T_0}\| = \|\rho_0\mathcal{H}+(1-\rho_0)\mathcal{P}_{T_0}\|\\
&\leq{}\rho_0\|\mathcal{H}\|+(1-\rho_0)\|\mathcal{P}_{T_0}\|<1.
\end{align*}
Since $\mathcal{P}_{U_0}(\cdot)=\mathcal{P}_{U_0}\mathcal{P}_{T_0}(\cdot)=\mathcal{P}_{T_0}\mathcal{P}_{U_0}(\cdot)$, we have
\begin{align*}
&\|\mathcal{P}_{U_0}\mathcal{P}_{\Omega}^\bot\mathcal{P}_{U_0}\| = \|\mathcal{P}_{U_0}\mathcal{P}_{T_0}\mathcal{P}_{\Omega}^\bot\mathcal{P}_{T_0}\mathcal{P}_{U_0}\|\leq\|\mathcal{P}_{T_0}\mathcal{P}_{\Omega}^\bot\mathcal{P}_{T_0}\|<1.
\end{align*}
Due to the virtues of Lemma~\ref{lem:basic:inverse}, Lemma~\ref{lem:critical:inverse} and Lemma~\ref{lem:basic:L02U}, it can be concluded that $L_0$ is $\Omega$-isometric with probability at least $1-n_1^{-10}$. In a similar way, it can be also proven that $L_0^T$ is $\Omega^T$-isometric with probability at least $1-n_1^{-10}$.
\end{proof}
\begin{proof}({\bf proof of Theorem~\ref{thm:iso:necessary}}) When $L_0$ is not $\Omega$-isomeric, Lemma~\ref{lem:basic:L02U} and Lemma~\ref{lem:critical:inverse} give that $\mathcal{P}_{U_0}\mathcal{P}_{\Omega}\mathcal{P}_{U_0}$ is not invertible. By Lemma~\ref{lem:basic:inverse}, $\mathcal{P}_{U_0}\cap{}\mathcal{P}_\Omega^\bot\neq\{0\}$. Thus, there exists $\Delta\neq0$ that satisfies $\Delta\in\mathcal{P}_{U_0}$ and $\Delta\in\mathcal{P}_\Omega^\bot$. Now construct $L = L_0 + \Delta$. Then we have $L\neq{}L_0$, $\mathcal{P}_\Omega(L) = \mathcal{P}_\Omega(L_0)$ and $\rank{L} = \rank{\mathcal{P}_{U_0}(L_0+\Delta)}\leq\rank{L_0}$. Since $\mathcal{P}_{U_0}\cap{}\mathcal{P}_\Omega^\bot$ is a nonempty linear space, there are indeed infinitely many choices for $L$.
\end{proof}

\begin{proof}({\bf proof of Theorem~\ref{thm:rcn:bound}}) Using the same arguments as in the proof of Theorem~\ref{thm:iso}, we conclude that the following holds with probability at least $1-n_1^{-10}$:
\begin{align*}
\|\mathcal{P}_{U_0}\mathcal{P}_{\Omega}^\bot\mathcal{P}_{U_0}\|<1-\rho_0+\frac{\rho_0}{\sqrt{\alpha}},
\end{align*}
which, together with Lemma~\ref{lem:critical:rnc2optnorm}, gives that $\gamma_{\Omega}(L_0)>(1-1/\sqrt{\alpha})\rho_0$. Similarity, it can be also proven that $\gamma_{\Omega^T}(L_0^T)>(1-1/\sqrt{\alpha})\rho_0$ with probability at least $1-n_1^{-10}$.
\end{proof}

\subsection{Proof of Theorem~\ref{thm:iso:rcn}}
Let the SVD of $L_0$ be $U_0\Sigma_0V_0^T$. Denote the $i$th row of $U_0$ as $u_i^T$, i.e., $U_0 = [u_1^T;u_2^T;\cdots;u_{m}^T]$. Define $\delta_{ij}$ as in~\eqref{eq:delta}, and define a collection of diagonal matrices $\{D_j\}_{j=1}^{n}$ as $D_j = \mathrm{diag}(\delta_{1j},\delta_{2j},\cdots,\delta_{mj})\in\mathbb{R}^{m\times{}m}$. With these notations, we shall show that the operator norm of $\mathcal{P}_{U_0}\mathcal{P}_{\Omega}^\bot\mathcal{P}_{U_0}$ can be bounded from above. Considering the $j$th column of $\mathcal{P}_{U_0}\mathcal{P}_{\Omega}^\bot\mathcal{P}_{U_0}(X), \forall{}X, j$, we have
\begin{align*}
&[\mathcal{P}_{U_0}\mathcal{P}_{\Omega}^\bot\mathcal{P}_{U_0}(X)]_{:,j} = U_0U_0^T(\Id-D_j)U_0U_0^T[X]_{:,j},
\end{align*}
which gives that
\begin{align*}
\|[\mathcal{P}_{U_0}\mathcal{P}_{\Omega}^\bot\mathcal{P}_{U_0}(X)]_{:,j}\|_2\leq\|U_0U_0^T(\Id - D_j)U_0U_0^T\|\|[X]_{:,j}\|_2.
\end{align*}
Since the diagonal of $D_j$ has at most $(1-\rho)m$ zeros,
\begin{align*}
&\|U_0U_0^T(\Id - D_j)U_0U_0^T\|= \|\sum_{i=1}^{m_1}(1-\delta_{ij})u_iu_i^T\|\\
&\leq\sum_{i=1}^{m}(1-\delta_{ij})\|u_iu_i^T\|\leq(1-\rho)\mu_0r_0,
\end{align*}
where the last inequality follows from the definition of coherence. Thus, we have
\begin{align*}
\|\mathcal{P}_{U_0}\mathcal{P}_{\Omega}^\bot\mathcal{P}_{U_0}\|\leq(1-\rho)\mu_0r_0.
\end{align*}
Similarly, based on the assumption that at least $\rho{}n$ entries in each row of $L_0$ are observed, we have
\begin{align*}
\|\mathcal{P}_{V_0}\mathcal{P}_{\Omega}^\bot\mathcal{P}_{V_0}\|\leq(1-\rho)\mu_0r_0.
\end{align*}
By the assumption $\rho>1 - (1-\alpha)/(\mu_0r_0)$,
\begin{align*}
\|\mathcal{P}_{U_0}\mathcal{P}_{\Omega}^\bot\mathcal{P}_{U_0}\|<1-\alpha\quad\textrm{and}\quad\|\mathcal{P}_{V_0}\mathcal{P}_{\Omega}^\bot\mathcal{P}_{V_0}\|<1-\alpha.
\end{align*}
By Lemma~\ref{lem:basic:inverse} and Lemma~\ref{lem:critical:inverse}, $L_0$ is $\Omega/\Omega^T$-isomeric. In addition, it follows from Lemma~\ref{lem:critical:rnc2optnorm} that $\gamma_{\Omega,\Omega^T}(L_0)>\alpha$.
\subsection{Proofs of Theorems~\ref{thm:l2} and~\ref{thm:fnorm}}
Theorem~\ref{thm:fnorm} is indeed an immediate corollary of Lemma~\ref{lem:critical:uinorm}. So we only prove Theorem~\ref{thm:l2}.
\begin{proof}By $y_0\in\mathcal{S}_0\subseteq\mathrm{span}\{A\}$, $y_0=AA^+y_0$ and therefore $y_b = A_bA^+y_0$.
That is, $x_0=A^+y_0$ is a feasible solution to the problem in~\eqref{eq:l2}. Provided that $y_b\in\Re^k$ and the dictionary matrix $A$ is $k$-isomeric, Definition~\ref{def:iso:k} gives that $\rank{A_b} = \rank{A}$, which implies that
\begin{align*}
\mathrm{span}\{A_b^T\}=\mathrm{span}\{A^T\}.
\end{align*}
On the other hand, it is easy to see that $A^+y_0\in\mathrm{span}\{A^T\}$. Hence, there exists a dual vector $w\in\Re^p$ that obeys
\begin{align*}
A_b^Tw = A^+y_0, \textrm{ i.e., } A_b^Tw \in\partial\frac{1}{2}\|A^+y_0\|_2^2.
\end{align*}
By standard convexity arguments~\cite{book:convex}, $x_0=A^{+}y_0$ is an optimal solution to the problem in~\eqref{eq:l2}. Since the squared $\ell_2$ norm is a strongly convex function, it follows that the optimal solution to~\eqref{eq:l2} is unique.
\end{proof}
\subsection{Proof of Theorem~\ref{thm:convex}}
\begin{proof} Let the SVD of $L_0$ be $U_0\Sigma_0V_0^T$. Denote $\mathcal{P}_{T_0}(\cdot)=U_0U_0^T(\cdot)+(\cdot)V_0V_0^T-U_0U_0^T(\cdot)V_0V_0^T$. Since $\gamma_{\Omega,\Omega^T}(L_0)>0.5$, it follows from Lemma~\ref{lem:critical:optnorm:ptpo} that $\|\mathcal{P}_{T_0}\mathcal{P}_{\Omega}^\bot\mathcal{P}_{T_0}\|$ is strictly smaller than 1. By Lemma~\ref{lem:basic:inverse}, $\mathcal{P}_{T_0}\mathcal{P}_{\Omega}\mathcal{P}_{T_0}$ is invertible and $T_0\cap{}\Omega^\bot = \{0\}$. Given $\gamma_{\Omega,\Omega^T}(L_0)>0.75$, Lemma~\ref{lem:critical:optnorm:invpt} and  Lemma~\ref{lem:critical:optnorm:ptpo} imply that
\begin{align*}
&\|(\mathcal{P}_{T_0}\mathcal{P}_{\Omega}\mathcal{P}_{T_0})^{-1}\mathcal{P}_{T_0}\mathcal{P}_{\Omega}\mathcal{P}_{T_0}^\bot\| = \sqrt{\frac{1}{1-\|\mathcal{P}_{T_0}\mathcal{P}_{\Omega}^\bot\mathcal{P}_{T_0}\|}-1}\\
&\leq\sqrt{\frac{1}{2\gamma_{\Omega,\Omega^T}(L_0)-1}-1}<1.
\end{align*}
Next, we shall consider a feasible solution $L=L_0+\Delta$ and show that the objective strictly increases unless $\Delta=0$. By $\mathcal{P}_{\Omega}(\Delta) = 0$, $\mathcal{P}_{\Omega}\mathcal{P}_{T_0}(\Delta) = -\mathcal{P}_{\Omega}\mathcal{P}_{T_0}^\bot(\Delta)$. Since the operator $\mathcal{P}_{T_0}\mathcal{P}_{\Omega}\mathcal{P}_{T_0}$ is invertible, we have
\begin{align*}
\mathcal{P}_{T_0}(\Delta) = -(\mathcal{P}_{T_0}\mathcal{P}_{\Omega}\mathcal{P}_{T_0})^{-1}\mathcal{P}_{T_0}\mathcal{P}_{\Omega}\mathcal{P}_{T_0}^\bot(\Delta).
\end{align*}
By $\|(\mathcal{P}_{T_0}\mathcal{P}_{\Omega}\mathcal{P}_{T_0})^{-1}\mathcal{P}_{T_0}\mathcal{P}_{\Omega}\mathcal{P}_{T_0}^\bot\|<1$, $\|\mathcal{P}_{T_0}(\Delta)\|_*<\|\mathcal{P}_{T_0}^\bot(\Delta)\|_*$ holds unless $\mathcal{P}_{T_0}^\bot(\Delta)=0$. By the convexity of the nuclear norm,
\begin{align*}
&\|L_0+\Delta\|_* - \|L_0\|_*\geq{}\langle{}\Delta,U_0V_0^T+W\rangle,
\end{align*}
where $W\in{}\mathcal{P}_{T_0}^\bot$ and $\|W\|\leq1$. Due to the duality between the nuclear norm and operator norm, we can construct a $W$ such that $\langle{}\Delta,W\rangle=\|\mathcal{P}_{T_0}^\bot(\Delta_0)\|_*$. Thus,
\begin{align*}
&\|L_0+\Delta\|_* - \|L_0\|_*\geq{}\|\mathcal{P}_{T_0}^\bot(\Delta)\|_* - \|\mathcal{P}_{U_0}\mathcal{P}_{V_0}(\Delta)\|_*\\
&\geq\|\mathcal{P}_{T_0}^\bot(\Delta)\|_* - \|\mathcal{P}_{T_0}(\Delta)\|_*.
\end{align*}
Hence, $\|L_0+\Delta\|_*$ is strictly greater than $\|L_0\|_*$ unless $\Delta\in{}T_0$. Since $T_0\cap\Omega^\bot=\{0\}$, it follows that $L_0$ is the unique minimizer to the problem in~\eqref{eq:numin}.
\end{proof}
\subsection{Proof of Theorem~\ref{thm:isodp:f}}
\begin{proof}
Since $A_0 = U_0\Sigma_0^{\frac{1}{2}}Q^T$ and  $X_0= Q\Sigma_0^{\frac{1}{2}}V_0^T$, we have the following: 1) $A_0X_0=L_0$; 2) $L_0\in\mathrm{span}\{A_0\}$ and $A_0$ is $\Omega$-isomeric; 3) $L_0^T\in\mathrm{span}\{X_0^T\}$ and $X_0^T$ is $\Omega^T$-isomeric. Hence, according to Lemma~\ref{lem:critical:uinorm}, we have
\begin{align*}
&X_0 = A_0^+L_0=\arg\min_{X} \|X\|_F^2,\textrm{ s.t. }\mathcal{P}_{\Omega}(A_0X - L_0)=0,\\
&A_0 = L_0X_0^+=\arg\min_{A} \|A\|_F^2,\textrm{ s.t. }\mathcal{P}_{\Omega}(AX_0 - L_0)=0.
\end{align*}
Hence, $(A_0,X_0)$ is a critical point to the problem in~\eqref{eq:isodp:f}.

It remains to prove the second claim. Suppose that $(A=A_0+\Delta_0, X = X_0+E_0)$ with $\|\Delta_0\|\leq\varepsilon$ and $\|E_0\|\leq\varepsilon$ is a feasible solution to~\eqref{eq:isodp:f}. We want to prove that
\begin{align*}
\frac{1}{2}(\|A\|_F^2+\|X\|_F^2) \geq \frac{1}{2}(\|A_0\|_F^2+\|X_0\|_F^2)
\end{align*}
holds for some small $\varepsilon$, and show that the equality can hold only if $AX=L_0$. Denote
\begin{align}\label{eq:temp:notation}
&\mathcal{P}_{U_0}(\cdot)=U_0U_0^T(\cdot), \mathcal{P}_{V_0}(\cdot)=(\cdot)V_0V_0^T,\\\nonumber
&\mathcal{P}_1 = (\mathcal{P}_{U_0}\mathcal{P}_{\Omega}\mathcal{P}_{U_0})^{-1}\mathcal{P}_{U_0}\mathcal{P}_{\Omega}\mathcal{P}_{U_0}^\bot,\\\nonumber
&\mathcal{P}_2 = (\mathcal{P}_{V_0}\mathcal{P}_{\Omega}\mathcal{P}_{V_0})^{-1}\mathcal{P}_{V_0}\mathcal{P}_{\Omega}\mathcal{P}_{V_0}^\bot.
\end{align}
Define
\begin{align}\label{eq:temp:notation:1}
&\bar{A}_0 = A_0 + \mathcal{P}_{U_0}(\Delta_0) \textrm{ and } \bar{X}_0 = X_0 + \mathcal{P}_{V_0}(E_0).
\end{align}
Provided that $\varepsilon<\min(1/\|A_0^+\|, 1/\|X_0^+\|)$, it follows from Lemma~\ref{lem:basic:pinv} that
\begin{align}\label{eq:temp:notation:pseinv}
&\rank{\bar{A}_0} = \rank{\bar{X}_0} = r_0,\\\nonumber
&\|\bar{A}_0^+\|\leq\frac{\|A_0^+\|}{1-\|A_0^+\|\varepsilon}\textrm{ and }\|\bar{X}_0^+\|\leq\frac{\|X_0^+\|}{1-\|X_0^+\|\varepsilon}.
\end{align}
By $\mathcal{P}_{\Omega}(AX-L_0)=0$,
\begin{align*}
\mathcal{P}_{\Omega}(A_0E_0+\Delta_0X_0+\Delta_0E_0)=0.
\end{align*}
Then it can be manipulated that
\begin{align*}
&\mathcal{P}_{\Omega}(\bar{A}_0E_0) \\
&= -\mathcal{P}_{\Omega}(\Delta_0\bar{X}_0- \mathcal{P}_{U_0}\mathcal{P}_{V_0}(\Delta_0E_0) + \mathcal{P}_{U_0}^\bot\mathcal{P}_{V_0}^\bot(\Delta_0E_0)).
\end{align*}
Since $\mathcal{P}_{U_0}\mathcal{P}_{\Omega}\mathcal{P}_{U_0}$ is invertible, we have
\begin{align}\label{eq:temp:p1}
&\mathcal{P}_{V_0}^\bot(\bar{A}_0E_0) = -\mathcal{P}_{V_0}^\bot(\mathcal{P}_{U_0}\mathcal{P}_{\Omega}\mathcal{P}_{U_0})^{-1}\mathcal{P}_{U_0}\mathcal{P}_{\Omega}(\Delta_0\bar{X}_0\\\nonumber
&-\mathcal{P}_{U_0}\mathcal{P}_{V_0}(\Delta_0E_0) + \mathcal{P}_{U_0}^\bot\mathcal{P}_{V_0}^\bot(\Delta_0E_0)) \\\nonumber
&= -\mathcal{P}_{V_0}^\bot\mathcal{P}_1\mathcal{P}_{U_0}^\bot(\Delta_0\bar{X}_0) - \mathcal{P}_{V_0}^\bot\mathcal{P}_1\mathcal{P}_{U_0}^\bot\mathcal{P}_{V_0}^\bot(\Delta_0E_0)
\end{align}
Similarly, by the invertibility of $\mathcal{P}_{V_0}\mathcal{P}_{\Omega}\mathcal{P}_{V_0}$,
\begin{align}\label{eq:temp:p2}
&\mathcal{P}_{U_0}^\bot(\Delta_0\bar{X}_0)\\\nonumber
&= -\mathcal{P}_{U_0}^\bot\mathcal{P}_2\mathcal{P}_{V_0}^\bot(\bar{A}_0E_0) - \mathcal{P}_{U_0}^\bot\mathcal{P}_2\mathcal{P}_{U_0}^\bot\mathcal{P}_{V_0}^\bot(\Delta_0E_0).
\end{align}
The combination of~\eqref{eq:temp:p1} and~\eqref{eq:temp:p2} gives that
\begin{align*}
&\mathcal{P}_{V_0}^\bot(\bar{A}_0E_0) = \mathcal{P}_{V_0}^\bot\mathcal{P}_1\mathcal{P}_2\mathcal{P}_{V_0}^\bot(\bar{A}_0E_0) + \\\nonumber &\mathcal{P}_{V_0}^\bot(\mathcal{P}_1\mathcal{P}_2-\mathcal{P}_1)\mathcal{P}_{U_0}^\bot\mathcal{P}_{V_0}^\bot(\Delta_0E_0).
\end{align*}
By $\rank{\bar{A}_0}=r_0=p$,
\begin{align*}
\mathcal{P}_{U_0}^\bot\mathcal{P}_{V_0}^\bot(\Delta_0E_0) = \mathcal{P}_{U_0}^\bot\mathcal{P}_{V_0}^\bot(\Delta_0\bar{A}_0^+\bar{A}_0E_0).
\end{align*}
By Lemma~\ref{lem:critical:optnorm:big} and the assumption of $\gamma_{\Omega,\Omega^T}(L_0)>0.5$, $\|\mathcal{P}_1\|<1$ and $\|\mathcal{P}_2\|<1$. Thus,
\begin{align*}
&\|\mathcal{P}_{V_0}^\bot(\bar{A}_0E_0)\|\leq\|\mathcal{P}_1\mathcal{P}_2\|\|\mathcal{P}_{V_0}^\bot(\bar{A}_0E_0)\|\\
&+\varepsilon(\|\mathcal{P}_1\mathcal{P}_2\|+\|\mathcal{P}_1\|)\|\bar{A}_0^+\|\|\mathcal{P}_{V_0}^\bot(\bar{A}_0E_0)\|\\
&\leq\left(\frac{1}{\gamma_{\Omega,\Omega^T}(L_0)}-1+\frac{2\varepsilon\|A_0^+\|}{1-\|A_0^+\|\varepsilon}\right)\|\mathcal{P}_{V_0}^\bot(\bar{A}_0E_0)\|.
\end{align*}
Let
\begin{align*}
\varepsilon < \min\left(\frac{1}{2\|A_0^+\|}, \frac{2\gamma_{\Omega,\Omega^T}(L_0)-1}{4\|A_0^+\|\gamma_{\Omega,\Omega^T}(L_0)}\right).
\end{align*}
Then we have that $\|\mathcal{P}_{V_0}^\bot(\bar{A}_0E_0)\|<\|\mathcal{P}_{V_0}^\bot(\bar{A}_0E_0)\|$ strictly holds unless $\mathcal{P}_{V_0}^\bot(\bar{A}_0E_0)=0$. Since $\rank{\bar{A}_0}=r_0=p$, $\mathcal{P}_{V_0}^\bot(\bar{A}_0E_0)=0$ simply leads to $E_0\in\mathcal{P}_{V_0}$. Hence,
\begin{align*}
A_0E_0+\Delta_0X_0+\Delta_0E_0 \in\mathcal{P}_{V_0}\cap\mathcal{P}_{\Omega}^\bot = \{0\},
\end{align*}
which implies that $AX = L_0$. Thus, we finally have
\begin{align*}
\frac{1}{2}(\|A\|_F^2+\|X\|_F^2)\geq\|L_0\|_*=\frac{1}{2}(\|A_0\|_F^2+\|X_0\|_F^2),
\end{align*}
where the inequality follows from $\|AX\|_*=\min_{A,X}\frac{1}{2}(\|A\|_F^2+\|X\|_F^2)$~\cite{siam_2010_minirank}.
\end{proof}
\subsection{Proof of Theorem~\ref{thm:isodp}}
\begin{proof}
Since $A_0 = U_0\Sigma_0^{\frac{2}{3}}Q^T$ and  $X_0= Q\Sigma_0^{\frac{1}{3}}V_0^T$, we have the following: 1) $A_0X_0=L_0$; 2) $L_0\in\mathrm{span}\{A_0\}$ and $A_0$ is $\Omega$-isomeric; 3) $L_0^T\in\mathrm{span}\{X_0^T\}$ and $X_0^T$ is $\Omega^T$-isomeric. Due to Lemma~\ref{lem:critical:uinorm}, we have
\begin{align*}
&X_0 = A_0^+L_0=\arg\min_{X} \|X\|_F^2,\textrm{ s.t. }\mathcal{P}_{\Omega}(A_0X - L_0)=0,\\
&A_0 = L_0X_0^+=\arg\min_{A} \|A\|_*,\textrm{ s.t. }\mathcal{P}_{\Omega}(AX_0 - L_0)=0.
\end{align*}
Hence, $(A_0,X_0)$ is a critical point to the problem in~\eqref{eq:isodp}.

Regarding the second claim, we consider a feasible solution $(A=A_0+\Delta_0, X = X_0+E_0)$, with $\|\Delta_0\|\leq\varepsilon$ and $\|E_0\|\leq\varepsilon$. Define $\mathcal{P}_{U_0}$, $\mathcal{P}_{V_0}$, $\mathcal{P}_1$, $\mathcal{P}_2$, $\bar{A}_0$ and $\bar{X}_0$ in the same way as in~\eqref{eq:temp:notation} and~\eqref{eq:temp:notation:1}. Note that the statements in~\eqref{eq:temp:notation:pseinv} still hold in the general case of $p\geq{}r_0$. Denote the SVD of $\bar{X}_0$ as $\bar{Q}\bar{\Sigma}\bar{V}_0^T$. Then we have $V_0V_0^T = \bar{V}_0\bar{V}_0^T$. Denote
\begin{align*}
P_{\bar{Q}} = \bar{Q}\bar{Q}^T \textrm{ and }P_{\bar{Q}}^\bot = \Id - \bar{Q}\bar{Q}^T.
\end{align*}
Denote the condition number of $X_0$ as $\tau_0$. With these notations, we shall finish the proof by exploring two cases.
\subsubsection{Case 1: $\|\mathcal{P}_{U_0}^\bot(\Delta_0)P_{\bar{Q}}^\bot\|_*\geq2\tau_0\|\mathcal{P}_{U_0}^\bot(\Delta_0)P_{\bar{Q}}\|_*$}
Denote the SVD of $L_0\bar{X}_0^+$ as $\tilde{U}_0\tilde{\Sigma}\tilde{Q}^T$. Then we have
\begin{align*}
\tilde{U}_0\tilde{U}_0^T = U_0U_0^T \textrm{ and }\tilde{Q}\tilde{Q}^T = \bar{Q}\bar{Q}^T.
\end{align*}
By the convexity of the nuclear norm,
\begin{align}\label{eq:temp:ded:1}
&\|A\|_* - \|L_0\bar{X}_0^+\|_*=\|A_0+\Delta_0\|_* - \|L_0\bar{X}_0^+\|_*\\\nonumber
&\geq{}\langle{}A_0+\Delta_0-L_0\bar{X}_0^+,\tilde{U}_0\tilde{Q}^T+W\rangle,
\end{align}
where $W\in\mathbb{R}^{m\times{}p}$, $\tilde{U}_0^TW = 0$, $W\tilde{Q} = 0$ and $\|W\|\leq1$. Due to the duality between the nuclear norm and operator
norm, we can construct a $W$ such that
\begin{align}\label{eq:temp:ded:2}
\langle{}\Delta_0,W\rangle=\|\mathcal{P}_{U_0}^\bot(\Delta_0)P_{\bar{Q}}^\bot\|_*.
\end{align}
We also have
\begin{align*}
&\langle{}A_0\hspace{-0.02in}+\hspace{-0.02in}\Delta_0\hspace{-0.02in}-\hspace{-0.02in}L_0\bar{X}_0^+,\tilde{U}_0\tilde{Q}^T\rangle\hspace{-0.02in}=\hspace{-0.02in}\langle{}\Delta_0+A_0E_0\bar{X}_0^+,\tilde{U}_0\tilde{Q}^T\rangle\\
&=\langle{}\Delta_0\bar{X}_0\bar{X}_0^++A_0E_0\bar{X}_0^+,\tilde{U}_0\tilde{Q}^T\rangle,
\end{align*}
which gives that
\begin{align}\label{eq:temp:ded:3}
&\mathrm{abs}(\langle{}A_0+\Delta_0-L_0\bar{X}_0^+,\tilde{U}_0\tilde{Q}^T\rangle)\leq\|\bar{X}_0^+\|\|\mathcal{P}_{U_0}\mathcal{P}_{V_0}\\\nonumber
&(\Delta_0\bar{X}_0+A_0E_0)\|_*\leq\|\bar{X}_0^+\|\|\Delta_0\bar{X}_0+\mathcal{P}_{V_0}(A_0E_0)\|_*,
\end{align}
where we denote by $\mathrm{abs}(\cdot)$ the absolute value of a real number. By $\mathcal{P}_{\Omega}(A_0E_0+\Delta_0X_0+\Delta_0E_0)=0$,
\begin{align*}
&\Delta_0\bar{X}_0+\mathcal{P}_{V_0}(A_0E_0) = -\mathcal{P}_2\mathcal{P}_{V_0}^\bot(A_0E_0) - \mathcal{P}_2\mathcal{P}_{V_0}^\bot(\Delta_0E_0)\\
&=-\mathcal{P}_2(-\mathcal{P}_{V_0}^\bot\mathcal{P}_1(\Delta_0X_0+\Delta_0E_0) - \mathcal{P}_{V_0}^\bot\mathcal{P}_{U_0}(\Delta_0E_0))\\
&- \mathcal{P}_2\mathcal{P}_{V_0}^\bot(\Delta_0E_0)=\mathcal{P}_2\mathcal{P}_1(\Delta_0X_0+\Delta_0E_0) - \mathcal{P}_2\mathcal{P}_{U_0}^\bot(\Delta_0E_0)\\
&= \mathcal{P}_2\mathcal{P}_1(\Delta_0\bar{X}_0) + \mathcal{P}_2\mathcal{P}_1\mathcal{P}_{V_0}^\bot(\Delta_0E_0) - \mathcal{P}_2\mathcal{P}_{U_0}^\bot(\Delta_0E_0).
\end{align*}
By Lemma~\ref{lem:critical:optnorm:big} and the assumption of $\gamma_{\Omega,\Omega^T}(L_0)>0.5$, $\|\mathcal{P}_1\|<1$ and $\|\mathcal{P}_2\|<1$. As a result, we have
\begin{align}\label{eq:temp:ded:3b}
&\|\Delta_0\bar{X}_0+\mathcal{P}_{V_0}(A_0E_0)\|_* \\\nonumber
&\leq \|\mathcal{P}_{U_0}^\bot(\Delta_0\bar{X}_0)\|_*+2\|\mathcal{P}_{U_0}^\bot(\Delta_0E_0)\|_*.
\end{align}
Let
\begin{align*}
\varepsilon < \min\left(\frac{0.1\|X_0\|}{1+1.1\tau_0},\frac{0.175}{\|X_0^+\|}\right).
\end{align*}
Due to~\eqref{eq:temp:ded:3},~\eqref{eq:temp:ded:3b} and the assumption of $\|\mathcal{P}_{U_0}^\bot(\Delta_0)P_{\bar{Q}}^\bot\|_*\geq2\tau_0\|\mathcal{P}_{U_0}^\bot(\Delta_0)P_{\bar{Q}}\|_*$, it can be calculated that
\begin{align}\label{eq:temp:ded:4}
&\mathrm{abs}(\langle{}A_0+\Delta_0-L_0\bar{X}_0^+,\tilde{U}_0\tilde{Q}^T\rangle)\\\nonumber
&\leq\|\bar{X}_0^+\|\|\mathcal{P}_{U_0}^\bot(\Delta_0\bar{X}_0)\|_*+2\|\bar{X}_0^+\|\|\mathcal{P}_{U_0}^\bot(\Delta_0(P_{\bar{Q}}+P_{\bar{Q}}^\bot)E_0)\|_*\\\nonumber
&\leq\|\bar{X}_0^+\|\|\bar{X}_0\|\|\mathcal{P}_{U_0}^\bot(\Delta_0)P_{\bar{Q}}\|_*+2\varepsilon\|\bar{X}_0^+\|\|\mathcal{P}_{U_0}^\bot(\Delta_0)P_{\bar{Q}}\|_*\\\nonumber
&+2\varepsilon\|\bar{X}_0^+\|\|\mathcal{P}_{U_0}^\bot(\Delta_0)P_{\bar{Q}}^\bot\|_*\leq1.1\tau_0\|\mathcal{P}_{U_0}^\bot(\Delta_0)P_{\bar{Q}}\|_*\\\nonumber &+0.2\tau_0\|\mathcal{P}_{U_0}^\bot(\Delta_0)P_{\bar{Q}}\|_*+0.35\|\mathcal{P}_{U_0}^\bot(\Delta_0)P_{\bar{Q}}^\bot\|_*\\\nonumber
&\leq(0.65+0.35)\|\mathcal{P}_{U_0}^\bot(\Delta_0)P_{\bar{Q}}^\bot\|_*=\|\mathcal{P}_{U_0}^\bot(\Delta_0)P_{\bar{Q}}^\bot\|_*.
\end{align}
Now, combining~\eqref{eq:temp:ded:1},~\eqref{eq:temp:ded:2} and~\eqref{eq:temp:ded:4}, we have
\begin{align*}
&\|A\|_* - \|L_0\bar{X}_0^+\|_*\geq\|\mathcal{P}_{U_0}^\bot(\Delta_0)P_{\bar{Q}}^\bot\|_*\\
&-\mathrm{abs}(\langle{}A_0+\Delta_0-L_0\bar{X}_0^+,\tilde{U}_0\tilde{Q}^T\rangle)\geq0,
\end{align*}
which, together with Lemma~\ref{lem:basic:ax}, simply leads to
\begin{align*}
&\|A\|_* + \frac{1}{2}\|X\|_F^2 = (\|A\|_*-\|L_0\bar{X}_0^+\|_*)\\
&+(\|L_0\bar{X}_0^+\|_*+\frac{1}{2}\|X\|_F^2)\geq\|L_0\bar{X}_0^+\|_*+\frac{1}{2}\|\bar{X}_0\|_F^2\\
&\geq{}\frac{3}{2}\trace{\Sigma_0^{\frac{2}{3}}}=\|A_0\|_* + \frac{1}{2}\|X_0\|_F^2.
\end{align*}
For the equality of $\|A\|_* + 0.5\|X\|_F^2=\|A_0\|_* + 0.5\|X_0\|_F^2$ to hold, at least, $\|X\|_F = \|\bar{X}_0\|_F$ must be obeyed, which implies that $E_0\in\mathcal{P}_{V_0}$. Hence, we have $A_0E_0+\Delta_0X_0+\Delta_0E_0 \in\mathcal{P}_{V_0}\cap\mathcal{P}_{\Omega}^\bot = \{0\}$,
which gives that $AX = L_0$.
\subsubsection{Case 2: $\|\mathcal{P}_{U_0}^\bot(\Delta_0)P_{\bar{Q}}^\bot\|_*\leq2\tau_0\|\mathcal{P}_{U_0}^\bot(\Delta_0)P_{\bar{Q}}\|_*$}
Using a similar manipulation as in the proof of Theorem~\ref{thm:isodp:f}, we have
\begin{align*}
&\mathcal{P}_{U_0}^\bot(\Delta_0\bar{X}_0) = \mathcal{P}_{U_0}^\bot\mathcal{P}_2\mathcal{P}_1\mathcal{P}_{U_0}^\bot(\Delta_0\bar{X}_0)+ \\ &\mathcal{P}_{U_0}^\bot(\mathcal{P}_2\mathcal{P}_1-\mathcal{P}_2)\mathcal{P}_{U_0}^\bot\mathcal{P}_{V_0}^\bot(\Delta_0E_0)=\mathcal{P}_{U_0}^\bot\mathcal{P}_2\mathcal{P}_1\mathcal{P}_{U_0}^\bot(\Delta_0\bar{X}_0)\\
&+\mathcal{P}_{U_0}^\bot(\mathcal{P}_2\mathcal{P}_1-\mathcal{P}_2)\mathcal{P}_{U_0}^\bot\mathcal{P}_{V_0}^\bot(\Delta_0P_{\bar{Q}}E_0 + \Delta_0P_{\bar{Q}}^\bot{}E_0) .
\end{align*}
Due to Lemma~\ref{lem:critical:optnorm:big} and the assumption of $\gamma_{\Omega,\Omega^T}(L_0)>0.5$, we have $\|\mathcal{P}_1\|<1$ and $\|\mathcal{P}_2\|<1$. By the assumption of $\|\mathcal{P}_{U_0}^\bot(\Delta_0)P_{\bar{Q}}^\bot\|_*\leq2\tau_0\|\mathcal{P}_{U_0}^\bot(\Delta_0)P_{\bar{Q}}\|_*$,
\begin{align*}
&\|\mathcal{P}_{U_0}^\bot(\Delta_0\bar{X}_0)\|_*\leq\|\mathcal{P}_2\mathcal{P}_1\|\|\mathcal{P}_{U_0}^\bot(\Delta_0\bar{X}_0)\|_*\\
&+(4\tau_0+2)\varepsilon\|\mathcal{P}_{U_0}^\bot(\Delta_0)P_{\bar{Q}}\|_*= \|\mathcal{P}_2\mathcal{P}_1\|\|\mathcal{P}_{U_0}^\bot(\Delta_0\bar{X}_0)\|_*\\
&+(4\tau_0+2)\varepsilon\|\mathcal{P}_{U_0}^\bot(\Delta_0)\bar{X}_0\bar{X}_0^+\|_* \\
&\leq\left(\frac{1}{\gamma_{\Omega,\Omega^T}(L_0)}-1+\frac{(4\tau_0+2)\varepsilon\|X_0^+\|}{1-\|X_0^+\|\varepsilon}\right)\|\mathcal{P}_{U_0}^\bot(\Delta_0\bar{X}_0)\|_*.
\end{align*}
Let
\begin{align*}
\varepsilon < \min\left(\frac{1}{2\|X_0^+\|}, \frac{2\gamma_{\Omega,\Omega^T}(L_0)-1}{(8\tau_0+4)\|X_0^+\|\gamma_{\Omega,\Omega^T}(L_0)}\right).
\end{align*}
Then $\|\mathcal{P}_{U_0}^\bot(\Delta_0\bar{X}_0)\|_*<\|\mathcal{P}_{U_0}^\bot(\Delta_0\bar{X}_0)\|_*$ strictly holds unless $\mathcal{P}_{U_0}^\bot(\Delta_0\bar{X}_0)=0$. That is,
\begin{align*}
\mathcal{P}_{U_0}^\bot(\Delta_0)P_{\bar{Q}} = 0 \textrm{ and thus } \mathcal{P}_{U_0}^\bot(\Delta_0)P_{\bar{Q}}^\bot = 0.
\end{align*}
Hence, we have $\mathcal{P}_{U_0}^\bot(\Delta_0)=0$, which simply leads to
\begin{align*}
A_0E_0+\Delta_0X_0+\Delta_0E_0 \in\mathcal{P}_{U_0}\cap\mathcal{P}_{\Omega}^\bot = \{0\},
\end{align*}
and which gives that $AX = L_0$. By Lemma~\ref{lem:basic:ax},
\begin{align*}
&\|A\|_*+\frac{1}{2}\|X\|_F^2 \geq{}\frac{3}{2}\trace{\Sigma_0^{\frac{2}{3}}}=\|A_0\|_* + \frac{1}{2}\|X_0\|_F^2.
\end{align*}
\end{proof}
\begin{figure}
\begin{center}
\includegraphics[width=0.48\textwidth]{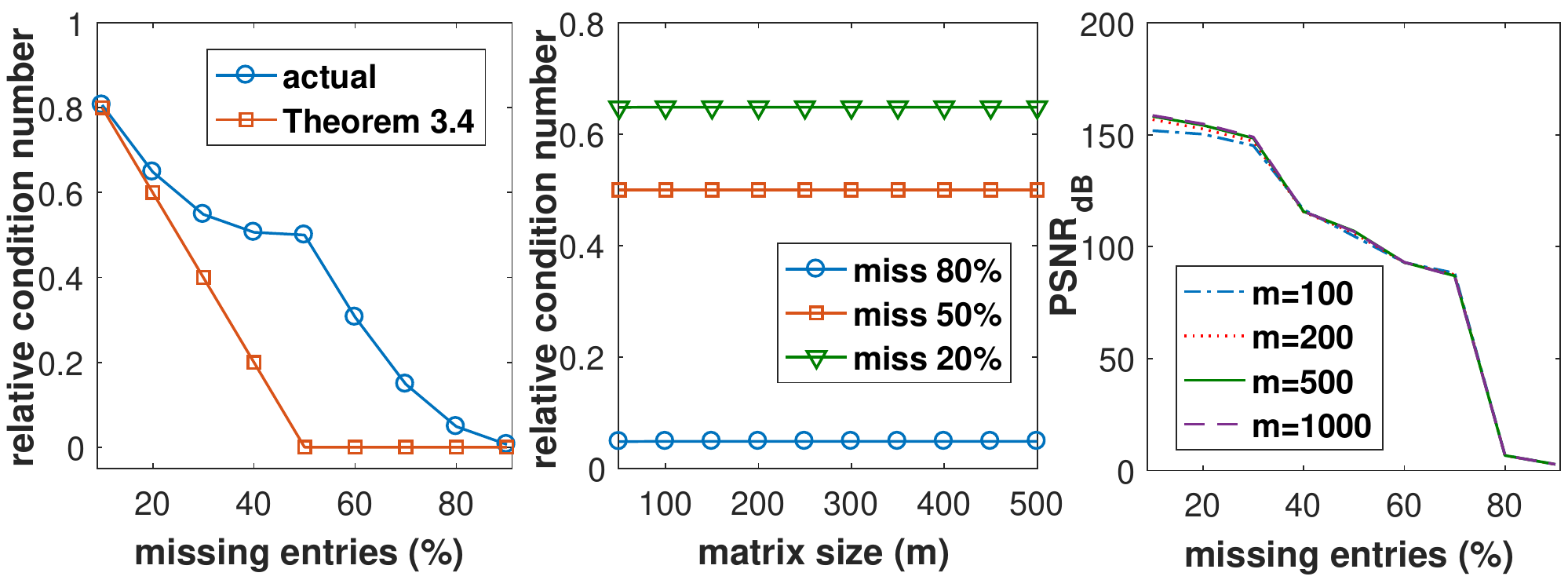}\vspace{-0.15in}
\caption{Left: The relative condition number $\gamma_{\Omega,\Omega^T}(L_0)$ vs the missing rate $1-\rho_0$ at $m=500$. Middle: The relative condition number vs the matrix size $m$. Right: Plotting the recovery performance of convex optimization as a function of the missing rate.}\label{fig:rcn}\vspace{-0.2in}
\end{center}
\end{figure}
\section{Experiments}\label{sec:exp}
\subsection{Investigating the Relative Condition Number}\label{sec:exp:rcn}
To study the properties of the relative condition number, we generate a vector $x\in\mathbb{R}^{m}$ according to the model $[x]_t = \sin(2t\pi/m)$, $t=1,\cdots,m$. That is, $x$ is a univariate time series of dimension $m$. We consider the forecasting tasks of recovering $x$ from a collection of $l$ observations, $\{[x]_t\}_{t=1}^{l}$, where $l=\rho_0m$ varies from $0.1m$ to $0.9m$ with step size $0.1m$. Let $y\in\mathbb{R}^{m}$ be the mask vector of the sampling operator, i.e., $[y]_t$ is 1 if $[x]_t$ is observed and 0 otherwise. In order to recover $x$, it suffices to recover its \emph{convolution matrix}~\cite{liu:tip:2014}. Thus, the forecasting tasks here can be converted to matrix completion problems, with
\begin{align*}
L_0 = \mathcal{A}(x)\quad\textrm{and}\quad\Omega=\mathrm{supp}(\mathcal{A}(y)),
\end{align*}
where $\mathcal{A}(\cdot)$ is the convolution matrix of a tensor\footnote{Unlike~\cite{liu:tip:2014}, we adopt here the circulant boundary condition. Thus, the $j$th column of $\mathcal{A}(x)$ is simply the vector obtained by circularly shifting the elements in $x$ by $j-1$ positions.}, and $\mathrm{supp}(\cdot)$ is the support set of a matrix. In this example, $L_0\in\mathbb{R}^{m\times{}m}$ is a circulant matrix that is perfectly incoherent and low rank; namely, $\rank{L_0}\equiv2$ and $\mu(L_0)\equiv1$, $\forall{}m>2$. Moreover, each column and each row of $\Omega$ have exactly a cardinality of $\rho_0m$. We use the convex program~\eqref{eq:numin} to restore $L_0$ from the given observations.

The results are shown in Figure~\ref{fig:rcn}. It can be seen that the relative condition number is independent of the matrix sizes and monotonously deceases as the missing rate grows. As we can see from the right hand side of Figure~\ref{fig:rcn}, the recovery performance visibly declines when the missing rate exceeds $30\%$ (i.e., $\rho_0<0.7$), which approximately corresponds to $\gamma_{\Omega,\Omega^T}(L_0)<0.55$. When $\rho_0<0.3$ (which corresponds approximately to $\gamma_{\Omega,\Omega^T}(L_0)<0.15$), matrix completion totally breaks down. These results illustrate that relative well-conditionedness is important for guaranteeing the success of matrix completion in practice. Of course, the lower bound on $\gamma_{\Omega,\Omega^T}(L_0)$ would depend on the characteristics of data, and the condition $\gamma_{\Omega,\Omega^T}(L_0)>0.75$ proven in Theorem~\ref{thm:convex} is just a universal bound for guaranteeing exact recovery in the worst case. In addition, the estimate given in Theorem~\ref{thm:iso:rcn} is accurate only when the missing rate is low, as shown in the left part of Figure~\ref{fig:rcn}.

Among the other things, it is worth noting that the sampling complexity does not decrease as the matrix size $m$ grows. This phenomenon is in conflict with the uniform sampling based matrix completion theories, which prove that a small fraction of $O((\log{m})^2/m)$ entries should suffice to recover $L_0$~\cite{Chen:2015:tit}, and which implies that the sampling complexity should decrease to zero when the matrix size $m$ goes to infinity. Hence, as aforementioned, the theories built upon uniform sampling are no longer applicable when applying to the deterministic missing data patterns.
\subsection{Results on Randomly Generated Matrices}
\begin{figure}
\begin{center}
\includegraphics[width=0.45\textwidth]{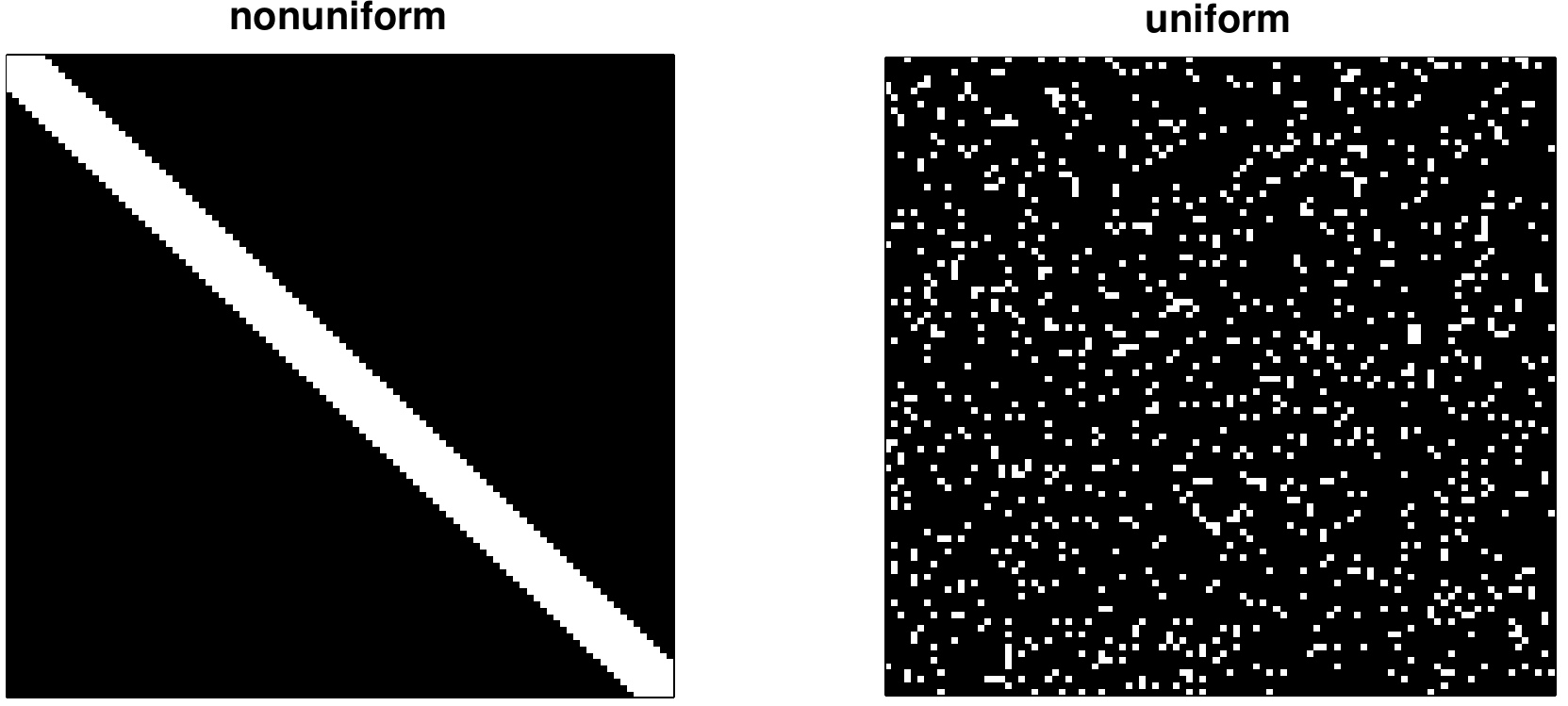}\vspace{-0.15in}
\caption{Visualizing the configurations of $\Omega$ used in our simulations. The white points correspond to the locations of the observed entries. In these two examples, 90\% entries of the matrix are missing.}\label{fig:location}\vspace{-0.2in}
\end{center}
\end{figure}
\begin{figure}
\begin{center}
\subfigure[nonuniform]{\includegraphics[width=0.48\textwidth]{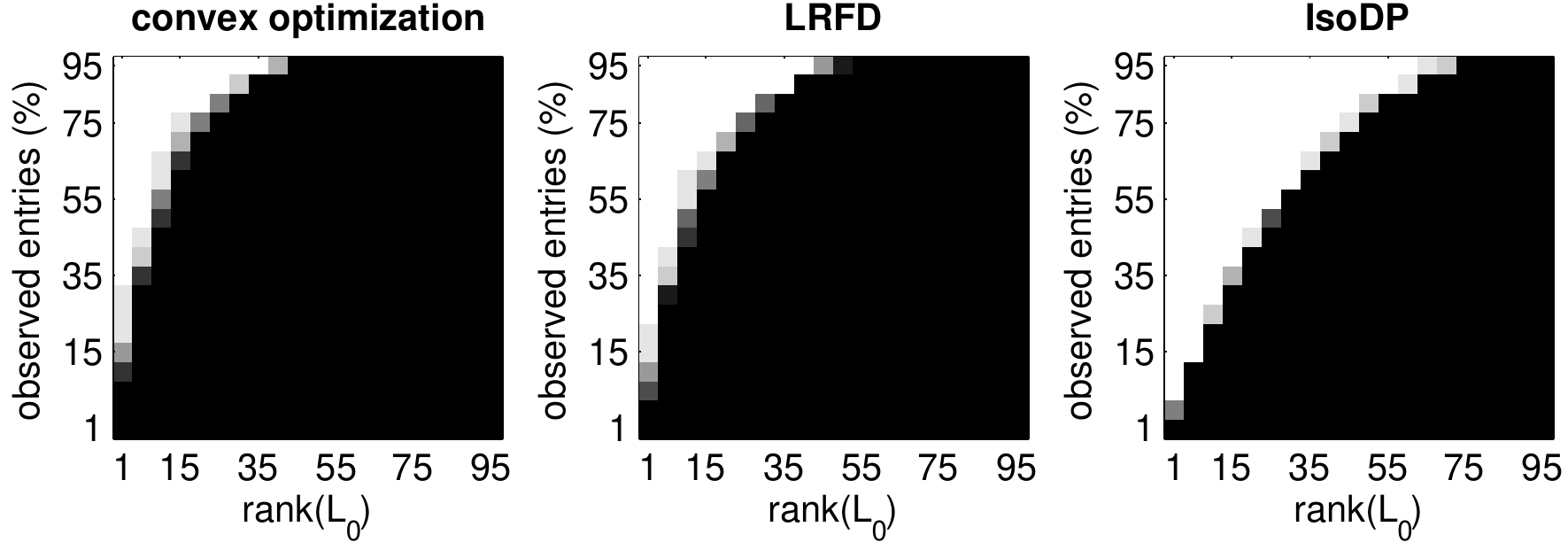}}
\subfigure[uniform]{\includegraphics[width=0.48\textwidth]{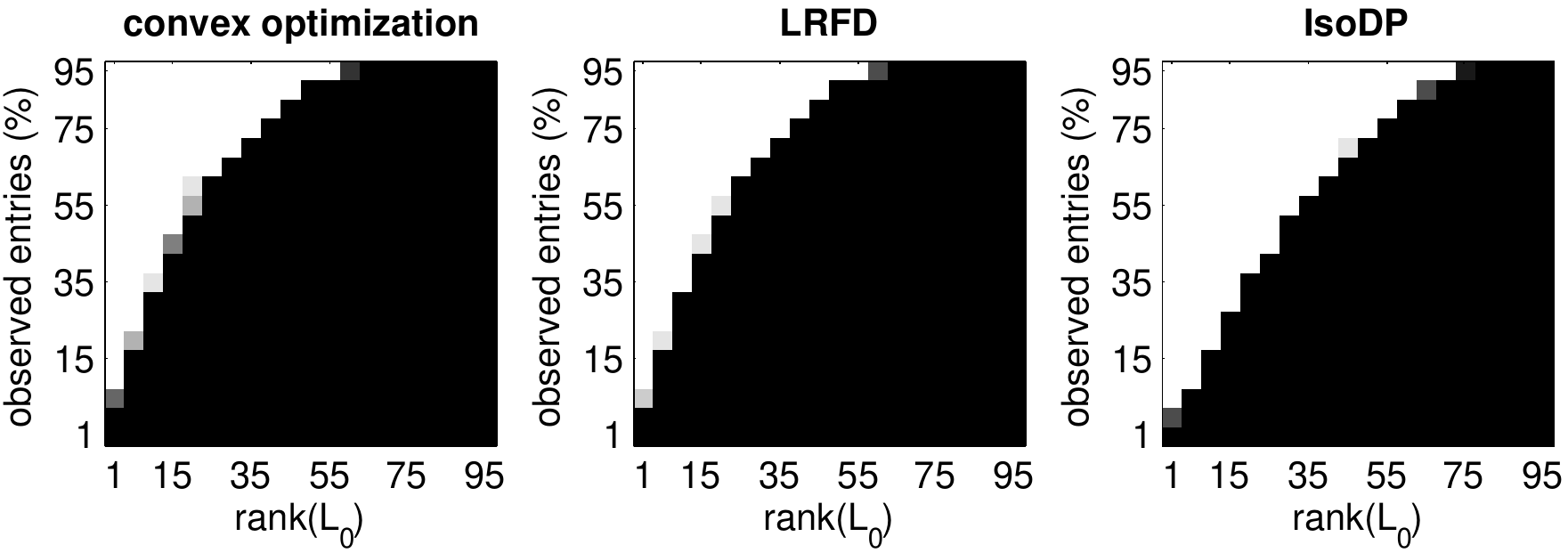}}\vspace{-0.15in}
\caption{Comparing IsoDP with convex optimization and LRFD. The numbers plotted on the above figures are the success rates within 20 random trials. The white and black areas mean ``succeed" and ``fail", respectively. Here, the success is in a sense that $\mathrm{PSNR_{dB}}$ $\geq$ 40.}\label{fig:cmp}\vspace{-0.2in}
\end{center}
\end{figure}
\begin{figure}
\begin{center}
\includegraphics[width=0.45\textwidth]{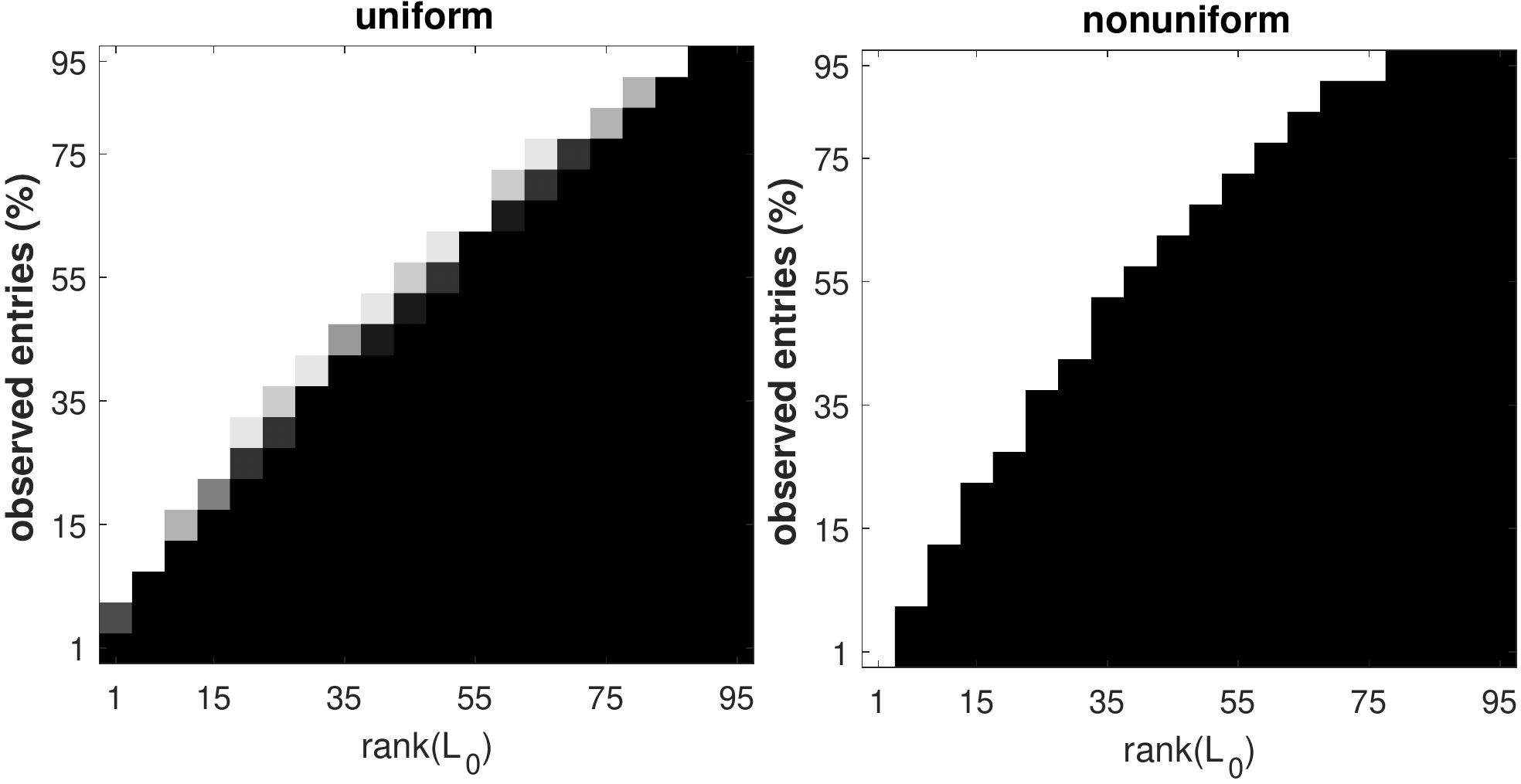}\vspace{-0.15in}
\caption{Visualizing the regions in which the isomeric condition holds.}\label{fig:iso}\vspace{-0.2in}
\end{center}
\end{figure}
To evaluate the performance of various matrix completion methods, we generate a collection of $m\times{}n$ ($m=n=100$) target matrices according to $L_0=BC$, where $B\in\Re^{m\times{}r_0}$ and $C\in\Re^{r_0\times{}n}$ are $\mathcal{N}(0,1)$ matrices. The rank of $L_0$, i.e., $r_0$, is configured as $r_0=1, 5, 10, \cdots, 90, 95$. Regarding the sampling set $\Omega$ consisting of the locations of the observed entries, we consider two settings: One is to create $\Omega$ by using a Bernoulli model to randomly sample a subset from $\{1,\cdots,m\}\times\{1,\cdots,n\}$ (referred to as ``uniform''), the other is to let the locations of the observed entries be centered around the main diagonal of a matrix (referred to as ``nonuniform''). Figure~\ref{fig:location} shows how the sampling set $\Omega$ looks like. The observation fraction is set as $|\Omega|/(mn)=0.01,0.05,\cdots,0.9, 0.95$. To show the advantages of IsoDP, we include for comparison two prevalent methods: convex optimization~\cite{Candes:2009:math} and Low-Rank Factor Decomposition (LRFD)~\cite{liu:tsp:2016}. The same as IsoDP, these two methods do not assume that rank of $L_0$ either. When $p = m$ and the identity matrix is used to initialize the dictionary $A$, the bilinear program~\eqref{eq:isodp:f} does not outperform convex optimization, thereby we exclude it from the comparison.

The accuracy of recovery, i.e., the similarity between $L_0$ and $\hat{L}_0$, is measured by Peak Signal-to-Noise Ratio ($\mathrm{PSNR_{dB}}$). Figure~\ref{fig:cmp} compares IsoDP to convex optimization and LRFD. It can be seen that IsoDP works distinctly better than the competing methods. Namely, while handling the nonuniformly missing data, the number of matrices successfully restored by IsoDP is 102\% and 71\% more than convex optimization and LRFD, respectively. While dealing with the missing entries chosen uniformly at random, in terms of the number of successfully restored matrices, IsoDP outperforms both convex optimization and LRFD by 44\%. These results verify the effectiveness of IsoDP. Figure~\ref{fig:iso} plots the regions where the isometric condition is valid. By comparing Figure~\ref{fig:cmp} to Figure~\ref{fig:iso}, it can be seen that the recovery performance of IsoDP has not reached the upper limit defined by isomerism. That is, there is still some room left for improvement.
\subsection{Results on Motion Data}
\begin{figure}
\begin{center}
\includegraphics[width=0.45\textwidth]{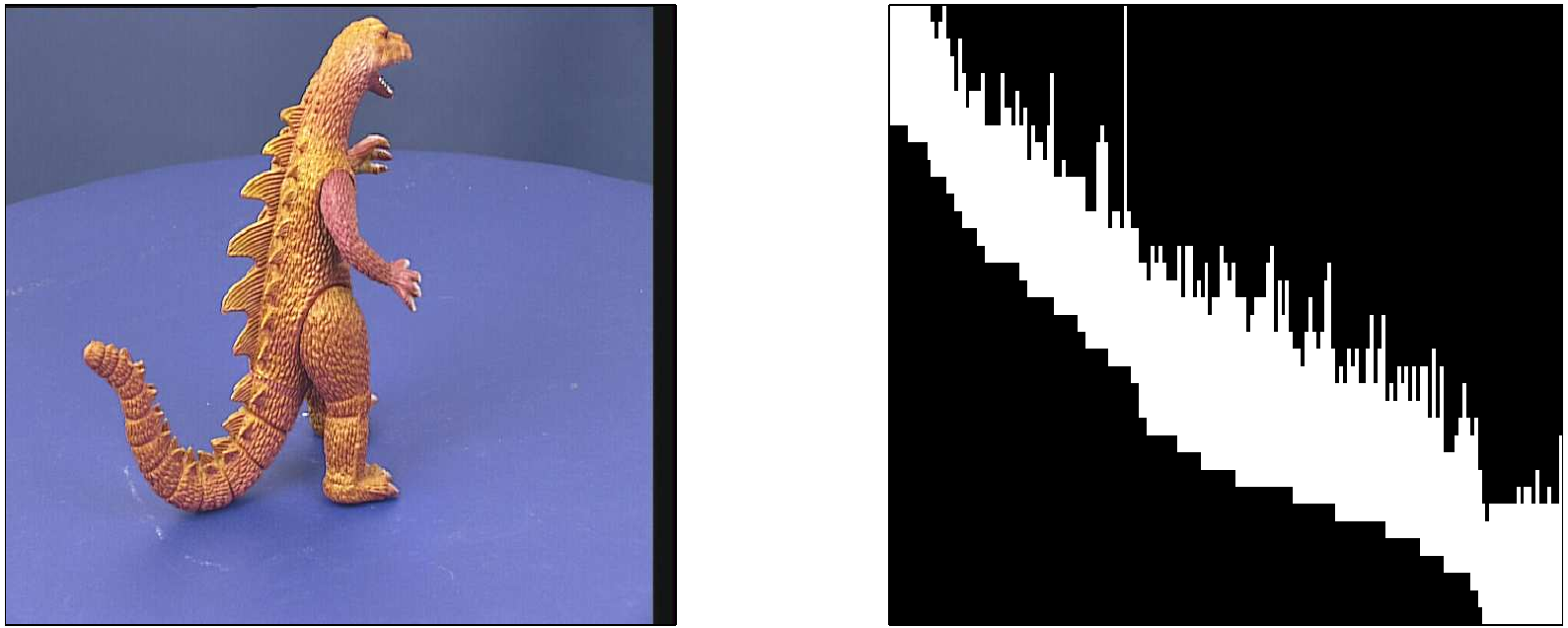}
\vspace{-0.15in}\caption{An example image from the Oxford dinosaur sequence and the locations of the observed entries in the data matrix of trajectories. In this dataset, 74.29\% entries of the trajectory matrix are missing.}\label{fig:din}\vspace{-0.2in}
\end{center}
\end{figure}
\begin{figure}
\begin{center}
\includegraphics[width=0.45\textwidth]{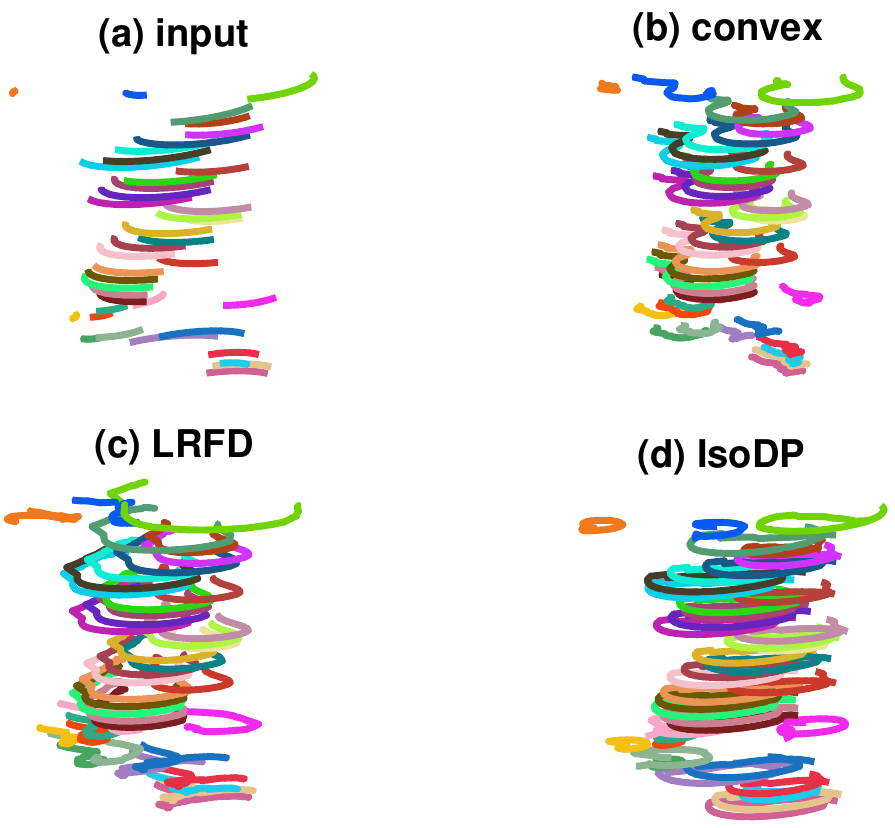}
\vspace{-0.15in}\caption{Some examples of the originally incomplete and fully restored trajectories. (a) The original incomplete trajectories. (b) The trajectories restored by convex optimization~\cite{Candes:2009:math}. (c) The trajectories restored by LRFD~\cite{liu:tsp:2016}. (d) The trajectories restored by IsoDP.}\label{fig:dinosaur:res}\vspace{-0.2in}
\end{center}
\end{figure}
We now consider the Oxford dinosaur sequence\footnote{Available at http://www.robots.ox.ac.uk/$\sim$vgg/data1.html}, which contains in total 72 image frames corresponding to 4983 track points observed by at least 2 among 36 views. The values of the observations range from 8.86 to 629.82. We select 195 track points which are observed by at least 6 views for experiment, resulting in a $72\times{}195$ trajectory matrix 74.29\% entries of which are missing (see Figure~\ref{fig:din}). The tracked dinosaur model is rotating around its center, and thus the true trajectories should form complete circles~\cite{zheng:cvpr:2012}.
\begin{table}
\caption{Mean square error (MSE) on the Oxford dinosaur sequence. Here, the rank of a matrix is estimated by $\#\{i |\sigma_i\geq10^{-4}\sigma_1\}$, where $\sigma_1\geq\sigma_2\geq\cdots$ are the singular values of the matrix. The regularization parameter in each method is manually tuned such that the rank of the restored matrix meets a certain value. Here, the MSE values are evaluated on the training data (i.e., observed entries).}\label{tb:motion}\vspace{-0.2in}
\begin{center}
\begin{tabular}{|c|c|c|c|}\hline
rank of the                       &                                   &                   &\\
restored matrix                   &convex optimization                &LRFD               &IsoDP\\\hline
 6                                &426.1369                           &28.4649            &\textbf{0.6140}\\
 7                                &217.9963                           &21.6968            &\textbf{0.4682}\\
 8                                &136.7643                           &17.2269            &\textbf{0.1480}\\
 9                                &94.4673                            &13.954             &\textbf{0.0585}\\
 10                               &53.9864                            &6.3768             &\textbf{0.0468}\\
 11                               &43.2613                            &5.9877             &\textbf{0.0374}\\
 12                               &29.7542                            &4.5136             &\textbf{0.0302}\\\hline
\end{tabular}\vspace{-0.2in}
\end{center}
\end{table}

The results in Theorem~\ref{thm:isodp} imply that our IsoDP may possess the ability to attain a solution of strictly low rank. To confirm this, we evaluate convex optimization, LRFD and IsoDP by examining the rank of the restored trajectory matrix as well as the fitting error on the observed entries. Table~\ref{tb:motion} shows the evaluation results. It can be seen that, while the restored matrices have the same rank, the fitting error produced by IsoDP is much smaller than the competing methods. The error of convex optimization is quite large, because the method cannot produce a solution of exactly low rank unless a biased regularization parameter is chosen. Figure~\ref{fig:dinosaur:res} shows some examples of the originally incomplete and fully restored trajectories. Our IsoDP method can approximately recover the circle-like trajectories.
\subsection{Results on Movie Ratings}
We also consider the MovieLens~\cite{Harper:2015} datasets that are widely used in research and industry. The dataset we use is consist of 100,000 ratings (integers between 1 and 5) from 943 users on 1682 movies. The distribution of the observed entries is severely imbalanced: The number of movies rated by each user ranges from 20 to 737, and the number of users who have rated for each movie ranges from 1 to 583. We remove the users that have less than 80 ratings, and so for the movies. Thus the final dataset used for experiments is consist of 14,675 ratings from 231 users on 206 movies. For the sake of quantitative evaluation, we randomly select 1468 ratings as the testing data, i.e., those ratings are intentionally set unknown to the matrix completion methods. So, the percentage of the observed entries used as inputs for matrix completion is only 27.75\%.
\begin{table}
\caption{MSE on the MovieLens dataset. The regularization parameters of the competing methods have been manually tuned to the best. Here, the MSE values are evaluated on the testing data.}\label{tb:movie}\vspace{-0.2in}
\begin{center}
\begin{tabular}{cc}\hline
methods                   & MSE  \\\hline
random & 3.7623\\
average & 1.6097 \\
convex optimization & 0.9350\\
LRFD & 0.9213 \\\hline
IsoDP ($\lambda=0.0005$) & 0.8412 \\
IsoDP ($\lambda=0.0008$) & 0.8250 \\
IsoDP ($\lambda=0.001$) & \textbf{0.8228}\\
IsoDP ($\lambda=0.002$) & 0.8295\\
IsoDP ($\lambda=0.005$) & 0.8583\\\hline
\end{tabular}\vspace{-0.2in}
\end{center}
\end{table}

Despite convex optimization and LRFD, we also consider two ``trivial'' baselines: One is to estimate the unseen ratings by randomly choosing an integer from the range of 1 to 5, the other is to simply use the average rating of 3 to fill the unseen entries. The comparison results are shown in Table~\ref{tb:movie}. As we can see, all the considered matrix completion methods outperform distinctly the trivial baselines, illustrating that matrix completion is beneficial on this dataset. In particular, IsoDP with proper parameters performs much better than convex optimization and LRFD, confirming the effectiveness of IsoDP on realistic datasets.
\section{Conclusion}\label{sec:con}
This work studied the identifiability of real-valued matrices under the convex of deterministic sampling. We established two deterministic conditions, isomerism and relative well-conditionedness, for ensuring that an arbitrary matrix is identifiable from a subset of the matrix entries. We first proved that the proposed conditions can hold even if the missing data pattern is irregular. Then we proved a series of theorems for missing data recovery and convex/nonconvex matrix completion. In general, our results could help to understand the completion regimes of arbitrary missing data patterns, providing a basis for investigating the other related problems such as data forecasting.
\section*{Acknowledgement}
This work is supported in part by New Generation AI Major Project of Ministry of Science and Technology under Grant SQ2018AAA010277, in part by national Natural Science Foundation of China (NSFC) under Grant 61622305, Grant 61532009 and Grant 71490725, in part by Natural Science Foundation of Jiangsu Province of China (NSFJPC) under Grant BK20160040, in part by SenseTime Research Fund.

\end{document}